\newtheorem{theorem}{Theorem}[section]
\newtheorem{corollary}{Corollary}[section]
\newtheorem{lemma}{Lemma}[section]
\newtheorem{observation}{Observation}[section]
\newcommand{\Obst}{{\sf Mutual Visibility}\xspace}
\newcommand{\Conv}{{\sf Convex Formation}\xspace}
\newcommand{\Circ}{{\sf Circle Formation}\xspace}
\newcommand{\Patt}{{\sf Pattern Formation}\xspace}
\newcommand{\Uni}{{\sf Uniform Circle Formation}\xspace}
\newcommand{\Near}{{\sf Near-Gathering}\xspace}
\newcommand{\Comm}{{\sf Communicating Vessels}\xspace}
\newcommand{\FSynch}{{\sc FSynch}\xspace}
\newcommand{\SSynch}{{\sc SSynch}\xspace}
\newcommand{\ASynch}{{\sc ASynch}\xspace}
\newcommand{\Rigid}{{\sc Rigid}\xspace}
\newcommand{\NRigid}{{\sc Non-Rigid}\xspace}
\newcommand{\Look}{{\it Look}\xspace}
\newcommand{\Compute}{{\it Compute}\xspace}
\newcommand{\Move}{{\it Move}\xspace}
\newcommand{\NearG}{{\sf Near Gathering}\xspace}
\DeclarePairedDelimiter{\norm}{\lVert}{\rVert}
\DeclarePairedDelimiter{\abs}{\lvert}{\rvert}
\begin{document}

 \begin{frontmatter}
\title{Mutual Visibility by Luminous Robots\\Without Collisions}

 \author[roma]{G.A.~Di Luna}
 \ead{diluna@dis.uniroma1.it}
\author[ottawa]{P.~Flocchini}
\ead{flocchin@site.uottawa.ca}
\author[kolkata]{S.~Gan Chaudhuri}
\ead{srutiganc@it.jusl.ac.in}
\author[pisa]{F.~Poloni}
\ead{fpoloni@di.unipi.it}
\author[carleton]{N.~Santoro}
\ead{santoro@scs.carleton.ca}
\author[ottawa]{G.~Viglietta}
\ead{viglietta@gmail.com}

\address[roma]{Dipartimento di Ingegneria Informatica, Automatica e Gestionale, Universit\`a degli Studi di Roma ``La Sapienza'', Italy}
\address[ottawa]{School of Electrical Engineering and Computer Science, University of Ottawa, Canada}
\address[kolkata]{Department of Information Technology, Jadavpur University, Kolkata, India}
\address[pisa]{Dipartimento di Informatica, Universit\`a di Pisa, Italy}
\address[carleton]{School of Computer Science, Carleton University, Ottawa, Canada}


\date{}

%
%

\begin{abstract}
 {\footnotesize
Consider a finite set of identical computational entities that can move freely in the Euclidean plane operating
in  Look-Compute-Move cycles.
Let $p(t)$ denote the location of entity $p$ at time $t$; entity $p$ can see entity $q$ at time $t$ if at that time no other entity lies on the line segment $p(t)q(t)$. We consider the basic problem called \Obst: starting from arbitrary distinct locations, within finite time the entities must reach, without collisions, a configuration where they all see each other. This problem must be solved by each entity autonomously executing the same algorithm.
 We study this problem in the {\em  luminous robots} model; in this  generalization of the standard model of oblivious robots,  
 each entity, called {\em robot},  has  an externally  visible persistent  light that can  assume colors from a fixed set of size $c$.
The case where the number of 
colors is less than 2 (i.e., $c\leqslant 1$) corresponds to the classical model without lights: indeed, having lights of one possible color is equivalent to having no lights at all.
 
The extensive literature on computability in such a model, mostly for $c\leqslant 1$ and recently for $c>1$,
has never considered the problem of \Obst because it has always assumed that three collinear robots are mutually visible. 

In this paper we remove this assumption, and investigate 
under what conditions luminous robots can solve \Obst without collisions, and at  
what cost, in terms of the number of colors used by the robots. 
We establish a spectrum of results, depending on the power of the adversary (i.e., the scheduler controlling the robots' actions),
on the number $c$ of colors,
and on the a-priori knowledge the robots have about the system.
Among such results,  we prove that 
  \Obst can  always be solved without collisions
in \SSynch  with $c=2$ colors   and   in    \ASynch with  $c=3$ colors.
If an adversary  can interrupt and stop a robot before it reaches its computed destination,
\Obst is still solvable without collisions   in  \SSynch with $c=3$ colors,
and, if the robots agree on the direction of one axis,
also  in  \ASynch.
 All the results are obtained constructively by means of novel
protocols. 

As a byproduct of our solutions, 
  we provide  the first
obstructed-visibility  solutions to two  classical problems for oblivious robots: 
{\em collision-less convergence to a point} (also called \emph{near-gathering})
and {\em  circle formation}.
  }
\end{abstract}

\end{frontmatter}


\section{Introduction}\label{s:introduction}

\subsection{Computational  Framework}

Consider a 
distributed system  composed of a team 
of mobile computational entities, called
 {\em robots},  moving and operating
  in the Euclidean plane $\mathbb R^2$, initially each at  a distinct point.  
Each robot can move freely in the plane, and
operates  in 
  \Look-\Compute-\Move  cycles.
During a cycle, a robot determines the position (in its own coordinate system) of the other robots
(\Look); it 
executes a protocol (which is deterministic and it is the same for all robots)  to determine a destination point
 (\Compute); and
moves towards the computed destination (\Move). After each cycle, a robot may be inactive for an arbitrary but finite
amount of time.
The robots are  {anonymous},
 without a central control, and  {oblivious} (i.e., at the beginning of a cycle,  a robot has no memory of any observation or computation performed in
its previous cycles). 
What is computable by such entities  has  been the object of extensive research within distributed computing;
e.g., see~\cite{AgP06, CiFPS12, CohP05, DaFSY14,EfP07,FloPSW08,FujYKY12,ISKIDWY12,SuS90,SuY99,YaS10};  
 for a recent review see~\cite{FlPS12}.

%
%
%
%

   Vision and mobility provide the robots with {\em stigmergy}, enabling
    the robots to  communicate and coordinate their actions   by moving
and sensing their relative positions; they are
 otherwise assumed to lack any
   means of explicit direct  communication.
   This restriction could enable   deployment in extremely harsh
environments   where  communication is
impossible or can be jammed. Nevertheless, in many  other situations
it is possible to assume the availability of some sort of direct
communication.
 The theoretical interest is obviously for weak
communication capabilities.

A model employing a  weak explicit communication mechanism is that of
{\em  robots with lights}, or {\em luminous  robots},
initially suggested by Peleg~\cite{Peleg2005}.
   In this model,
    each robot is provided with  a local  externally-visible {\em  light},
which can  assume colors from a fixed set.  The
robots  explicitly communicate  with each other using these lights.
The lights  are persistent (i.e., the color is not erased at the end
of a cycle), but otherwise
the robots are {oblivious}~\cite{DasFPSY12,DasFPSY14,EfP07,FlSVY13,Peleg2005,Vi13}.
Notice that a light with only one possible color is the same as no light; hence
the luminous robots model generalizes the classical one.

Both in the classical model and in that with lights, 
depending on the assumptions on   the activation schedule  and the
duration of the cycles, different  settings are identified. 
 In the synchronous setting,
the robots operate in rounds, and 
all the robots that are activated in a round perform their  cycle in perfect synchrony. In this case,
the system is {\em fully synchronous} (or \FSynch) if all robots are activated at all rounds, and it is {\em semi-synchronous} (or \SSynch) otherwise.
In the {\em asynchronous} setting (or \ASynch),
there is no common notion of time,  and no assumption is made on the
timing and duration of each computation and movement, other than that it is finite. (For the \SSynch and \ASynch models there are bland fairness assumptions that prevent robots from remaining inactive forever, which are discussed in Section~\ref{s:robotmodels}).

The choice of when a robot is activated (in \SSynch) and the duration of an activity within a cycle
(in \ASynch) is made under the control of an {\em adversary}, or \emph{scheduler}.
Similarly, the choices of the initial location of each robot and of its 
private coordinate system are made under adversarial conditions.

A crucial distinction is whether or not the adversary  
has also the power to  stop
a moving robot before it reaches its destination. 
If so,  the moves are said to be  {\em non-rigid}. The only constraint is that, if interrupted
before reaching its destination, the robot moves  at least a
  minimum distance  $\delta>0$
(otherwise, the adversary would be able to prevent robots from reaching any destination, in any amount of cycles).  
If the adversary does not have such a power, the moves are
said to be {\em rigid}. The model with rigid moves is referred to as \Rigid, and the other one is called \NRigid.

In the rest of the paper, with abuse of terminology, we will often refer to \FSynch, \SSynch, or \ASynch robots or schedulers (as opposed to systems), and to \Rigid or \NRigid robots or schedulers  (as opposed to models).


%

\subsection{Obstructed Visibility}

The classical model   and  the more recent model of
 robots with lights share a common assumption: that 
 three or more collinear robots
   are 
mutually visible.
It can be easily argued against such an assumption, and for the
importance of investigating computability  when  visibility is {\em obstructed} 
by the presence of other robots: that is, if two robots $r$ and $s$ are located at 
$r(t)$ and $s(t)$ at time $t$, they can see each other if and only 
if no other robot lies on the segment $r(t)s(t)$ at that time. 

Very little is known   on computing with {obstructed visibility}.
 In fact, the few studies on obstructed visibility have been carried out in other models: 
 the model of robots in the {\em one-dimensional} space $\mathbb R$~\cite{CoP08}; and
 the so-called \emph{fat robots}
model, where  robots are  not geometric points but occupy unit disks, and collisions are
   allowed and can be  used as an explicit computational tool
 (e.g.,~\cite{AgGM13,BoKF12,CzGP09}).
 In our model,  collisions  can create
unbreakable symmetries:
 since robots are oblivious and anonymous and execute the same protocol, 
 if $r(t)=s(t)$ (a collision),  then the activation adversary can force
$r(t')=s(t')$ for all $t'>t$ if the two robots do not have lights or    their lights have the same color.
Thus, unless this is the intended outcome,
 collision avoidance is always a requirement 
for all algorithms in the model considered here.
%


In this paper we   focus on luminous robots  in the presence of    obstructed visibility, and investigate 
 computing in such a setting.
Clearly, obstructed visibility increases the difficulty of solving problems
without the use of additional assumptions.
For example,  with unobstructed visibility, every active robot can
determine the  total
number $n$ of robots at each activity cycle.
With obstructed visibility,
unless a robot has a-priori knowledge of  $n$ 
  and this knowledge is persistently stored, 
 the robot 
  might be unable
to decide  if it sees all the robots;  hence it might 
be unable to determine the value $n$.

%

The main problem we investigate,  called \Obst,  is perhaps the most basic in a situation of
obstructed visibility:  starting from arbitrary distinct positions in the plane,
 within finite time  the robots
 must reach  a configuration in which they are in distinct locations,
 they can all see each other, and they no longer move.
This problem is clearly at the basis of any subsequent task requiring complete visibility.
Notice that this problem does not exist under unobstructed visibility, and  has never been
investigated before.

Among the configurations that achieve mutual visibility, a special
class is that where all robots 
 are in a strictly convex position; within that class,
 of particular interest are those where the robots
 are on the perimeter of a  circle, possibly  equally spaced.
The problems of forming  such
configurations  (respectively called \Conv and \Circ)
 have been extensively studied
both directly
   (e.g.,~\cite{DatDGM13,Defago2008,DiLaPe08,FlPSV14})
and as part of the more general \Patt problem  (e.g.,~\cite{FloPSW08,FujYKY12,SuS90,SuY99,YaS10}).
Unfortunately, none of these
investigations consider obstructed visibility, and those algorithms do
not work in the setting considered here.

Note that a requirement of the \Obst problem is that robots stop moving after they have reached a configuration in which they all see each other. To this end, we will grant robots the ability to perform a special operation called \emph{termination}, after which they can no longer be activated by the scheduler. The termination operation is especially useful in practice when the robots have to perform several tasks in succession. Of course, even if this operation is not directly available, it can still be simulated via the addition of an extra color, which can be used by a robot to indicate (to the other robots, as well as to itself) that it has terminated. Moreover, if the termination operation is removed from the algorithms presented in this paper (and some straightforward adjustments are made, which do not require extra colors), then a weaker form of the \Obst problem is solved, in which the robots get to permanently see each other, but they never stop moving. That is, all obstructions are permanently removed, but the termination condition is not met. In some cases, removing the termination operation will even allow us to successfully apply our algorithms to different problems, such as \Circ and \NearG, as discussed in Section~\ref{s:extra}.

\subsection{Main Contributions}


In this paper we investigate 
under what conditions luminous robots can solve \Obst and at  
what cost (i.e., with how many colors). We establish a spectrum of results, depending on the power of the adversary,
on the number $c$ of colors,
and on the a-priori knowledge the robots have about the systems.

We first consider the case when the adversary 
can choose the activation
schedule (in \SSynch) and the duration of each robot's operations (in \ASynch), but 
cannot interrupt the movements of the
robots; that is, movements are rigid.
In this case, we show the following.

\begin{theorem}\label{t:sum1}
\Obst is solvable without collisions by \Rigid robots
\begin{itemize}
\item[\emph{\textbf{(a)}}] with  {\em no} colors  in  \SSynch,
if the robots know their number, $n$;
\item[\emph{\textbf{(b)}}] with $2$ colors  in  \SSynch, always;
\item[\emph{\textbf{(c)}}] with $3$ colors  in  \ASynch, always.
\end{itemize}
\end{theorem}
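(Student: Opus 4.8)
The plan is to reduce \Obst to the problem of reaching a \emph{strictly convex} configuration. Two robots fail to see each other precisely when a third lies on the open segment joining them, so mutual visibility is equivalent to the absence of three collinear robots; in particular, any configuration in which all robots are vertices of a common convex polygon is mutually visible. My target is therefore to move every robot that is not a vertex of the global convex hull --- every robot strictly inside the hull, or in the relative interior of a hull edge --- out to the boundary, ending in a strictly convex configuration at distinct points, after which the robots terminate. Robots already sitting at a hull corner should not move, so that the only effect of a round is to reduce the number of ``blocked'' (interior or edge) robots.

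For part (a) --- \SSynch, \Rigid, with $n$ known --- the enabling fact is that an activated robot can both compute the convex hull of the robots it currently sees and count them. Each blocked robot selects a destination just outside the current boundary, inside a thin wedge aimed at an associated edge, chosen so that (1) simultaneously-moving robots are routed to disjoint regions (e.g.\ by pairing each mover with a distinct edge or angular sector and breaking ties by a fixed lexicographic rule on coordinates) and (2) no new three-way collinearity is created on arrival. Because moves are rigid and activations are synchronous, the intended disjointness of destinations is actually realized, which yields both collision-freedom and collinearity-freedom in each round; fairness guarantees every robot is eventually activated, so the number of blocked robots strictly decreases until all robots lie on a strictly convex polygon. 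Knowledge of $n$ is exactly what lets a robot decide it is finished: once it is a hull vertex, sees $n-1$ robots, and these are in convex position, it has certified global mutual visibility and may terminate.

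For parts (b) and (c), lights compensate for the missing value of $n$ and, in \ASynch, for stale observations. With $c=2$ I would use one color to mark a robot as \emph{settled at a corner} and the other for \emph{still moving}; a robot then certifies completion from the colors it sees (all visible robots settled and in convex position) rather than from a count, removing the need to know $n$ while keeping essentially the same geometric movement rule as in (a). For part (c) the obstacle is that an \ASynch robot may \Look, be delayed, and then \Move according to an outdated view, so two robots computing against the same stale snapshot could collide or recreate a collinearity. The third color is meant to implement a lightweight reservation: a robot about to move announces its intention with a dedicated color, and other robots in the affected region defer until that color clears, so that at most one stale move is committed per region and the collision- and collinearity-free invariants survive the adversarial interleaving of phases.

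The main obstacle, common to all three parts, is performing collision- and collinearity-free movement under \emph{obstructed} visibility: an interior robot sees only a local hull, formed by the robots visible to it, which may differ from the true global hull, yet its local outward moves must aggregate into genuine global progress, never introduce a new collinear triple (which would re-create an obstruction), and never collide. I expect the crux to be the precise definition of the destination rule --- thin wedges toward the boundary with a fixed tie-break --- together with an inductive proof that the two invariants are preserved round by round against the \SSynch scheduler, and, for \ASynch, the verification that the color-based reservation confines the effect of any single stale observation so that the same invariants and the same termination certificate still go through.
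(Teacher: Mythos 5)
Your reduction of \Obst to reaching a strictly convex configuration is the same first step as the paper's, but the movement rule you build on it has two gaps that are fatal as stated. First, your coordination device does not exist in this model: there is no common coordinate system (the adversary chooses each robot's local frame and may change it between cycles), so a ``fixed lexicographic rule on coordinates'' cannot be used to route simultaneously activated movers to disjoint wedges; worse, under obstructed visibility two conflicting movers may be mutually invisible, so no convention evaluated on local snapshots can guarantee disjoint destinations. Second, an interior robot only knows the hull of the robots it \emph{sees}, which can be strictly inside the true hull. If its destination ``just outside the current boundary'' is computed from the local hull, it may land deep inside the real hull among unseen robots (it can even land \emph{on} a robot hidden by a third one on the line of motion, and it can create new collinear triples); if instead it lands outside the true hull, the hull grows and previously settled vertex robots can become non-vertices, so your claimed invariant that the number of blocked robots strictly decreases is false and the progress argument collapses. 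This is exactly the difficulty the paper's Algorithm~\ref{alg2} (\emph{Contain}) spends most of its proof on: an interior robot is allowed to move only onto a point of a \emph{fully visible} edge of the hull whose endpoint robots have confirmed \emph{External} lights (line~53), proving that some robot can always move is a nontrivial lemma (Lemma~\ref{l:canmove}), collision-freedom is obtained geometrically via dot-product inequalities (Lemmas~\ref{l:coll1},~\ref{l:coll2}) rather than by reservations, and a separate vertex-adjustments phase is needed because depleted robots end up as \emph{degenerate} hull vertices --- whereas in your plan corner robots never move, so strict convexity is never restored once someone lands on an edge.

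Note also that the paper does not prove parts (a) and (b) by interior depletion at all: it uses the opposite strategy (Algorithm~\ref{alg1}, \emph{Shrink}), in which only hull \emph{vertices} move, inward, and internal robots become external passively as the hull shrinks past them; the heart of that proof is a convergence analysis (the \Comm problem, Lemmas~\ref{l:main}--\ref{l:default}) showing that if only default moves occur the hull contracts to a point, forcing interior robots to surface. Your (a)-termination test (a vertex robot seeing $n-1$ others in strictly convex position may terminate) does match Section~\ref{s:knowledge}. For (c), your third color as a region-based lock inherits the same visibility problem --- the reserving robot may be hidden from precisely the robots it conflicts with --- while the paper uses the third color (\emph{Adjusting}) only as a per-robot self-reminder in the vertex phase, and obtains \ASynch correctness (Theorem~\ref{t:correctness3}) from rigidity: each interior robot reaches the boundary in a single rigid move, which is what makes stale snapshots harmless. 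So beyond taking a different route, the proposal leaves unproved exactly the obstructed-visibility steps that constitute the substance of the paper's argument.
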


We then consider the case when the adversary has  also the
power to interrupt the movements of the
robots; that is, movements are non-rigid.
The only restriction is that there exists a constant absolute length  $\delta>0$ such that, even if a robot's move is interrupted before it reaches the destination, it travels at
least a length $\delta$ towards it (otherwise it many never be able to reach any destination).
In the case of non-rigid movements, we prove the following.

\begin{theorem}\label{t:sum2}
\Obst is solvable without collisions by  \NRigid  robots
\begin{itemize}
\item[\emph{\textbf{(a)}}] with \emph{no} colors in  \SSynch, if the robots know $\delta$ and their number, $n$;
\item[\emph{\textbf{(b)}}] with $2$ colors in \SSynch, if the robots know  $\delta$;
\item[\emph{\textbf{(c)}}] with $3$ colors  in  \SSynch, always;
\item[\emph{\textbf{(d)}}] with $3$ colors  in  \ASynch, if the robots agree on the direction of one  coordinate axis.
\end{itemize}
\end{theorem}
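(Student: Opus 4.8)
The plan is to reduce \Obst to a purely geometric condition: since a robot $q$ obstructs the view between $p$ and $r$ precisely when $q$ lies on the open segment $p(t)r(t)$, mutual visibility holds if and only if the robots are in general position, i.e. no three are collinear. I would therefore design a protocol that drives the configuration to a persistent general-position (in fact strictly convex) configuration and then terminates. The lights encode each robot's role---essentially whether it currently lies on the convex hull $\mathcal H$ of the configuration or strictly in its interior---and serve to sequence movements. The guiding invariant is that hull vertices stay hull vertices while interior robots leave the interior one at a time, so the number of interior robots is a monotone potential that strictly decreases; once it reaches zero all robots are in convex position, hence mutually visible, and each can stop.

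For the movement rule, an interior robot computes a destination outside the current hull along a direction chosen so that the entire segment it traverses avoids every other robot and creates no new collinearity with the robots it must clear. The crucial point for the \NRigid case is to make the whole traversed segment---not merely its endpoint---safe: the destination is placed inside an open ``safe cone'', so that any move interrupted after length at least $\delta$ still carries the robot strictly closer to $\mathcal H$ without hitting or aligning with another robot. Because every activation of a moving robot realizes progress bounded below (either by $\delta$ directly, or, when $\delta$ is unknown, by a combinatorial decrease in the number of robots lying between it and $\mathcal H$), only finitely many activations are needed before the interior empties.

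The four parts differ only in how much coordination the lights and the model must supply. Part (c), three colors in \SSynch with no extra assumptions, is the core construction: the three colors drive a phase/turn-taking scheme guaranteeing that at every round the set of simultaneously moving robots cannot mutually invalidate one another's safety computations, while the third color signals termination. Parts (a) and (b) trade colors for knowledge: with $n$ known, a robot certifies the terminal configuration merely by counting $n-1$ visible robots, eliminating the termination color, and with $\delta$ known it can plan safe step lengths, so two colors (b) or none (a, which also assumes $n$) suffice. Part (d) lifts the construction to \ASynch: the agreed coordinate axis furnishes a global direction along which the robots can be consistently ordered and moved, and three colors together with this shared reference compensate for the loss of synchronous rounds.

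The main obstacle I expect lies in guaranteeing collision-freeness and the absence of new collinearities when several robots move concurrently under the weakest coordination---the asynchronous case (d) and the low-color cases (a,b). In \ASynch a robot may \Look\ and then \Move\ much later, after the configuration has changed, so a destination computed from a stale snapshot must remain safe when the move is finally executed; reconciling this with the minimal color budget is delicate. The agreed axis together with color-based locking must therefore ensure that a planned motion stays collision-free and collinearity-free despite outdated observations, and in particular that two interior robots moving at once can never become collinear with a third. Establishing this robustness---rather than the monovariant or the termination argument---is where the real work will be.
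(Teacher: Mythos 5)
The central gap is your termination argument for the \NRigid interior-depletion part, which is where parts (b)--(d) live or die. You claim interior robots leave ``one at a time,'' making their count a strictly decreasing potential, with per-activation progress bounded below ``either by $\delta$ directly, or, when $\delta$ is unknown, by a combinatorial decrease in the number of robots lying between it and $\mathcal H$.'' Neither half survives scrutiny. A robot interrupted mid-move stays interior, so the count is only non-increasing; and no combinatorial quantity decreases, because the robot is stopped by the adversary, not by intervening robots. Worse, after an interruption the robot recomputes its destination from a \emph{changed} configuration (the hull $\mathcal H'$ of the interior robots has moved), so you cannot even argue it approaches a fixed target in $\delta$-steps. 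The paper replaces your potential with a metric one: while interior robots exist, $\mathcal H$ is frozen, the movement cones are chosen so that $\mathcal H'(t)\subseteq\mathcal H'(t+1)$ (Lemma~\ref{l:intgrow}), and each move of an eligible robot increases either the squared diameter of $\mathcal H'$ by at least $\delta^2$ or its area by at least $a(T')\cdot\delta/8$, depending on whether the internal angle at the mover is below $\pi/3$ (Lemma~\ref{l:intfinal}); unbounded growth contradicts $\mathcal H'\subseteq\mathcal H$. Note that this analysis uses the \emph{existence} of $\delta$, not the robots' knowledge of it---knowledge of $\delta$ is reserved for parts (a),(b). Relatedly, you send robots to destinations ``outside the current hull,'' but under obstructed visibility a robot cannot in general certify where the true boundary of $\mathcal H$ is; the paper only permits a move whose target lies on a fully visible edge with both endpoints lit \emph{External} (line~53), and then needs a nontrivial lemma (Lemma~\ref{l:canmove}) showing some robot always holds such a certificate, or the protocol deadlocks. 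Your sketch is silent on both the certification and the deadlock issue.

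A second, structural discrepancy: parts (a),(b) are not obtained in the paper by shaving colors off the 3-color scheme, as you propose. They modify the 2-color \emph{Shrink} algorithm, in which external \emph{vertex} robots move inward: knowledge of $\delta$ makes moves of length at most $\delta$ effectively rigid (enabling a safe ``lateral move'' past nearby internal robots, and letting the last internal robot wait until the hull diameter drops below $\delta$), and the \Comm convergence analysis is redone with variable coefficients $\mu\in[\mu_0,1/2]$, $\mu_0=\delta/d$. Your plan of keeping the interior-depletion architecture while deleting its sequencing colors gives no mechanism for how phase coordination survives with $2$ or $0$ colors. On the positive side, your reading of (d) matches the paper: the genuine failure mode in \NRigid\ \ASynch is exactly the stale-snapshot hazard you identify (an interrupted internal robot may recompute a very different direction while another robot is mid-flight), and the agreed axis is used to serialize interior robots by $y$-coordinate, essentially as you suggest. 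Your over-strong requirement that moves avoid all \emph{transient} collinearities is unnecessary (only collisions must be avoided at all times), but that is harmless.
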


All these results are  established constructively. 
We present  and analyze two  protocols,
 Algorithm~\ref{alg1} ({\em Shrink}) and Algorithm~\ref{alg2} ({\em Contain}),
whose goal is to allow the robots to position themselves at the vertices of a  convex  polygon,
solving  \Conv, and thus \Obst. 
These two algorithms are 
based on  different strategies, and are tailored for different situations.
Protocol {\em Shrink}  uses two colors and requires  rigid movements, while 
protocol {\em Contain} uses more colors but operates also with non-rigid
movements. We prove their correctness for \SSynch robots (Sections~\ref{s:ssynch} and~\ref{s:ssynch2}). 
We then show how,
directly or with simple expansions and modifications of these two algorithms, all the claimed results
follow (Sections~\ref{s:asynch} and~\ref{s:extra}). Finally, we propose some open problems (Section~\ref{s:conclusions}).

Let us point out that, to prove the correctness of {\em Shrink},
we solve a seemingly unrelated problem, \Comm, which is interesting in its own right.

As a byproduct of our solutions, 
  we provide  the first
obstructed-visibility  solution to a 
 classical problem for oblivious robots:
{\em collision-less convergence to a point} (\Near) (see~\cite{FlPS12,PaPeVi12}),
Indeed, 
if the robots continue to follow  algorithm {\em Shrink} once they reach full visibility, 
the convex hull of their positions converges to a point, and the robots approach it without colliding,
  thus solving \Near (Section~\ref{s:nearg}). This  algorithm has  an
interesting 
 fault-tolerance property: if a single robot is faulty and becomes unable to move, the robots will 
 still solve \Near, converging to the faulty robot's location (Section~\ref{s:fault}).
 
Additionally, both protocols can be modified so 
that the robots can position themselves  on the perimeter of a  circle, thus providing an
obstructed-visibility  solution to
the classical problem of  \Circ. The problem can be solved 
with $2$ colors in  \Rigid\ \SSynch, 
with $3$ colors  in \NRigid\ \SSynch,
and
with $4$ colors in \Rigid\ \ASynch, and \NRigid\ \ASynch with agreement on one axis (Section~\ref{s:circle}).

\section{Model and Definitions}
\subsection{Modeling Robots}\label{s:robotmodels}
 We mostly follow the terminology and definitions of the standard model of oblivious mobile robots (e.g., see~\cite{FlPS12}).

By ${\cal R}=\{r_1 ,r_2, \cdots, r_n\}$ we denote  a set of oblivious   mobile computational entities, called {\em robots},
 operating in the Euclidean plane, and initially placed at distinct points.
 Each robot is provided with its own local coordinate system centered in itself, and its own notion of unit distance and handedness. 
We denote by $r(t) \in \mathbb{R}^{2}$ the position occupied by robot $r\in {\cal R}$ at time $t$; these positions are expressed here in a global coordinate system, which is used for description purposes, but is unknown to the robots. 
Two robots $r$ and $s$ are said to {\em collide} at time $t$ if $r(t)=s(t)$.
A robot $r$ can \emph{see} another robot $s$ (equivalently, $s$ is \emph{visible} to $r$) at time $t$ if and only  if no other robot lies in the segment $r(t)s(t)$ at that time. 

The robots are luminous:
 each robot $r$ has a persistent state variable, called {\em light},
which may assume any value  
 in a finite set $\mathcal C$ of {\em colors}. 
 The color of $r$ at time $t$ can be seen by all robots 
that can see $r$ at that  time.

The robots are autonomous  (i.e., without any external control), anonymous (i.e., without internal identifiers), indistinguishable (i.e., without external markings), without any direct means of communication, other than their lights.  At any time, robots can be performing a variety of operations, but initially (i.e., at time $t=0$) they are all still and idle.


When activated, a  robot  performs  a  \emph{Look-Compute-Move}  sequence of operations:
 it first obtains a snapshot 
 of the positions, expressed in its local coordinate system,   of all visible robots, along with their respective colors  ({\em Look} phase);
 using the last obtained snapshot as an input,
the robot    executes a deterministic algorithm, which is the same for all robots, to compute
 a destination point $x \in \mathbb{R}^{2}$ (expressed in its local coordinate system)
 and a color $c \in \mathcal C$, and it   
 sets  its light  to $c$
   ({\em Compute} phase); 
  finally, it     
 moves  towards $x$ ({\em Move} phase).
It then starts a new cycle, whenever the scheduler (which is an abstract entity controlling to some extent the behavior of the robots) decides to activate it again.
%

In the \emph{Compute} phase, a robot may also decide to terminate its execution. When a robot has terminated, it remains still forever, and its light remains the same color that it was at the moment of termination.

The robots are {\em oblivious} in the sense that, 
when a robot transitions from one cycle to the next, all its local memory, except for the light, is reset.
In other words,
a robot  has no  memory of  past computations and snapshots, except for the light.

With regards to the activation and timing of the robots, there are two basic settings:
{\em semi-synchronous} (\SSynch) and  {\em  asynchronous} (\ASynch).
In  \SSynch,  the time is discrete; at each time instant $t\in \mathbb N$ (called a {\em round} or a \emph{turn}) 
 a subset of the robots is activated by the scheduler and performs a whole \emph{Look-Compute-Move} cycle atomically and instantly.
 At any given round,  any subset of robots may be activated, from the empty set to all of ${\cal R}$. In particular, if all robots are activated at every round, the setting is called {\em fully synchronous} (\FSynch). There is a bland \emph{fairness} constraint on the choices that the scheduler can make: every robot must be activated infinitely many times (unless it terminates).
 In  \ASynch,  there is no common notion of time: each robot executes its cycles independently, the \emph{Look} operation is   instantaneous, but
   the   \emph{Compute} and \emph{Move} operation can take an unpredictable (but finite) amount of time, unknown to the robot. In a \emph{Move} phase there are no constraints on the speed of a robot, as long as it always moves directly towards its destination point at non-negative speed.

The scheduler that controls the activations (in \SSynch) and the durations of the operations (in \ASynch) can be thought of as an {\em adversary}, whose purpose is to prevent the robots from doing their task. Other than acting as a scheduler, the adversary also determines the initial position of the robots and their local
 coordinate systems; in particular,  the coordinate system of a robot might not be preserved over time and might 
be modified by the adversary between one cycle and the next. In the simplest model, the robots do not necessarily agree on the orientation of the coordinate axes, on the unit distance, and on the clockwise direction (i.e., the handedness of the system). However, in Section~\ref{sec:axis}, we will discuss the special model in which all the robots agree on the direction of one axis, and the adversary is unable to change it.

 The adversary might or might not have the power  to interrupt the
movement  of a robot  before it  reaches its destination in the {\em Move} operation.
If it does,  the system  is said to be \NRigid. The only constraint on the adversary is that there exists
a constant  $\delta>0$ such that, if interrupted 
before reaching its destination, a robot moves  at least a  distance 
  $\delta$. The value of $\delta$ is decided by the scheduler once and for all, and typically it is not known by the robots, which therefore cannot use it in their computations (we will discuss the scenario in which the robots know the value of $\delta$ in Section~\ref{s:delta}).
   Notice that, without this constraint, the adversary would be able to prevent a robot from reaching any given destination in a finite number of turns.
 If movements are not under the control of the adversary, and every robot reaches its destination at every turn, the system is said to be \Rigid.


%


\subsection{Mutual Visibility and Related Problems}

 The \Obst problem requires $n$  robots to form a configuration in which
  they occupy $n$ distinct locations, and no three of them are collinear. Subsequently, the robots have to terminate.
 A protocol $P$ is a solution of  \Obst if it allows 
  the robots to solve \Obst
starting   
    from any initial configuration 
 in which  their positions are all distinct, 
and regardless of the decisions of
   the adversary (including the activation schedule, the local coordinate systems of the robots, and the value of $\delta$).   
   
Let us stress that,
   since robots are oblivious and anonymous and execute the same protocol, 
 if $r(t)=s(t)$ (a collision),  then the  adversary can force
$r(t')=s(t')$ for all $t'>t$ if the two robots do not have lights or   their lights have the same color.
Hence the two robots will never again  occupy distinct locations, and will no longer be able to
solve \Obst. 
Thus,   collision avoidance of robots with the same color  is  a requirement 
for any solution protocol.
   
 Among the configurations that solve the \Obst problem, a special
class is that in which all robots 
 are in a strictly convex position. Within this class,
 of particular interest are the configurations in which the robots
lie on the perimeter of a  circle. Among these,
there are the notable configurations in which the robots 
occupy the vertices of a regular $n$-gon.
The problems of forming  such
configurations  are called  \Conv , \Circ, and 
\Uni, respectively.

\subsection{Geometric Notions and Observations}
A finite set of points $S\subset \mathbb R^2$ is said to be \emph{convex} if all the points of $S$ lie on the perimeter of the convex hull of $S$. If a point $p$ of a convex set $S$ lies in the relative interior of an edge of the convex hull of $S$, then $p$ is said to be a \emph{degenerate} vertex of the convex hull. If none of the points of a convex set $S$ is a degenerate vertex of the convex hull, then $S$ is said to be a \emph{strictly convex} set. On the other hand, we will say that a polygon is \emph{degenerate} if its area is zero. These two notions of degeneracy are used in different contexts (one refers to vertices, the other refers to whole polygons), hence they can hardly be confused.

Let $\mathcal H(t)$ denote the convex hull of $\{r_1(t), r_2(t), \cdots, r_n(t)\}$ at time $t$.
The robots lying on its boundary are called \emph{external robots} at time $t$, while the ones lying in its interior are the \emph{internal robots} at time $t$.

Observe that a robot may not know where the convex hull's vertices are located, because its view  may be obstructed by other robots. However, it can easily 
determine whether it is an external or an internal robot.  In fact,   a robot $r $ is {external} at time $t$ if and only if there is a half-plane bounded by a straight line through $r(t)$ whose interior contains no robots at time $t$. In other words, $r$ is external if and only if it lies on the boundary of the convex hull of the robots that it can currently see. Note also that the neighbors of an external robot on its visible convex hull are indeed its neighbors on the actual convex hull. If, in addition, $r$ lies at a non-degenerate vertex of the (visible) convex hull, it is said to be a {\em vertex} robot.

Moreover, a robot is able to tell if $\mathcal H$ is a line segment, i.e., if all the robots are collinear. In particular, if a robot can see only one other robot, it understands that it is an \emph{endpoint robot}. Conversely, non-endpoint robots can always see more than one other robot.
 
The points of $\mathbb R^2$ are treated like vectors, and as such they can be added, subtracted, multiplied by scalars, etc. The dot product between vectors $a$ and $b$ will be indicated by the expression $a\bullet b$.

\section{Solving \Obst for \Rigid\ \SSynch Robots}\label{s:ssynch}


In this section we consider the   \Obst problem  in  the \Rigid\   \SSynch setting. 
We present and analyze  a  protocol,
 Algorithm~\ref{alg1} ({\em Shrink}),  and we prove it
  solves \Obst in such a setting using   only two   colors.

\subsection{Description of Algorithm~\ref{alg1}}\label{s:algorithm}

The main idea of Algorithm~\ref{alg1} is to make only the external robots move, so as to shrink the convex hull. When a former internal robot becomes external, it starts moving as well. Eventually, all the robots reach a strictly convex configuration, and at this point they all see each other and they can terminate.

\SetAlgoCaptionSeparator{: Shrink.}
\begin{algorithm}\label{alg1}
\caption{Solving the \Obst problem for \Rigid\ \SSynch robots with 2-colored lights}
\DontPrintSemicolon
\LinesNumbered
\KwIn{$\mathcal V$: set of robots visible to me (myself included) whose positions are expressed in a coordinate system centered at my location.}
$r^* \longleftarrow$ myself\;
$\mathcal P \longleftarrow \{r.\mbox{\emph{position}}\mid r\in \mathcal V\}$\;
$\mathcal H \longleftarrow$ convex hull of $\mathcal P$\;
\If{$|\mathcal V|=3$ \textbf{\emph{and}} $\mathcal H$ is a line segment}{
	Move orthogonally to $\mathcal H$ by any positive amount\;
}
\Else{
\If{$r^*.\mbox{position}$ is a vertex of $\mathcal H$}{
	$r^*.\mbox{\emph{light}} \longleftarrow \mbox{\emph{Vertex}}$\;
	\lIf{$\forall r\in \mathcal V,\, r.\mbox{light} = \mbox{Vertex}$}{Terminate}
	\ElseIf{$|\mathcal V|>2$}{
		$a \longleftarrow$ position of my ccw neighbor on the boundary of $\mathcal H$\;
		$b \longleftarrow$ position of my cw neighbor on the boundary of $\mathcal H$\;
		$u\longleftarrow a/2$\;
		$\gamma \longleftarrow 1/2$\;
		\ForEach{$r\in \mathcal V\setminus\{r^*\}$}{
			Let $\alpha$, $\beta$ be such that $r.\mbox{\emph{position}}=\alpha\cdot a+\beta\cdot b$\;
			\If{$\alpha+\beta<\gamma$}{
				$u\longleftarrow r.\mbox{\emph{position}}$\;
				$\gamma \longleftarrow\alpha+\beta$\;
			}
			\lElseIf{$\alpha+\beta=\gamma$ \emph{\textbf{and}} $r.\mbox{position}$ is closer to $b$ than $u$}{$u\longleftarrow r.\mbox{\emph{position}}$}
		}
		$v \longleftarrow \gamma\cdot b$\;
		Move to $(u+v)/2$\;
	}
}
\lElseIf{$\forall r\in \mathcal V\setminus\{r^*\},\,r.\mbox{light} = \mbox{Vertex}$ \emph{\textbf{and}} $r^*.\mbox{position}$ lies in the interior of $\mathcal H$}{Move to the midpoint of any edge of $\mathcal H$
}}
\end{algorithm}

If an active robot $r_i$, located at $p$, realizes that it is not a vertex robot, it does not move. Otherwise, it locates its clockwise and counterclockwise neighbors (in its own coordinate system) on the convex hull's boundary, say located at $a$ and $b$, which are necessarily visible. Then, $r_i$ attempts to move somewhere in the triangle $\triangle pab$, in such a way to shrink the convex hull, and possibly make one more robot become a vertex robot. To avoid collisions with other robots that may be moving at the same time, $r_i$'s movements are restricted to a smaller triangle, shaded in gray in Figure~\ref{fig1}. Moreover, to avoid becoming a non-vertex robot, $r_i$ does not cross any line parallel to $ab$ that passes through another robot, and it carefully positions itself on the closest of such lines, as shown in Figure~\ref{fig1:a}. In particular, if no such line intersects the gray area, $r_i$ makes a \emph{default move}, and it moves halfway toward the midpoint of the segment $ab$, as indicated in Figure~\ref{fig1:b}.

\begin{figure}[h]
\centering
\subfigure[Making $c$ become a vertex robot, without moving past it]{\label{fig1:a}\includegraphics[width=.8\linewidth]{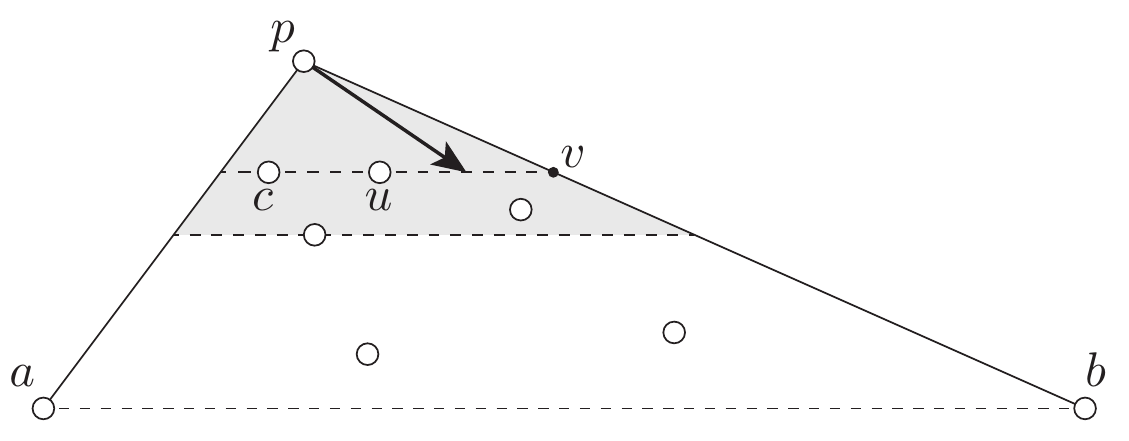}}
\subfigure[Default move]{\label{fig1:b}\includegraphics[width=.8\linewidth]{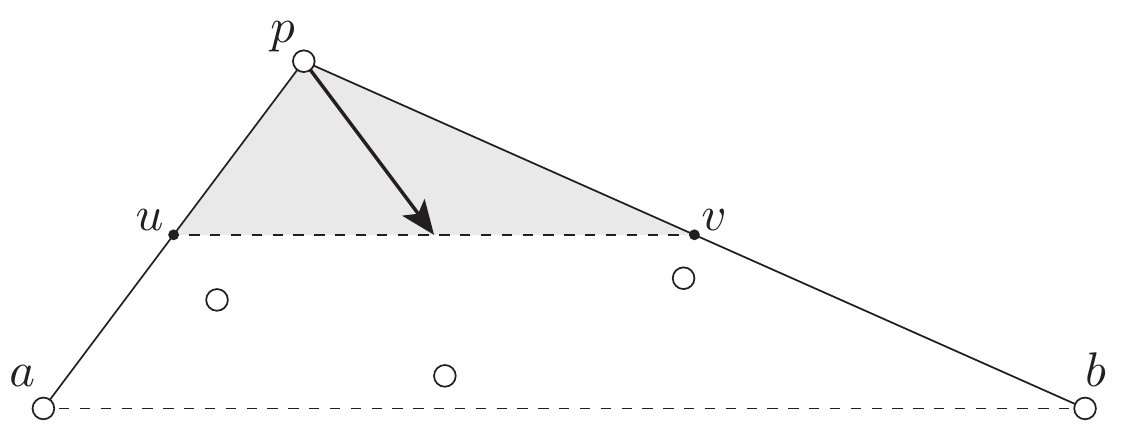}}
\caption{Move of an external robot, in two different cases (robots' locations are indicated by small circles)}
\label{fig1}
\end{figure}

In order to recognize that the \Obst problem has been solved, and to correctly terminate, the robots carry visible lights of two possible colors: namely, $\mathcal C=\{\mbox{\emph{Off}},\mbox{\emph{Vertex}}\}$. All robots' lights are initially set to \emph{Off}. If an active robot realizes that it is a vertex of the convex hull, it sets its light to the other value, \emph{Vertex}. Hence, when a robot sees only robots whose lights are set to \emph{Vertex}, it knows it can see all the robots in the swarm, and hence it terminates.

The above rules are sufficient to solve the \Obst problem in most cases, but there are some exceptions. It is easy to see that there are configurations in which \Obst is never solved until an internal robot moves, regardless of the algorithm employed. For instance, suppose that the configuration is centrally symmetric, with one robot lying at the center. Let the local coordinate systems of any two symmetric robots be oriented symmetrically and have the same unit distance, and assume that the scheduler chooses to activate all robots at every turn. Then, every two symmetric robots have symmetric views, and therefore they move symmetrically. If the central robot---which is an internal robot---never moves, then the configuration remains centrally symmetric, and the central robot always obstructs all pairs of symmetric robots. Hence \Obst is never solved, no matter what algorithm is executed.

It turns out that our rules can be fixed in a simple way to resolve also this special case: whenever an internal robot sees only robots whose lights are set to \emph{Vertex} (except its own light), it moves to the midpoint of any edge of the convex hull.

Finally, the configurations in which all the robots are collinear need special handling. In this case it is impossible to solve \Obst unless some robots leave the current convex hull. Suppose that a robot $r$ realizes that all robots lie on a line, and that it is not an endpoint (i.e., $r$ can see only two other robots, which are collinear with it). Then, $r$ moves by any positive amount, orthogonally to the line formed by the other two visible robots. When this is done, the previous rules apply.

\subsection{Correctness of Algorithm~\ref{alg1}}
\subsubsection{Invariants}\label{s:invariants}

In the following we discuss some basic invariants, which will serve to prove the correctness of Algorithm~\ref{alg1}.

Suppose that, for some $t\in\mathbb{N}$, $\mathcal{H}(t)$ is not a line segment: the situation is illustrated in Figure~\ref{fig2}. If a vertex robot is activated, it is bound to remain in the corresponding gray triangle, called \emph{movement region} of the robot. More precisely, the movement region consists of the interior of the gray triangle, plus the vertex where the robot currently is, plus the interior of the edge that is opposite to the robot. Hence all movement regions are disjoint. Moreover, if there is only one internal robot and it sees only robots whose light is set to \emph{Vertex}, it moves to the midpoint of an edge of $\mathcal H(t)$, which does not lie in any movement region. It follows that, no matter which robots are activated at time $t$, they will not collide at time $t+1$. Also, $\mathcal H(t+1)\subseteq\mathcal H(t)$.

\begin{figure}[h]
\centering
\includegraphics[width=.8\linewidth]{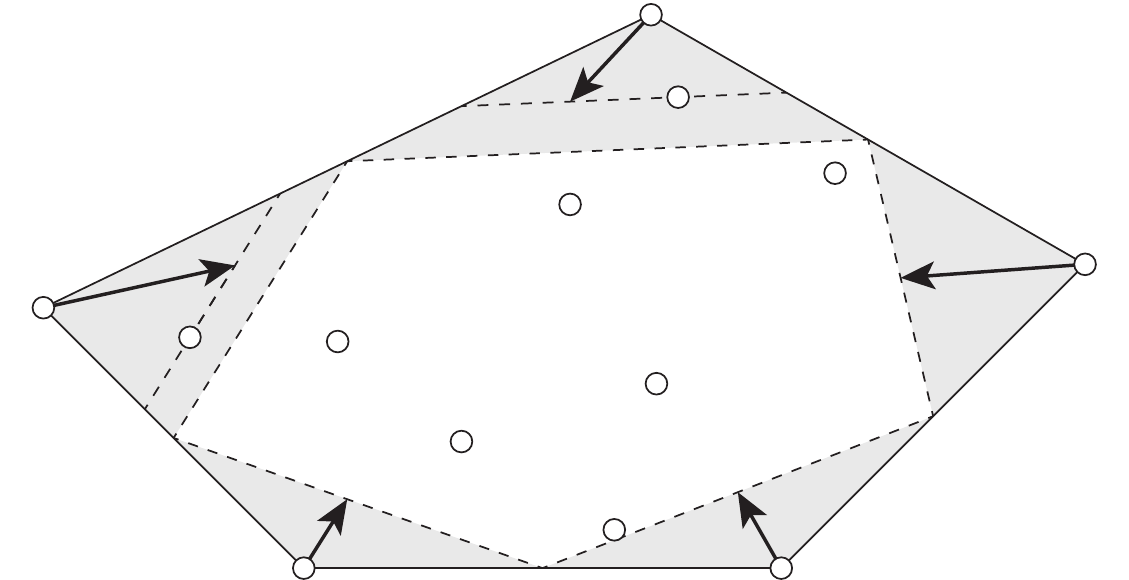}
\caption{Combined motion of all vertex robots}
\label{fig2}
\end{figure}

Recall that a robot $r\in\mathcal{R}$ is a vertex robot if an only if it lies at the vertex of a reflex angle whose interior does not contain any robots. Now, referring to Figure~\ref{fig1}, it is clear that a vertex robot will remain a vertex robot after a move. 
Additionally, if no new vertex robots are acquired between time $t$ and $t+1$, then the ordering of the vertex robots around the convex hull is preserved from time $t$ to time $t+1$. This easily follows from the fact that every robot remains in its own movement region (cf.~Figure~\ref{fig2}).

\subsubsection{Convergence}\label{s:convergence}

We seek to prove that Algorithm~\ref{alg1} makes every robot eventually become a vertex robot. As it will be apparent in the proof of Theorem~\ref{t:correctness}, the crux of the problem is the situation in which only default moves are made (cf.~Figure~\ref{fig1:b}). We first prove that, if all robots perform only default moves, then they all converge to the same point (see Lemma~\ref{l:default} below).

Since we are assuming that only the vertex robots move, and that their movements depend only on the positions of other visible vertex robots, we may as well assume that all robots are vertex robots, and that their indices follow their order around the convex hull. Indeed, by the invariants observed in Section~\ref{s:invariants}, all robots will remain vertex robots throughout the execution, and their ordering around the convex hull will remain the same. So, let $r_{i-1}$, $r_i$, $r_{i+1}$ be three vertex robots, which appear on the boundary of $\mathcal{H}(t)$ consecutively in this order. Let $r_i$ perform a default move at time $t$. Then, the new position of $r_i$ is a convex combination of the current positions of these three robots, and precisely
\begin{equation}\label{e:default}
r_i(t+1)=\frac{r_{i-1}(t)}{4}+\frac{r_i(t)}{2}+\frac{r_{i+1}(t)}{4}.
\end{equation}
In general, as different sets of vertex robots are activated in several rounds, and nothing but default moves are made, the new location of each robot is always a convex combination of the \emph{original} positions of all the robots, obtained by applying~\eqref{e:default} to the set of active robots, at every round. In formulas,
$$r_i(t_0+t)=\sum_{j=1}^n \alpha_{i,j,t}\cdot r_j(t_0),$$
with $\alpha_{i,j,t}\geqslant 0$ and $\sum_{j=1}^n \alpha_{i,j,t}=1$, assuming that the robots start making only default moves at time $t_0$. Let $I=\{1,2,\cdots,n\}$. 
We fix $j\in I$, and we let $w_{i,t}=\alpha_{i,j,t}-\alpha_{i-1,j,t}$, where indices are taken modulo $n$. We claim that
\begin{equation}\label{e:claim}
\lim_{t\to\infty}w_{1,t}=\lim_{t\to\infty}w_{2,t}=\cdots=\lim_{t\to\infty}w_{n,t}=0.
\end{equation}
If such a claim is true (for all $j\in I$), it implies that the robots get arbitrarily close to each other, as $t$ grows. This, paired with the fact that $\mathcal H(t_0+t+1)\subseteq \mathcal H(t_0+t)$ for every $t$, as observed in Section~\ref{s:invariants}, allows us to conclude that the robots converge to the same limit point.

A proof of this statement can be obtained using the theory of convergence of asynchronous algorithms in the book~\cite{BeTsi89}. Indeed, the update rule~\eqref{e:default} corresponds to performing time stepping on a Markov chain with circulant transition matrix
\[
P=\begin{bmatrix}
1/2 & 1/4 & & & 1/4\\
1/4 & 1/2 & 1/4\\
& 1/4 & 1/2 & \ddots\\
& & \ddots & \ddots & 1/4\\
1/4& & & 1/4 & 1/2
\end{bmatrix}.
\]
It is proven in a statement on~\cite[page~435]{BeTsi89} that the time-stepping iteration converges even when performed asynchronously, under a model that generalizes our \SSynch.

Nevertheless, we give here an alternative self-contained proof. First we reformulate the problem in the following terms.

\paragraph{\underline{Communicating Vessels}} Suppose that $n$ vessels containing water are arranged in a circle, and there is a pipe between each pair of adjacent vessels, regulated by a valve. At every second, some of the valves are opened and others are closed, in such a way that each of the $n$ valves stays open for infinitely many seconds, in total. If a valve between two adjacent vessels stays open between seconds $t$ and $t+1$, then $1/4$ of the surplus of water, measured at second $t$, flows from the fuller vessel to the emptier one. Our claim is that the amount of water converges to the same limit in all vessels, no matter how the valves are opened and closed. We call this problem \Comm.
\vspace{10pt}

In this formulation, the amount of water in the $i$-th vessel at time $t\in\mathbb{N}$ would be our previous $w_{i,t}$. However, here we somewhat abstract from the \Obst problem, and we consider a slightly more general initial configuration, in which the $w_{i,0}$'s are arbitrary real numbers.

This problem is a special case of a diffusion model on a simple circular graph. To solve it, we shall introduce a quadratic energy functional $\norm{w_t}^2$, and prove that it is decreasing. The use of such an energy functional in this class of problems is well known in the literature (see for instance~\cite{Cyb89}), but the fact that the iteration is performed semi-synchronously on each node separately is less standard, so we need to do a little more work.

We set $v_{i,t}=1$ if the valve between the $i$-th and the $(i+1)$-th vessel is open between time $t$ and $t+1$ (indices are taken modulo $n$), and $v_{i,t}=0$ otherwise. It is easy to verify that activating robot $r_i$ at time $t$ in our previous discussion corresponds to setting $v_{i,t}=1$ in the \Comm formulation.

Let us denote by $w_t$ the vector whose $i$-th entry is $w_{i,t}$, and let $q_{i,t}=w_{i+1,t}-w_{i,t}$. We first prove an inequality on the Euclidean norms of the vectors $w_t$. Note that the inequality holds regardless of what assumptions are made on the opening pattern of the valves.

\begin{lemma}\label{l:main}
For every $t\in\mathbb{N}$,
\begin{equation}\label{e:main}
 \norm{w_t}^2-\norm{w_{t+1}}^2 \geqslant \frac{1}{4}\sum_{i=1}^n v_{i,t}\cdot q_{i,t}^2.
\end{equation}
\end{lemma}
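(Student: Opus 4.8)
The plan is to treat the lemma as a single energy computation, exploiting the facts that the evolution of the $w_{i,t}$ is linear and that the transfer across each open valve is a fixed fraction of the current difference. First I would record the explicit update rule. Since opening valve $i$ moves $\tfrac14 q_{i,t}$ units of water out of vessel $i+1$ and into vessel $i$ (recall $q_{i,t}=w_{i+1,t}-w_{i,t}$), and since simultaneous openings of incident valves superpose, each vessel obeys
\[
w_{i,t+1}=w_{i,t}+\tfrac14\,v_{i,t}\,q_{i,t}-\tfrac14\,v_{i-1,t}\,q_{i-1,t},
\]
with all indices taken modulo $n$. Writing $w_{i,t+1}=w_{i,t}+\Delta_i$ with $\Delta_i=\tfrac14(v_{i,t}q_{i,t}-v_{i-1,t}q_{i-1,t})$ and expanding the squares gives $\norm{w_t}^2-\norm{w_{t+1}}^2=-2\sum_i w_{i,t}\Delta_i-\sum_i\Delta_i^2$, so the statement reduces to controlling these two sums.

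For the linear (cross) term, I would reindex the second half of $\Delta_i$ by the cyclic shift $i\mapsto i+1$, thereby collecting the two contributions attached to each valve $i$. This turns $\sum_i w_{i,t}\Delta_i$ into $\tfrac14\sum_i v_{i,t}q_{i,t}(w_{i,t}-w_{i+1,t})$, and since $w_{i,t}-w_{i+1,t}=-q_{i,t}$ this collapses to $-\tfrac14\sum_i v_{i,t}q_{i,t}^2$. Hence the cross term contributes $-2\sum_i w_{i,t}\Delta_i=\tfrac12\sum_i v_{i,t}q_{i,t}^2$, which is already twice the quantity we wish to lower-bound.

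It then remains to show that the quadratic penalty $\sum_i\Delta_i^2$ cannot consume more than half of this budget. I would bound each $\Delta_i^2=\tfrac1{16}(v_{i,t}q_{i,t}-v_{i-1,t}q_{i-1,t})^2$ using the elementary inequality $(a-b)^2\le 2a^2+2b^2$ together with $v_{i,t}^2=v_{i,t}$ (valves are $0/1$), obtaining $\Delta_i^2\le\tfrac18(v_{i,t}q_{i,t}^2+v_{i-1,t}q_{i-1,t}^2)$. Summing over the cycle, each term $v_{i,t}q_{i,t}^2$ appears exactly twice, so $\sum_i\Delta_i^2\le\tfrac14\sum_i v_{i,t}q_{i,t}^2$. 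Combining the two estimates yields $\norm{w_t}^2-\norm{w_{t+1}}^2\ge\tfrac12\sum_i v_{i,t}q_{i,t}^2-\tfrac14\sum_i v_{i,t}q_{i,t}^2=\tfrac14\sum_i v_{i,t}q_{i,t}^2$, which is exactly~\eqref{e:main}.

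The computation is otherwise routine; the two points demanding care are the cyclic reindexing (each vessel is influenced by both of its incident valves, and adjacent valves may be open at the same time, so the superposition of the two flows must be tracked correctly) and verifying that the crude bound $(a-b)^2\le 2a^2+2b^2$ is generous enough — it is precisely what produces the constant $\tfrac14$ on the right-hand side rather than a larger one. Note also that no hypothesis on the opening pattern of the valves enters at any stage, which matches the remark preceding the statement that the inequality holds regardless of how the valves are operated.
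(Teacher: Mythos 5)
Your proof is correct, and it takes a genuinely different route from the paper's. The paper proceeds vessel-by-vessel with a four-way case analysis on $(v_{i-1,t},v_{i,t})$, using explicit algebraic identities such as $\frac{a^2}{4}+\frac{b^2}{2}+\frac{c^2}{4}-\frac{(a+2b+c)^2}{16}=\frac{(a-b)^2}{8}+\frac{(b-c)^2}{8}+\frac{(a-c)^2}{16}$ in each case, and then sums the local inequalities, checking that each $q_{i,t}^2/8$ appears twice exactly when $v_{i,t}=1$ and that the coefficients of each $w_{i,t}^2$ sum to $1$. You instead expand $\norm{w_{t+1}}^2$ globally, compute the cross term \emph{exactly} by a cyclic summation by parts (it collapses to $-\tfrac14\sum_i v_{i,t}q_{i,t}^2$ with no inequality needed), and then bound the second-order term once and for all via $(a-b)^2\leqslant 2a^2+2b^2$ together with $v_{i,t}^2=v_{i,t}$; your update rule $w_{i,t+1}=w_{i,t}+\tfrac14 v_{i,t}q_{i,t}-\tfrac14 v_{i-1,t}q_{i-1,t}$ agrees with the paper's in all four cases, and your handling of superposed flows is right because all transfers are measured at time $t$, so they add linearly. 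Your argument is shorter, avoids the case split entirely, and makes transparent where the constant $\tfrac14$ comes from: the dissipation budget is $\tfrac12\sum_i v_{i,t}q_{i,t}^2$ and the quadratic penalty can eat at most half of it. What the paper's local computation buys in exchange is a little extra slack that your crude quadratic bound discards (e.g., the dropped term $(a-c)^2/16$ and the coefficient $3/16$ versus $1/8$ in the single-valve case), and its case analysis exhibits each $w_{i,t+1}$ as an explicit convex combination of time-$t$ values, a fact the surrounding text reuses to conclude that $M_t$ is non-increasing and $m_t$ is non-decreasing; but for the inequality of Lemma~\ref{l:main} itself the two arguments deliver exactly the same bound, and yours is complete as stated.
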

\begin{proof}
For brevity, let $a=w_{i-1,t}$, $b=w_{i,t}$, $c=w_{i+1,t}$; hence, $q_{i-1,t}=b-a$ and $q_{i,t}=c-b$.

Suppose first that $v_{i-1,t}=v_{i,t}=1$, i.e., both valves connecting the $i$-th vessel with its neighbors are open. Then, $w_{i,t+1}=(a+2b+c)/4$. We have
\begin{equation}\label{case1}
\frac{w_{i-1,t}^2}{4}+\frac{w_{i,t}^2}{2}+\frac{w_{i+1,t}^2}{4}-w_{i,t+1}^2 \geqslant \frac{q_{i-1,t}^2}{8}+\frac{q_{i,t}^2}{8},
\end{equation}
which can be obtained by dropping the term $(a-c)^2/16$ from the algebraic identity
$$\frac{a^2}{4}+\frac{b^2}{2}+\frac{c^2}{4}-\frac{(a+2b+c)^2}{16}=\frac{(a-b)^2}{8}+\frac{(b-c)^2}{8}+\frac{(a-c)^2}{16}.$$
Now, suppose instead that $v_{i-1,t}=1$ and $v_{i,t}=0$. Then we have $w_{i,t+1}=(a+3b)/4$, and
\begin{equation} \label{case2}
\frac{w_{i-1,t}^2}{4}+\frac{3w_{i,t}^2}{4}-w_{i,t+1}^2 = \frac{3q_{i-1,t}^2}{16}\geqslant \frac{q_{i-1,t}^2}{8},
\end{equation}
where the first equality comes from the identity
$$\frac{a^2}{4}+\frac{3b^2}{4}-\frac{(a+3b)^2}{16} = \frac{3(a-b)^2}{16}.$$
If $v_{i-1,t}=0$ and $v_{i,t}=1$, an analogous argument gives
\begin{equation} \label{case3}
\frac{3w_{i,t}^2}{4}+\frac{w_{i+1,t}^2}{4}-w_{i,t+1}^2\geqslant \frac{q_{i,t}^2}{8}.
\end{equation}
Finally, if $v_{i-1,t}=v_{i,t}=0$, $w_{i,t+1}=w_{i,t}$, and trivially
\begin{equation} \label{case4}
w_{i,t}^2-w_{i,t+1}^2 = 0.
\end{equation}
We sum for each $i\in I$ the relevant inequality among \eqref{case1}, \eqref{case2}, \eqref{case3}, \eqref{case4}, depending on the value of $v_{i-1,t}$ and $v_{i,t}$.
Each of the terms $q_{i,t}^2/8$ appears twice if and only if $v_{i,t}=1$, and the coefficients of the terms in $w_{i,t}^2$ sum to 1 for every $i$, hence we get~\eqref{e:main}.
\end{proof}

From the previous lemma, it immediately follows that the sequence $(\norm{w_t})_{t\geqslant 0}$ is non-increasing. Since it is also bounded below by $0$, it converges to a limit, which we call $\ell$. Let $M_t=\max_{i\in I}\{w_{i,t}\}$ and $m_t=\min_{i\in I}\{w_{i,t}\}$. Observe that each entry of $w_{t+1}$ is a convex combination of entries of $w_t$, hence $(M_t)_{t\geqslant 0}$ is non-increasing and $(m_t)_{t\geqslant 0}$ is non-decreasing. Therefore they both converge, and we let $M=\lim_{t\to\infty} M_t$ and $m=\lim_{t\to\infty} m_t$.

\begin{corollary}\label{c:limits}
$$m \leqslant \frac{\ell}{\sqrt{n}}\leqslant M.$$
\end{corollary}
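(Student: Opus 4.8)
The plan is to derive each of the two inequalities from a per-round version and then let $t\to\infty$, using that $m_t$, $M_t$, and $\norm{w_t}$ all converge (to $m$, $M$, and $\ell$, respectively, as just established). Since $\ell/\sqrt n=\lim_{t}\norm{w_t}/\sqrt n$ is the limiting root-mean-square of the entries of $w_t$, the corollary amounts to comparing this limiting root-mean-square with the limiting minimum and maximum of the entries.

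For the lower bound I would invoke the elementary fact that every real vector $x=(x_1,\dots,x_n)$ satisfies $\min_i x_i\le \norm{x}/\sqrt n$. This is immediate by cases: if $\min_i x_i\le 0$ the right-hand side is nonnegative and nothing is to prove, while if $\min_i x_i>0$ then $x_i^2\ge(\min_i x_i)^2$ for every $i$, so $\norm{x}^2\ge n(\min_i x_i)^2$. Applying this with $x=w_t$ gives $m_t\le \norm{w_t}/\sqrt n$ for every $t$, and passing to the limit yields $m\le \ell/\sqrt n$. (Applying the same inequality to $-w_t$ additionally gives $-M\le\ell/\sqrt n$, which will be convenient below.)

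The hard part is the upper bound $\ell/\sqrt n\le M$, and I expect this to be the main obstacle. The naive symmetric step fails: the inequality $\norm{x}/\sqrt n\le \max_i x_i$ is \emph{false} for general vectors (for instance $x=(1,1,-2)$ has root-mean-square $\sqrt 2>1=\max_i x_i$), so no purely per-round, algebraic argument can work and one must exploit the dynamics. My plan is to feed in Lemma~\ref{l:main}: summing~\eqref{e:main} over all rounds telescopes the convergent sequence $\norm{w_t}^2$, giving $\sum_{t}\sum_{i} v_{i,t}\,q_{i,t}^2\le 4\bigl(\norm{w_0}^2-\ell^2\bigr)<\infty$. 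Since, by the fairness assumption of \Comm, each valve is open in infinitely many rounds, this summability should force the discrete gradient $q_{i,t}=w_{i+1,t}-w_{i,t}$ to vanish in the limit, so that every subsequential limit of $(w_t)$ is a constant vector; for a constant vector the root-mean-square equals its common value, which is precisely $M$, closing the bound.

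The two delicate points I expect to spend most effort on are (i) upgrading the summability-plus-fairness statement into genuine flattening of the \emph{entire} profile rather than merely along the rounds in which a given valve happens to be open (the same asynchronous-averaging subtlety that the reference~\cite{BeTsi89} is invoked to settle), and (ii) the sign bookkeeping needed for the limiting constant to be compatible with $M$ in the displayed inequality. Point (ii) is automatic in the application to \Obst, where the $w_{i,0}$ are consecutive differences $\alpha_{i,j,0}-\alpha_{i-1,j,0}$ and hence sum to zero, which is preserved by the flow and forces the common limit to be $0$; there $m=M=\ell/\sqrt n=0$ and the corollary holds with equality throughout.
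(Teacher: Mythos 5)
Your lower bound is correct, and in fact slightly more robust than the paper's own step, since your case analysis works for sign-mixed vectors. But your treatment of the upper bound misses the one ingredient the paper's proof rests on: in the \Comm interpretation the entries $w_{i,t}$ are amounts of water, hence nonnegative, and nonnegativity is preserved by the dynamics because each entry of $w_{t+1}$ is a convex combination of entries of $w_t$. Under $0\leqslant w_{i,t}\leqslant M_t$, the ``naive symmetric step'' you reject is perfectly valid: $\norm{w_t}^2=\sum_{i=1}^n w_{i,t}^2\leqslant nM_t^2$, so $\ell^2\leqslant nM_t^2$ for every $t$, and letting $t\to\infty$ gives $\ell/\sqrt{n}\leqslant M$. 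This, together with the matching per-round bound $nm_t^2\leqslant\norm{w_t}^2\leqslant\ell^2+\varepsilon$ for large $t$, is the paper's entire proof---a two-line algebraic squeeze with no dynamics whatsoever. Your counterexample $(1,1,-2)$ is correct as far as it goes, and it genuinely shows that the corollary is false for sign-mixed data (the paper's phrase ``arbitrary real numbers'' is loose on this point, and its own proof tacitly uses $M_t\geqslant\abs{w_{i,t}}$); but the right conclusion to draw from it is that nonnegativity must be invoked, not that one must exploit the flow.

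The dynamical replacement you sketch cannot be completed. First, Corollary~\ref{c:limits} carries no fairness hypothesis: it is stated, and must hold, for every valve-opening pattern, since fairness enters only afterwards in Lemma~\ref{l:vessels}, which uses this corollary as an input to upgrade $M-m=0$ to $M=m=\ell/\sqrt{n}$. Under an adversarial pattern (e.g., no valve ever opens) no flattening occurs, and your telescoping of Lemma~\ref{l:main} yields nothing. Second, even granting fairness, the inequality stays false for sign-mixed data: for the constant vector $(-1,\dots,-1)$ nothing ever changes, so $M=-1$ while $\ell/\sqrt{n}=1$; this is exactly the sign issue you defer to your point (ii), and it is fatal outside the zero-sum application, so no amount of work on your point (i) can close the gap---the root-mean-square of a constant vector is the \emph{absolute value} of its common entry, not the entry itself. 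Third, the flattening statement you plan to establish (every subsequential limit of $(w_t)$ is constant) is precisely the content of Lemma~\ref{l:vessels}, so even where your route does apply it would duplicate the next lemma rather than supply the short, pattern-independent bound the paper needs at this stage.
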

\begin{proof}
For every $t\in\mathbb{N}$, we have
$$nM_t^2 \geqslant \sum_{i=1}^n w_{i,t}^2 =\norm{w_t}^2\geqslant \ell^2,$$
which proves the second inequality. As for the first inequality, for every $\varepsilon>0$ and large-enough $t$, we have $nm_t^2\leqslant \norm{w_t}^2\leqslant \ell^2+\varepsilon$.
\end{proof}

For the next lemma, we let $V_i=\{t\in\mathbb{N}\mid v_{i,t}=1\}$.

\begin{lemma}\label{l:vessels}
Suppose that $\abs{V_i}=\infty$ for at least $n-1$ distinct values of $i\in I$. Then,
$$M=m=\frac{\ell}{\sqrt{n}}.$$
\end{lemma}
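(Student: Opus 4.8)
The plan is to prove $M=m$; the stated identity $M=m=\ell/\sqrt n$ then follows immediately from Corollary~\ref{c:limits}, which squeezes $\ell/\sqrt n$ between $m$ and $M$. The first ingredient is a global energy budget. Telescoping \eqref{e:main} over $t=0,1,\dots,T-1$ gives $\norm{w_0}^2-\norm{w_T}^2\geqslant \tfrac14\sum_{t<T}\sum_i v_{i,t}q_{i,t}^2$, and letting $T\to\infty$ (using $\norm{w_t}\to\ell$) yields
\[
\sum_{t\geqslant 0}\sum_{i=1}^n v_{i,t}\,q_{i,t}^2\;\leqslant\; 4\left(\norm{w_0}^2-\ell^2\right)<\infty .
\]
Two consequences follow. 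First, the summand tends to $0$, so for every valve $i$ with $\abs{V_i}=\infty$ we have $q_{i,t}\to 0$ along $t\in V_i$: an infinitely-often-open valve carries a vanishing surplus at its opening times. Second, since a vessel changes only through its open valves, $\norm{w_{t+1}-w_t}_\infty\leqslant \tfrac12\max_{i:\,v_{i,t}=1}\abs{q_{i,t}}\to 0$, so consecutive states become arbitrarily close.

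I would then pass to the limit set. As $(w_t)$ is bounded (every entry stays in $[m_0,M_0]$) and $w_{t+1}-w_t\to0$, its set $\Omega$ of accumulation points is nonempty, compact and connected; and because $M_t\to M$, $m_t\to m$ and $\max,\min$ are continuous, every $w^\ast\in\Omega$ satisfies $\max_i w^\ast_i=M$ and $\min_i w^\ast_i=m$. The hypothesis $\abs{V_i}=\infty$ for at least $n-1$ values of $i$ enters exactly here, to guarantee connectivity of the communication graph: removing one edge from the $n$-cycle leaves a spanning path, so the infinitely-often-open valves connect all vessels. Hence it suffices to prove that every $w^\ast\in\Omega$ is flat across every i.o.\ valve (that is, $w^\ast_i=w^\ast_{i+1}$): each such $w^\ast$ would then be constant, forcing $M=\max w^\ast=\min w^\ast=m$.

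The main obstacle is precisely this last upgrade. From the first consequence above and a convergent subsequence of $(w_t)_{t\in V_i}$ one obtains, for each i.o.\ valve $i$, at least one point of $\Omega$ that is flat across $i$; the difficulty is that the scheduler may leave arbitrarily long gaps between consecutive openings of a valve, so a limit point reached at times $t_k$ need not resemble the state at the next opening of $i$, and local flatness cannot be transported naively. I would resolve this by charging against the finite budget: if some $w^\ast\in\Omega$ had $\abs{w^\ast_i-w^\ast_{i+1}}=3\nu>0$ on an i.o.\ valve, then $w_t$ would lie infinitely often in a neighbourhood on which valve $i$'s surplus exceeds $2\nu$; every opening of $i$ while $w_t$ is in that neighbourhood would cost at least $\tfrac14(2\nu)^2$ of the budget, so $i$ is opened there only finitely often, and the persistent surplus could only be discharged through the neighbouring valves — which, by the first consequence, transfer vanishing amounts at large times. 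Combined with $w_{t+1}-w_t\to0$ and the spanning-path structure (whose two endpoint vessels each have a single i.o.\ valve, so their values can move only toward their unique neighbour), this is incompatible with the profile remaining spread, and produces the contradiction.

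I expect this limiting step to be the genuinely delicate part, and it is worth noting that it coincides with the convergence of the semi-synchronous averaging iteration with circulant matrix $P$, which is also covered by the asynchronous-convergence theory of~\cite{BeTsi89}. Granting $M=m$, Corollary~\ref{c:limits} gives $m\leqslant \ell/\sqrt n\leqslant M=m$, whence $M=m=\ell/\sqrt n$, as claimed.
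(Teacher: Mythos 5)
Your framework is sound up to the step you yourself flag as delicate: the telescoped energy budget, the vanishing increments $\norm{w_{t+1}-w_t}_\infty\to 0$, the compact connected limit set $\Omega$ on which $\max_i w^\ast_i=M$ and $\min_i w^\ast_i=m$, and the reduction (via the spanning path of infinitely-often-open valves) to showing every $w^\ast\in\Omega$ is flat across every such valve. The gap is that the charging argument does not close this last step. The finite budget bounds $\sum_{t}\sum_i v_{i,t}\,q_{i,t}^2$, the sum of \emph{squared} transfers, not the total transported volume $\sum_t\sum_i v_{i,t}\abs{q_{i,t}}$; the latter can be infinite while the former converges (compare $\sum 1/t^2<\infty$ with $\sum 1/t=\infty$). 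So ``the neighbouring valves transfer vanishing amounts at large times'' does not prevent them from building up and tearing down a surplus of size $2\nu$ across valve $i$ infinitely often: each excursion of $q_{i,t}$ from near $0$ to near $3\nu$ can be funded at arbitrarily small budget cost if spread over enough steps with tiny per-step flows. Likewise, the observation that an opening of valve $i$ near $w^\ast$ would cost at least $\tfrac14(2\nu)^2$ only shows the scheduler will refrain from opening $i$ at those times --- which it is free to do, since consequence 1 already has it opening $i$ only when $q_{i,t}$ is small. No contradiction results, and the endpoint-vessel remark does not rescue it: after the finitely-open valves are exhausted, the endpoint exchanges only with its path-neighbour, but that exchange may still occur infinitely often with unbounded total volume.

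What is missing is a mechanism that converts ``per-step transfers are eventually small'' into a \emph{permanent} obstruction, and this is exactly what the paper's proof supplies, without any limit-set machinery. Set $\delta=(M-m)/(n+1)$. Since the energy decrements tend to zero, Lemma~\ref{l:main} gives a time $T$ after which every valve that is actually open satisfies $\abs{q_{i,t}}<\delta$. Pigeonhole on the $n+1$ intervals of width $\delta$ spanning $[m,M]$ yields, at time $T$, a partition $I=I_1\cup I_2$ into nonempty classes whose values are separated by a gap of at least $\delta$, and a short induction shows the gap persists forever: a cross-class valve would need $\abs{q}\geqslant\delta$ and hence is never open after $T$, while intra-class updates are convex combinations that respect the class bounds. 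On the cycle, two nonempty complementary classes have at least two crossing valves, so two valves are never opened after $T$, contradicting the hypothesis that at most one $V_i$ is finite. If you want to retain your $\Omega$-based formulation, you would need an analogue of this cluster-persistence argument in place of the local charging; as it stands, the proposal's contradiction step would fail against a scheduler that moves surplus slowly, because finite $\sum q^2$ is fully compatible with infinite total transport.
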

\begin{proof}
Due to Corollary~\ref{c:limits}, it is enough to prove that $M-m=0$. By contradiction, assume $M-m>0$, and let $\delta=(M-m)/(n+1)>0$. We have
$$\lim_{t\to\infty}\left(\norm{w_t}^2-\norm{w_{t+1}}^2\right)=\ell^2-\ell^2=0,$$
hence there exists $T\in\mathbb{N}$ such that $\norm{w_t}^2-\norm{w_{t+1}}^2 < \delta^2/4$ for every $t\geqslant T$. By Lemma~\ref{l:main},
$$\frac{q_{i,t}^2}{4}\leqslant\norm{w_t}^2-\norm{w_{t+1}}^2<\frac{\delta^2}{4}$$
for every $t\geqslant T$ and every $i$ such that $v_{i,t}=1$. This implies $\abs{q_{i,t}}<\delta$, that is, a necessary condition for the valve between the $i$-th and the $(i+1)$-th vessel to be open at time $t\geqslant T$ is that $\abs{w_{i+1,t}-w_{i,t}}<\delta$. Consider now the $n+1$ open intervals
$$(m,m+\delta), (m+\delta,m+2\delta), \cdots, (m+n\delta,M),$$
each of width $\delta$. Since $M_T\geqslant M$ and $m_T\leqslant m$, there are $w_{i,T}$'s above and below all these intervals. Moreover, by the pigeonhole principle, at least one of the intervals contains no $w_{i,T}$'s, for any $i\in I$. In other words, we can find a partition $I_1 \cup I_2=I$, with $I_1$ and $I_2$ both non-empty, and a threshold value $\lambda$ such that $w_{i,T}\leqslant\lambda$ for every $i\in I_1$, and $w_{i,T}\geqslant\lambda+\delta$ for every $i\in I_2$. Hence, at time $T$, only valves between entries of $w_t$ whose indices belong to the same $I_k$ can be open. It is now easy to prove by induction on $t\geqslant T$ the following facts:
\begin{itemize}
\item $\max_{i\in I_1}\{w_{i,t}\}\leqslant\lambda$,
\item $\min_{i\in I_2}\{w_{i,t}\}\geqslant\lambda+\delta$,
\item $v_{i,t}=0$ whenever $i$ and $i+1$ belong to two different classes of the partition.
\end{itemize}
Since $I_1$ and $I_2$ are non-empty, there must be at least two distinct indices $i'\in I_1$ and $i''\in I_2$ such that $i'+1\in I_2$ and $i''+1\in I_1$ (where indices are taken modulo $n$). It follows that the $i'$-th and $i''$-th valve are never open for $t\geqslant T$, and this contradicts the hypothesis that $\abs{V_i}<\infty$ for at most one choice of $i\in I$.
\end{proof}

This solves the \Comm problem.

\begin{corollary}\label{c:vessels}
Under the hypotheses of Lemma~\ref{l:vessels}, for every $i\in I$,
$$\lim_{t\to\infty} w_{i,t}=\frac{\ell}{\sqrt{n}}=\frac{\sum_{j=1}^n w_{j,0}}{n}.$$
\end{corollary}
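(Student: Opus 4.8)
The plan is to derive this corollary as an essentially immediate consequence of Lemma~\ref{l:vessels}, combined with a conservation argument. Lemma~\ref{l:vessels} already supplies the hard part, namely that $M=m=\ell/\sqrt{n}$ under the stated hypotheses; what remains is to upgrade this statement about the extreme entries into a statement about \emph{every} entry, and then to identify the common limit with the average of the initial values.

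First I would observe that, since $w_{i,t+1}$ is always a convex combination of entries of $w_t$ (as noted just before Corollary~\ref{c:limits}), each individual entry is sandwiched by the running minimum and maximum:
\[
m_t \leqslant w_{i,t} \leqslant M_t \qquad \text{for all } i\in I,\ t\in\mathbb{N}.
\]
By Lemma~\ref{l:vessels} we have $\lim_{t\to\infty} m_t = m = \ell/\sqrt{n}$ and $\lim_{t\to\infty} M_t = M = \ell/\sqrt{n}$, so the squeeze theorem forces $\lim_{t\to\infty} w_{i,t} = \ell/\sqrt{n}$ for every $i\in I$. This establishes the first equality in the statement.

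Next I would pin down the numerical value of $\ell/\sqrt{n}$ via conservation of the total amount of water. The key observation is that the update rule merely redistributes water between adjacent vessels: whenever the valve between $i$ and $i+1$ is open, the same quantity $\tfrac{1}{4}(w_{i+1,t}-w_{i,t})$ that is added to the $i$-th vessel is subtracted from the $(i+1)$-th one (this can be read off directly from the case analysis in the proof of Lemma~\ref{l:main}, where the transition matrix is doubly stochastic). Hence $\sum_{i=1}^n w_{i,t}$ is invariant in $t$, so $\sum_{i=1}^n w_{i,t} = \sum_{j=1}^n w_{j,0}$ for every $t$. Letting $t\to\infty$ and using the first step, the left-hand side tends to $n\cdot(\ell/\sqrt{n}) = \sqrt{n}\,\ell$, whence $\ell/\sqrt{n} = \tfrac{1}{n}\sum_{j=1}^n w_{j,0}$, as claimed.

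There is no serious obstacle here: essentially all the analytic difficulty was discharged in Lemma~\ref{l:vessels}. The only points requiring a moment of care are, first, making explicit that the convex-combination property yields the sandwich bound uniformly in $i$, so that the squeeze theorem applies; and second, verifying the conservation law, which hinges on the antisymmetry of the inter-vessel transfers rather than on any assumption about the valve schedule.
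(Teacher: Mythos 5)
Your proposal is correct and follows essentially the same route as the paper's own proof: the paper likewise combines the sandwich $m_t\leqslant w_{i,t}\leqslant M_t$ with Lemma~\ref{l:vessels} to get the common limit, and then uses the time-invariance of $\sum_{i=1}^n w_{i,t}$ to identify that limit with the initial average. Your write-up merely makes explicit two details the paper leaves implicit (the squeeze argument and the antisymmetry of the inter-vessel transfers underlying the conservation law), which is fine.
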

\begin{proof}
By Lemma~\ref{l:vessels}, since $m_t\leqslant w_{i,t}\leqslant M_t$, all the limits coincide. Moreover, the sum of the $w_{i,t}$'s does not depend on $t$; hence their average, taken at any time, must be equal to the joint limit.
\end{proof}

Let us return to the \Obst problem, to prove our final lemma.

\begin{lemma}\label{l:default}
If, at every round, each robot makes a default move (cf.~Figure~\ref{fig1:b}) or stays still, all external robots have their lights set to \emph{Vertex}, and no new robots become vertex robots or terminate, then all robots' locations converge to the same limit point.
\end{lemma}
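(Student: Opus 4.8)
The plan is to recognize Lemma~\ref{l:default} as essentially a geometric repackaging of the \Comm result already proved, so that the work reduces to (i) reducing to the all-vertex-robots dynamics, (ii) feeding the correct initial data into Corollary~\ref{c:vessels}, and (iii) translating the resulting convergence of coefficients into convergence of positions.

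First I would appeal to the invariants of Section~\ref{s:invariants}: under the hypotheses no robot leaves or newly joins the set of vertex robots, and the cyclic order around $\mathcal H$ is preserved. Thus I may assume that all $n$ robots are vertex robots, indexed in cyclic order, moving only by default moves, so that $r_i(t_0+t)=\sum_{j}\alpha_{i,j,t}\,r_j(t_0)$ with convex-combination coefficients obtained by applying~\eqref{e:default} to each activated robot. For a fixed index $j$, setting $w_{i,t}=\alpha_{i,j,t}-\alpha_{i-1,j,t}$ makes these differences obey the \Comm dynamics, under the correspondence ``$r_i$ active $\leftrightarrow$ valve $i$ open'' noted earlier. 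The \SSynch fairness assumption forces every robot, hence every valve, to be activated infinitely often, so $\abs{V_i}=\infty$ for all $i$ and Corollary~\ref{c:vessels} applies.

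Corollary~\ref{c:vessels} then yields $\lim_{t\to\infty}w_{i,t}=\bigl(\sum_j w_{j,0}\bigr)/n$ for every $i$, and the decisive step is to evaluate this average. At time $t_0$ the coefficient matrix is the identity, i.e.\ $\alpha_{i,j,0}=1$ exactly when $i=j$; hence $w_{i,0}$ equals $1$ when $i=j$, equals $-1$ when $i=j+1$, and is $0$ otherwise, so summing around the cycle telescopes to $0$. Therefore $\lim_{t\to\infty}w_{i,t}=0$ for every $i$, which is precisely claim~\eqref{e:claim}; since $j$ was arbitrary, it holds for every coefficient index.

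It remains to pass from vanishing coefficient differences to a common limit point. For consecutive robots, $\norm{r_i(t_0+t)-r_{i-1}(t_0+t)}=\norm{\textstyle\sum_j(\alpha_{i,j,t}-\alpha_{i-1,j,t})\,r_j(t_0)}\leqslant\sum_j\abs{\alpha_{i,j,t}-\alpha_{i-1,j,t}}\,\norm{r_j(t_0)}\to 0$, so every edge of $\mathcal H(t_0+t)$ shrinks to zero; since the robots are its ordered vertices, its perimeter, and hence its diameter, tends to $0$. Combined with the nesting $\mathcal H(t_0+t+1)\subseteq\mathcal H(t_0+t)$, the hulls form a decreasing sequence of compact convex sets of vanishing diameter, whose intersection is a single point $P$, to which every robot converges. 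I expect this last step to be the main obstacle: the \Comm machinery only guarantees that \emph{consecutive} coefficient rows agree in the limit, and one must combine the preserved cyclic vertex order (so that consecutive distances control the diameter) with the nestedness of the hulls to upgrade ``robots get pairwise close'' to genuine convergence to a single fixed point.
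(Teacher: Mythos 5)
Your proposal is correct and follows essentially the same route as the paper's proof: reduce to the all-vertex-robot dynamics via the invariants, feed $\sum_i w_{i,0}=0$ into Corollary~\ref{c:vessels} (the paper obtains this sum by cyclic telescoping, which is equivalent to your identity-coefficient-matrix computation) to get~\eqref{e:claim}, and combine the vanishing differences with the nesting $\mathcal H(t_0+t+1)\subseteq\mathcal H(t_0+t)$ to conclude convergence to a single point. Your final paragraph simply makes explicit the compactness argument that the paper states as a remark at the start of Section~\ref{s:convergence}, so there is no substantive difference.
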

\begin{proof}
As discussed at the beginning of Section~\ref{s:convergence}, this is implied by~\eqref{e:claim}. Recall that $w_{i,0}=\alpha_{i,j,0}-\alpha_{i-1,j,0}$, and hence $\sum_{i=1}^n w_{i,0}=0$. Then,~\eqref{e:claim} follows immediately from Corollary~\ref{c:vessels}.
\end{proof}

We are now ready to prove our main theorem.

\begin{theorem}\label{t:correctness}
Algorithm~\ref{alg1} solves \Obst for \Rigid\ \SSynch robots with 2-colored lights.
\end{theorem}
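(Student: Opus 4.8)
The plan is to verify the two requirements of \Obst separately: that robots executing Algorithm~\ref{alg1} never collide, and that from any initial configuration they reach, in finite time, a configuration in which they occupy distinct non-degenerate vertices of their convex hull (hence see each other) and terminate. Collision-freedom is essentially already in hand: by the invariants of Section~\ref{s:invariants}, the movement regions of the vertex robots are pairwise disjoint, the internal robot that may move aims at an edge midpoint, which lies in no movement region, and in the collinear case two distinct robots moving orthogonally to the common line keep distinct projections onto it. I would first dispose of the degenerate cases: for $n\leqslant 2$ mutual visibility is immediate, and for a collinear start I would argue that collinearity is broken in finite time — the two endpoint robots never move, so the supporting line stays fixed, and by the fairness of the \SSynch scheduler some non-endpoint robot is eventually activated and steps off that line.

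Next I would set up a monotonicity argument. By the invariants, once a robot sits at a non-degenerate vertex it stays there, and once its light is set to \emph{Vertex} it never reverts; hence the set $S$ of vertex robots can only grow, and since it is bounded by $n$ it stabilizes to some size $k$ after a finite time $t_1$. The goal becomes to show $k=n$. A preliminary fact I would record is that \emph{premature} termination is impossible: if a vertex robot $r$ saw only \emph{Vertex}-coloured robots while some robot were still not a vertex, that robot would be hidden behind a visible one, i.e.\ collinear with $r$ beyond a visible vertex; but an extreme point of the hull cannot lie strictly between two hull points, so $r$ sees only vertices precisely when \emph{every} robot is a vertex. Thus termination, when it happens, is always correct.

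The crux — and the step I expect to be the main obstacle — is to rule out a permanent stall with $k<n$; this is exactly the ``only default moves'' situation flagged before Lemma~\ref{l:default}. Assuming $k<n$ after $t_1$, I would argue that no \emph{capturing} move (Figure~\ref{fig1:a}) can occur afterwards, since by design such a move turns the captured robot into a new vertex and would enlarge $S$; likewise no robot on the relative interior of an edge can survive, because when an adjacent vertex performs its default move it leaves the edge's line (its other neighbour is off that line), exposing the stationary edge robot as a genuine vertex and again enlarging $S$. Hence after $t_1$ the external robots are exactly the $k$ vertices, all coloured \emph{Vertex}, and every move is a default move or a wait — precisely the hypotheses of Lemma~\ref{l:default}. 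Applying it (through the \Comm analysis and Corollary~\ref{c:vessels}), the $k$ vertex robots converge to a single point $P$, so $\mathcal H(t)$ collapses to $P$ and every internal robot, being enclosed yet stationary, must already sit at $P$; collision-freedom then forces at most one internal robot, and that lone robot, seeing only \emph{Vertex} robots, would trigger the edge-midpoint rule and be promoted as above. Every branch contradicts the stability of $S$, so $k=n$.

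To conclude, once $k=n$ the robots occupy distinct non-degenerate vertices, the configuration is strictly convex, and they all see one another. During the finite window before the last light turns \emph{Vertex}, any activated robot still sees only vertices, none of which lies below the threshold that would trigger a capturing move, so it can only perform shrinking default moves that preserve strict convexity; by fairness all lights become \emph{Vertex} in finite time, after which every activated robot terminates. The most delicate points to get right will be the two geometric promotion claims — that a capturing move genuinely creates a vertex, and that an adjacent default move exposes an edge robot — together with checking that the hypotheses of Lemma~\ref{l:default} hold exactly (all external robots \emph{Vertex}, no new vertices, no terminations) after the stabilization time $t_1$.
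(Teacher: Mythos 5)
Your proposal is correct and follows essentially the same route as the paper's proof: a stabilized monotone set of vertex robots, promotion of external non-vertex robots by an adjacent vertex's move, exclusion of non-default moves, reduction of the residual stall to Lemma~\ref{l:default} (via the \Comm analysis), and a final contradiction from the stationary internal robots (at most one, which line~23 then promotes). The only nit, which you yourself flag as delicate: a capturing move makes the captured robot \emph{external} but possibly only a degenerate vertex at first (the paper carefully says ``becomes a new external robot''), not necessarily a vertex immediately --- but your edge-promotion step, together with fairness, already repairs this.
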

\begin{proof}
If the initial convex hull is a line segment, it becomes a non-degenerate polygon as soon as one or more of the non-vertex robots are activated. It is also easy to observe (cf.~Figure~\ref{fig2}) that, from this configuration, the convex hull may never become a line segment. So the invariants discussed in Section~\ref{s:invariants} apply, possibly after a few initial rounds: no two robots will ever collide, and a vertex robot will never become a non-vertex robot.

Assume by contradiction that the execution never terminates. Note that a robot terminates if and only if all robots terminate. Indeed, if there are any non-vertex robots (whose lights are still set to \emph{Off}), then each vertex robot can see at least one of them. Hence we are assuming that all robots execute the algorithm forever.

At some point, the set of vertex robots reaches a maximum $\mathcal M\subseteq \mathcal R$, and as soon as all of these robots have been activated, they permanently set their lights to \emph{Vertex}. Let $T\in\mathbb N$ be a time at which all the robots in $\mathcal M$ have their lights set to \emph{Vertex}. Suppose that there are external robots that are not vertex robots after time $T$, and let $r$ be one such robot that is adjacent to a vertex robot $r'$. Then, after $r'$ is activated and moves, $r$ becomes a vertex robot as well, contradicting the maximality of $\mathcal M$. Hence the external robots are exactly the robots in $\mathcal M$, and no other robot may become external after time $T$.

If there is only one internal robot at time $t\geqslant T$, it becomes external as soon as it is activated, due to line~23 of the algorithm, which is impossible, as argued in the previous paragraph. Therefore there are at least two internal robots at every time $t\geqslant T$. On the other hand, if a vertex robot makes a non-default move at any time $t\geqslant T$, a new robot becomes external at time $t+1$. Indeed, referring to Figure~\ref{fig1:a}, the line $uv$ passes through $p(t+1)$ and $c(t+1)$, and no robot lies above this line at time $t+1$. Hence $c$ becomes a new external robot, which again is impossible.

As a consequence, only default moves are made after time $T$. Moreover, no robot becomes external or becomes a vertex robot after time $T$, and no robot ever terminates. Therefore Lemma~\ref{l:default} applies, and the robots converge to the same limit point. But since there are at least two internal robots, this means that at least one of them has to move, implying that it becomes a vertex robot at some point (by the above assumptions, only vertex robots can move), a contradiction.

Hence the execution terminates, meaning that at some point one of the robots sees only vertex robots. This implies that there are no non-vertex robots, hence the configuration is strictly convex, all robots can see each other, and they all terminate without moving as soon as they are activated, thus solving the \Obst problem.
\end{proof}

\section{Solving \Obst for \NRigid\ \SSynch Robots}\label{s:ssynch2}

Here we give a protocol, Algorithm~\ref{alg2} (\emph{Contain}), for the \Obst problem that works for \NRigid robots and the \SSynch scheduler. Recall that, in the \NRigid model, the robots make unreliable moves, that is, the scheduler can stop them before they reach their destination point, but not before they have moved by at least a constant $\delta>0$. Since these robots are weaker than the ones considered in Section~\ref{s:ssynch}, they will require lights of three possible colors, as opposed to two.

Our goal is also to design an algorithm that can be applied to robots in the \Rigid\ \ASynch model, as well as the \NRigid\ \SSynch one. This model will be discussed in Section~\ref{s:rasynch}. In order to do this, we introduce a couple of extra technical subtleties into Algorithm~\ref{alg2}, which are irrelevant here, but will turn out to be necessary in Section~\ref{s:rasynch}.

\subsection{Description of Algorithm~\ref{alg2}}\label{s:alg2descr}

Algorithm~\ref{alg2} consists of three phases, to be executed in succession: a \emph{segment breaking} phase, an \emph{interior depletion} phase, and a \emph{vertex adjustments} phase. The first phase deals with the special configuration in which the robots are all collinear, and makes them not collinear. If the robots are not initially collinear, this phase is skipped. In the second phase, the internal robots move toward the boundary of the convex hull, thus forming a convex configuration, perhaps with some degenerate vertices. In the third phase the robots (which are now all external) make small movements to finally reach a strictly convex configuration. Three colors are used by the robots:  $\mathcal C=\{\mbox{\emph{Off}},\mbox{\emph{External}},\mbox{\emph{Adjusting}}\}$. Initially, all robots' lights are set to \emph{Off}.

For added clarity, in the algorithm the line numbers of instructions belonging logically to different phases are typeset in different colors, according to the following table.

\newcommand{\segmentcolor}[1]{\textcolor{red}{#1}}
\newcommand{\interiorcolor}[1]{\textcolor{ForestGreen}{#1}}
\newcommand{\vertexcolor}[1]{\textcolor{blue}{#1}}
\newcommand{\allcolor}[1]{\textcolor{black}{#1}}
\begin{center}
 \begin{tabular}{rl}
  \toprule
  Segment breaking & \segmentcolor{red}\\
  Interior depletion & \interiorcolor{green}\\
  Vertex adjustments & \vertexcolor{blue}\\
  \bottomrule
 \end{tabular}
\end{center}

\SetAlgoCaptionSeparator{: Contain.}
\begin{algorithm}\label{alg2}
\caption{Solving the \Obst problem for \NRigid\ \SSynch robots and \Rigid\ \ASynch robots with 3-colored lights}
\DontPrintSemicolon
\KwIn{$\mathcal V$: set of robots visible to me (myself included) whose positions are expressed in a coordinate system centered at my location.}
\nlset{\segmentcolor{1}\ \interiorcolor{1}\ \vertexcolor{1}}$r^* \longleftarrow$ myself\;
\nlset{\segmentcolor{2}\ \interiorcolor{2}\ \vertexcolor{2}}$\mathcal P \longleftarrow \{r.\mbox{\emph{position}}\mid r\in \mathcal V\}$\;
\nlset{\segmentcolor{3}\ \interiorcolor{3}\ \vertexcolor{3}}$\mathcal H \longleftarrow$ convex hull of $\mathcal P$\;
\nlset{\segmentcolor{4}\ \interiorcolor{4}\ \vertexcolor{4}}$\partial \mathcal H\longleftarrow$ boundary of $\mathcal H$\;
\nlset{\segmentcolor{5}}\lIf{$|\mathcal V|=1$}{Terminate\;}
\nlset{\segmentcolor{6}}\ElseIf{$|\mathcal V|=2$}{
\nlset{\segmentcolor{7}}	\If{$r^*.\mbox{light}=\mbox{Adjusting}$}{
\nlset{\segmentcolor{8}}		$r^*.\mbox{\emph{light}} \longleftarrow \mbox{\emph{External}}$\;
\nlset{\segmentcolor{9}}		Terminate\;
	}
\nlset{\segmentcolor{10}}	\Else{
\nlset{\segmentcolor{11}}		$r^*.\mbox{\emph{light}} \longleftarrow \mbox{\emph{Adjusting}}$\;
\nlset{\segmentcolor{12}}		Move orthogonally to $\mathcal H$ by the length of $\mathcal H$\;
	}
}
\nlset{\segmentcolor{13}}\ElseIf{$\mathcal H$ is a line segment}{
\nlset{\segmentcolor{14}}	\If{$\forall r\in \mathcal V\setminus\{r^*\},\, r.\mbox{light}=\mbox{External}$}{
\nlset{\segmentcolor{15}}		$r^*.\mbox{\emph{light}} \longleftarrow \mbox{\emph{Adjusting}}$\;
\nlset{\segmentcolor{16}}		Move orthogonally to $\mathcal H$ by any positive amount\;
	}
}
\end{algorithm}

\setcounter{algocf}{1}
\SetAlgoCaptionSeparator{: Contain (continued).}
\begin{algorithm}
\caption{}
\DontPrintSemicolon
\nlset{\segmentcolor{17}\ \vertexcolor{17}}\ElseIf{$r^*.\mbox{position}\in\partial\mathcal H$}{
\nlset{\segmentcolor{18}\ \vertexcolor{18}}	$a \longleftarrow$ my ccw-neighboring robot on $\partial\mathcal H$\;
\nlset{\segmentcolor{19}\ \vertexcolor{19}}	$b \longleftarrow$ my cw-neighboring robot on $\partial\mathcal H$\;
\nlset{\segmentcolor{20}\ \vertexcolor{20}}\If{$r^*.\mbox{light}=\mbox{Adjusting}$}{
\nlset{\segmentcolor{21}\ \vertexcolor{21}}				\If{$\forall r\in\mathcal V,\, r.\mbox{light}\neq\mbox{Off}$\\
\nlset{\segmentcolor{22}\ \vertexcolor{22}}				\mbox{\textbf{\emph{or}} }$\exists r\in\mathcal V,\, r.\mbox{light}=\mbox{External}$}{
\nlset{\segmentcolor{23}\ \vertexcolor{23}}		$r^*.\mbox{\emph{light}} \longleftarrow \mbox{\emph{External}}$\;
\nlset{\segmentcolor{24}\ \vertexcolor{24}}		\lIf{$a.\mbox{light}\neq\mbox{Off}$\mbox{ \textbf{\emph{and}} }$b.\mbox{light}\neq\mbox{Off}$\\
\nlset{\segmentcolor{25}\ \vertexcolor{25}}\mbox{\textbf{\emph{and}} }$(\mathcal H\setminus \partial\mathcal H)\cap \mathcal P=\varnothing$}{Terminate}
				}
	}
\nlset{\vertexcolor{26}}\ElseIf{$r^*.\mbox{position}$ is a non-degenerate vertex of $\mathcal H$\\
\nlset{\vertexcolor{27}}\mbox{\textbf{\emph{and}} }$\forall r\in\mathcal V,\, r.\mbox{light}=\mbox{External}$}{
\nlset{\vertexcolor{28}}	$r^*.\mbox{\emph{light}}\longleftarrow \mbox{\emph{Adjusting}}$\;
\nlset{\vertexcolor{29}}	Move to $(a.\mbox{\emph{position}}+b.\mbox{\emph{position}})/4$\;
}
\nlset{\segmentcolor{30}\ \interiorcolor{30}\ \vertexcolor{30}}\Else{
\nlset{\segmentcolor{31}\ \interiorcolor{31}\ \vertexcolor{31}}$\mathcal W\longleftarrow \{r\in \mathcal V \mid r.\mbox{\emph{light}}=\mbox{\emph{Adjusting}}\}$\;
\nlset{\segmentcolor{32}\ \interiorcolor{32}\ \vertexcolor{32}}\If{\emph{\textbf{(}}$|\mathcal V|=3$\emph{\textbf{ and }}the internal angle of $\mathcal H$ at $r^*.\mbox{position}$ is acute\emph{\textbf{)}}\\
\nlset{\segmentcolor{33}\ \interiorcolor{33}\ \vertexcolor{33}}\emph{\textbf{or (}}$|\mathcal W|>1$\emph{\textbf{ and }}$r^*.\mbox{position}$ is a non-degenerate vertex of $\mathcal H$\emph{\textbf{)}}\\
\nlset{\segmentcolor{34}\ \interiorcolor{34}\ \vertexcolor{34}}\emph{\textbf{or }}$\mathcal W=\varnothing$}{
\nlset{\segmentcolor{35}\ \interiorcolor{35}\ \vertexcolor{35}}$r^*.\mbox{\emph{light}} \longleftarrow \mbox{\emph{External}}$
	}
}
}
\end{algorithm}

\setcounter{algocf}{1}
\SetAlgoCaptionSeparator{: Contain (continued).}
\begin{algorithm}
\caption{}
\DontPrintSemicolon
\nlset{\interiorcolor{36}}\ElseIf{$\forall r\in\mathcal V,\, r.\mbox{light}\neq\mbox{Adjusting}$}{
\nlset{\interiorcolor{37}}	$\mathcal P' \longleftarrow \{r.\mbox{\emph{position}}\mid r\in \mathcal V \wedge r.\mbox{\emph{light}}=\mbox{\emph{Off}}\}$\;
\nlset{\interiorcolor{38}}	$\mathcal H' \longleftarrow$ convex hull of $\mathcal P'$\;
\nlset{\interiorcolor{39}}	$\partial \mathcal H'\longleftarrow$ boundary of $\mathcal H'$\;
\nlset{\interiorcolor{40}}	\If{$|\mathcal P'|=1$}{
\nlset{\interiorcolor{41}}Move to a closest midpoint of a connected component of $\partial\mathcal H\setminus \mathcal P$}
\nlset{\interiorcolor{42}}	\ElseIf{$|\mathcal P'|=2$}{
\nlset{\interiorcolor{43}}		$\ell\longleftarrow$ line containing $\mathcal H'$\;
\nlset{\interiorcolor{44}}		$\mathcal A\longleftarrow$ right angle with axis of symmetry $\ell$ such that\\
$\qquad\quad\, \mathcal A\cap \mathcal H'=\{r^*.\mbox{\emph{position}}\}$\;
\nlset{\interiorcolor{45}}		Move to any point of $(\mathcal A\cap\partial\mathcal H)\setminus \mathcal P$\;
	}
\nlset{\interiorcolor{46}}	\ElseIf{$r^*.\mbox{position}$ is a non-degenerate vertex of $\mathcal H'$}{
\nlset{\interiorcolor{47}}		$\mathcal A\longleftarrow$ internal angle of $\mathcal H'$ whose vertex is $r^*.\mbox{\emph{position}}$\;
\nlset{\interiorcolor{48}}		$\alpha\longleftarrow$ measure of $\mathcal A$\;
\nlset{\interiorcolor{49}}		$\ell\longleftarrow$ axis of symmetry of $\mathcal A$\;
\nlset{\interiorcolor{50}}		\lIf{$\alpha\leqslant \pi/2$}{$\alpha'\longleftarrow\alpha$\;}
\nlset{\interiorcolor{51}}		\lElse{$\alpha'\longleftarrow \pi-\alpha$\;}
\nlset{\interiorcolor{52}}		$\mathcal A'\longleftarrow$ angle of measure $\alpha'$ with axis of symmetry $\ell$ such that\\
$\qquad\ \ \ \ \mathcal A'\cap \mathcal H'=\{r^*.\mbox{\emph{position}}\}$\;
\nlset{\interiorcolor{53}}		$\mathcal E\longleftarrow \{p\in \partial\mathcal H\mid \exists a,b\in\mathcal P \setminus \mathcal A,\, p\in ab\}$\;
\nlset{\interiorcolor{54}}		\lIf{$(\mathcal A'\cap\mathcal E)\setminus \mathcal P\neq\varnothing$}{Move to any point of $(\mathcal A'\cap\mathcal E)\setminus \mathcal P$}
	}
}
\end{algorithm}
\SetAlgoCaptionSeparator{:}

Recall that we denote by $\mathcal H(t)$ the convex hull of the positions of all the robots at time $t\in\mathbb N$. In this section we also denote by $\mathcal H'(t)$ the convex hull of the positions of the internal robots at time $t\in\mathbb N$. Note that the ``global'' notions of $\mathcal H$ and $\mathcal H'$ may differ from the ones computed by the robots executing Algorithm~\ref{alg2}, because a robot may be unable to see the positions of all the other robots in the swarm. In the following discussion, when referring to $\mathcal H$ and $\mathcal H'$, we will typically mean the ``global'' ones, unless we explicitly state otherwise.

We first describe the interior depletion phase, starting from a non-collinear initial configuration. To begin with, all the robots' lights are set to \emph{Off}. As soon as an external robot is activated, it sets its own light to \emph{External} (lines~34, 35) and does not move as long as it can still see robots whose light is \emph{Off} (lines~26, 27). Note that a robot $r$ that occupies a vertex of $\mathcal H'$ eventually becomes aware of it, by looking at the convex hull of the visible robots whose lights are \emph{Off}. These may not all be internal robots, because perhaps not all external robots have been activated yet, but eventually $r$ gets to see a good-enough approximation of a ``neighborhood'' of $\mathcal H'$, and it realizes it occupies one of its vertices.

So, when a robot understands that it lies on a vertex of $\mathcal H'$, it moves toward the boundary of $\mathcal H$, part of which is also identifiable by $r$. We distinguish three cases.
\begin{enumerate}
\item If $r$ realizes it is the only internal robot, it moves toward the midpoint of an edge of the convex hull (line~41). To avoid bouncing back and forth at different turns, it always chooses the closest of such midpoints.
\item If $r$ realizes that $\mathcal H'$ is a line segment and it occupies one endpoint of it, it moves like in Figure~\ref{fig3:a}. That is, it moves to the boundary of $\mathcal H$, while remaining within a right angle oriented away from $\mathcal H'$ (lines~43--45).

\begin{figure}[h]
\centering
\subfigure[Case with collinear internal robots]{\label{fig3:a}\includegraphics[scale=.75]{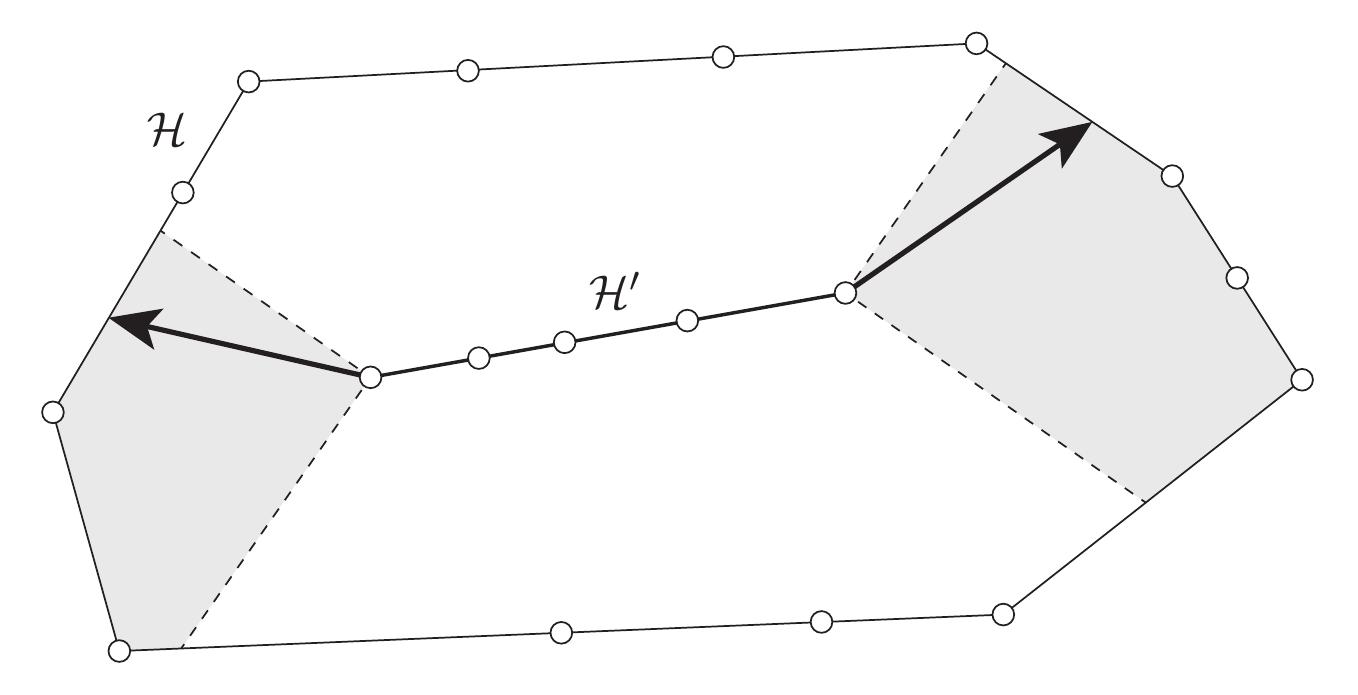}}
\subfigure[General case]{\label{fig3:b}\includegraphics[scale=.75]{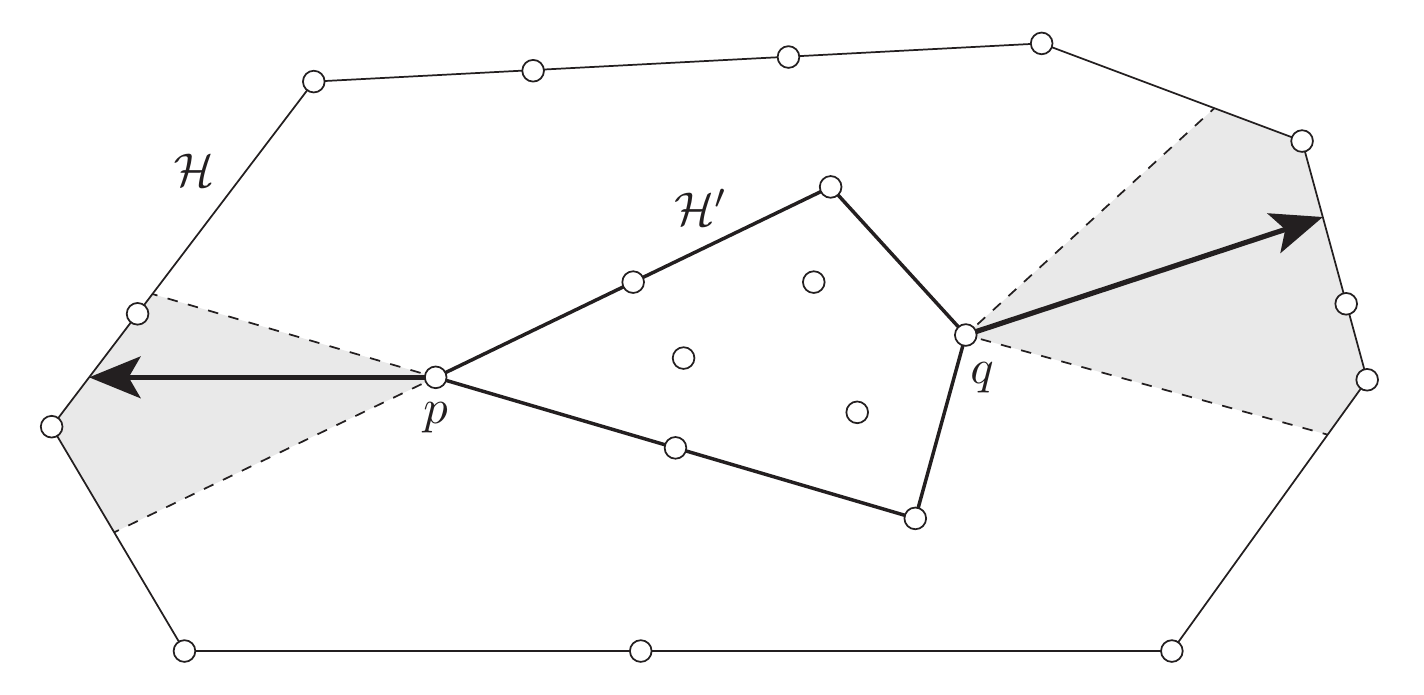}}
\caption{Interior depletion phase}
\label{fig3}
\end{figure}

\item Finally, if $r$ ``believes'' that $\mathcal H'$ is a non-degenerate polygon and that it lies on one of its vertices, it moves as in Figure~\ref{fig3:b} (lines~47--54). Remember that $r$ may believe so even if $\mathcal H'$ is actually degenerate, because some external robots may still be \emph{Off}. However, $r$ gets an approximation of $\mathcal H'$, which we call $\mathcal H'_r$, and it knows it lies on a vertex of $\mathcal H'_r$, implying that it also lies on a vertex of the ``real'' $\mathcal H'$. Now, if the internal angle of $\mathcal H'_r$ at $r(t)$ is acute, $r$ moves as the robot in $p$ in Figure~\ref{fig3:b}: it moves to the boundary of $\mathcal H$ while remaining between the extensions of its two incident edges of $\mathcal H'_r$. Otherwise, if the angle is not acute, $r$ moves as the robot in $q$ in Figure~\ref{fig3:b}: it moves to the boundary of $\mathcal H$ while staying between the two perpendiculars to its incident edges of $\mathcal H'_r$. Moreover, $r$ actually performs the move only if it is sure that its destination point lies on the boundary of the ``real'' $\mathcal H$. For this reason, it has to check if the destination point computed as described above lies on a completely-visible edge of the observed convex hull whose endpoints are both set to \emph{External} (line~53). For instance, in Figure~\ref{fig13}, the robot in $p$ cannot move to the gray area even if the robots in $a$ and $b$ are set to \emph{External}, because the robot in $q$ prevents the one in $p$ from seeing the whole edge $ab$. On the other hand, the robot in $q$ can move to its own gray area, provided that $a$ and $b$ are set to \emph{External}. Indeed, the robot in $q$ can see all of $ab$, and it is therefore sure that it is an edge of the ``real'' convex hull.
\end{enumerate}

\begin{figure}[h]
\centering
\includegraphics[scale=.9]{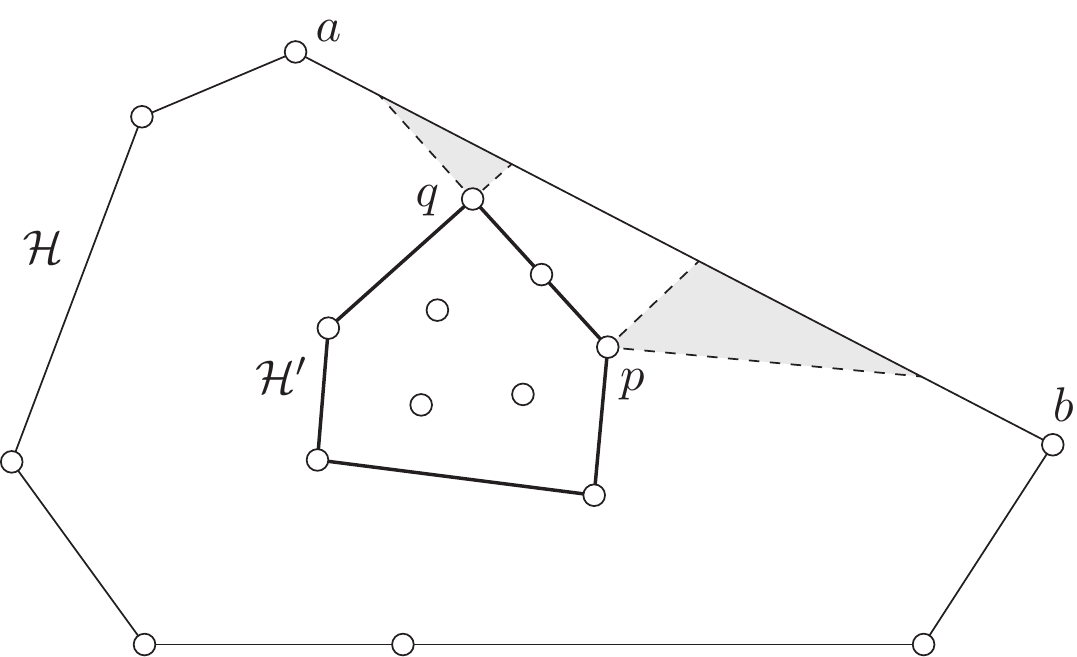}
\caption{The robot in $p$ cannot move even if $a$ and $b$ are set to \emph{External}, because $q$ may be hiding some other external robots, and $ab$ may not be an edge of the convex hull}
\label{fig13}
\end{figure}

Now to the vertex adjustments phase. When a robot lies at a vertex of $\mathcal H$ and it sees only robots whose light is set to \emph{External}, it makes the ``default move'' of Figure~\ref{fig1:b}, where $a$ and $b$ are the locations of its two neighbors on $\mathcal H$ (line~29). Moreover, while doing so it also sets its light to the third value, \emph{Adjusting}, as a ``self-reminder'' (line~28). So, when it is activated again, it knows it has already adjusted its position, and it terminates, after reverting its light to \emph{External} (lines~20--25). This way we make sure that each vertex robot adjusts its position exactly once, and we ensure termination. When the adjustment is done, the robots at $a$ and $b$ are guaranteed to occupy vertices of $\mathcal H$, instead of lying in the middle of an edge. So, each external robot becomes a vertex robot at some point, then it adjusts its position while remaining a vertex, possibly making its adjacent robots become vertices as well, and it terminates. When all robots have terminated, the configuration is strictly convex, and therefore \Obst is solved.

Finally, the segment breaking phase deals with the special case in which all robots are initially collinear. Let robots $r$ and $s$ be the two endpoints of $\mathcal H$: as soon as one of them is activated (possibly both), it sets its light to \emph{Adjusting}, moves orthogonally to $\mathcal H$, and then waits (lines~11, 12). Meanwhile, the other robots do not do anything until some conditions are met (lines~31--34). If only $r$ moves, $s$ realizes it (line~32) and sets its own light to \emph{External} (and vice versa). If both $r$ and $s$ move together, some other robot realizes that it is a non-degenerate vertex of the convex hull and that it can see both $r$ and $s$ set to \emph{Adjusting} (line~33): in this case, it sets itself to \emph{External}. When $r$ or $s$ sees some robots set to \emph{External}, it finally sets itself to \emph{External}, as well (lines~22, 23). Additionally, it may terminate, provided that neither of its neighboring robots on the convex hull's boundary has still its light set to \emph{Off} (line~24) and that it recognizes no robots as internal (line~25). This is to force $r$ and $s$ to make at least one default move in the unfortunate case that a third external robot is found between them after their initial move, or gets there during the interior depletion phase (refer to the complete discussion in Section~\ref{s:segbreak}). After this is done, the execution transitions seamlessly into one of the general cases.

If $n\leqslant 3$ this is not sufficient. Suppose first that $n=3$. Then, $r$ and $s$ may move in such a way that the configuration remains centrally symmetric, with the middle robot $q$ obstructing $r$ and $s$. However, after moving once, $r$ and $s$ become \emph{External} and terminate (lines~8, 9). Meanwhile $q$ waits until it sees both $r$ and $s$ set to \emph{External}, and finally it moves orthogonally to $\mathcal H$ (lines~13--16), thus solving \Obst also in this special case.

If $n=2$, each robot moves once (lines~11, 12), and then it detects a situation in which it can safely terminate (lines~6--9).

\subsection{Correctness of Algorithm~\ref{alg2}}

\subsubsection{Interior Depletion Phase}

We first prove that no collisions occur during the interior depletion phase, and then that the phase itself eventually terminates, with all the robots becoming external. In this section we will assume that the robots are not initially collinear. The collinear case will be discussed in Section~\ref{s:segbreak}, and it will be shown that is seamlessly transitions into one of the other cases.

It is easy to observe that, during the interior depletion phase, all external robots keep seeing (internal) robots whose lights are set to \emph{Off}, and therefore none of them moves. On the other hand, no internal robot moves outside of the convex hull.

\begin{observation}\label{obs:depletion}
If there are internal robots at time $t$, no external robot moves, and $\mathcal H(t)=\mathcal H(t+1)$.
\end{observation}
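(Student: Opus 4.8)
The plan is to establish the two assertions of the observation separately: first that no external robot moves between time $t$ and $t+1$, and then that the convex hull is unchanged. Throughout I would work under the standing assumptions of this subsection, namely that the robots are not initially collinear and that we are in the interior depletion phase. A preliminary remark I would record is that in this phase no robot carries the color \emph{Adjusting}, and every internal robot still has its light set to \emph{Off}: indeed, the only instruction reachable in this phase that sets a light to \emph{External} is line~35, which sits inside the branch guarded by $r^*.\mbox{position}\in\partial\mathcal H$ (line~17), so a strictly interior robot (which falls through to the branch at line~36) never changes its light.

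For the first assertion, the key geometric fact I would prove is that \emph{every external robot sees at least one internal robot}. Fix an external robot $r$, lying on $\partial\mathcal H(t)$, and an arbitrary internal robot $q$, lying in the relative interior of $\mathcal H(t)$. Since the interior of a convex body is convex, and $q$ is interior while $r$ is in the closure, every point strictly between $r(t)$ and $q(t)$ is interior; hence any robot obstructing the segment $r(t)q(t)$ must itself be internal. Either $r$ sees $q$ directly, or it sees the obstructing robot closest to it along that segment, which is internal; in both cases $r$ sees an internal robot, whose light is \emph{Off}. Consequently the guard $\forall s\in\mathcal V,\, s.\mbox{light}=\mbox{External}$ of lines~26--27 fails for $r$, so $r$ never reaches the only movement instruction available to a boundary robot, the default move of line~29. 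Inspecting the remaining sub-branches of the block at line~17 (lines~20--25 and lines~30--35) confirms that none of them contains a \emph{Move}, so no external robot moves.

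For the second assertion I would argue containment in both directions. An internal robot that moves does so via line~41, line~45, or line~54, always to a point lying on the boundary of the convex hull it computes from its own (possibly partial) view; since the set of visible positions is a subset of all positions, this observed hull is contained in the global $\mathcal H(t)$, and by convexity the entire displacement segment stays inside $\mathcal H(t)$. Combined with the fact that external robots do not move, this gives $\mathcal H(t+1)\subseteq\mathcal H(t)$. For the reverse inclusion, the vertices of $\mathcal H(t)$ are occupied by vertex robots, which are external and hence immobile by the first assertion; these vertices are therefore still occupied at time $t+1$, so $\mathcal H(t)=\mathrm{conv}(\text{its vertices})\subseteq\mathcal H(t+1)$, yielding equality.

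The main obstacle I anticipate is the first assertion, and specifically the care needed in separating a robot's \emph{local} observed hull from the \emph{global} hull $\mathcal H$: the algorithm's guards are evaluated on the partial view, so I must ensure that "$r$ sees an internal \emph{Off} robot" suffices to defeat the line~26--27 guard irrespective of what $r$ can or cannot see elsewhere, and that the destinations computed during interior depletion, although defined through local quantities, provably fall inside the global $\mathcal H(t)$. The blocking argument above settles the former, since it concerns only the segment $r(t)q(t)$, while the inclusion of every observed hull in the global hull settles the latter.
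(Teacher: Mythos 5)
Your proposal is correct and follows essentially the same route as the paper, which justifies this observation with exactly the two facts you prove: every external robot keeps seeing (internal) robots whose lights are set to \emph{Off} --- so the guards enabling movement at lines~26--27 never hold --- and no internal robot ever moves outside the convex hull. Your write-up merely makes the paper's one-line remark rigorous (the nearest-obstructor argument showing the blocking robot is itself internal, and the two-way hull inclusion via the immobile vertex robots), with no gap.
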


\begin{lemma}\label{l:coll1}
If $r$ and $s$ are two internal robots at time $t$, then
$$(r(t+1)-r(t))\bullet(s(t)-r(t))\leqslant 0.$$
\end{lemma}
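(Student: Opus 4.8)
The plan is to read the inequality as a collision-avoidance invariant: writing $d = r(t+1)-r(t)$ for the displacement of $r$ and $e = s(t)-r(t)$ for the direction from $r$ toward the other internal robot $s$, the claim $d\bullet e \leqslant 0$ says precisely that $r$ moves in a direction making a non-acute angle with the direction toward $s$, and hence never ``toward'' $s$. I would first fix the setting. Throughout the interior depletion phase there are internal robots, so by Observation~\ref{obs:depletion} no external robot moves, and every internal robot still carries the light \emph{Off}. Moreover, since the internal robots lie in the open interior of $\mathcal H$ and the interior of a convex set is convex, their global convex hull $\mathcal H'$ is contained in $\mathrm{int}(\mathcal H)$; in particular no robot sitting on $\partial\mathcal H$ (i.e.\ no external robot) lies on $\mathcal H'$. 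If $r$ does not move at time $t$ the inequality is trivial, so I may assume $r$ executes one of the moves of lines~45 or~52--54, which is exactly when $r$ occupies a vertex (or segment endpoint) of its locally perceived hull $\mathcal H'_r$, the convex hull of the \emph{Off} robots it currently sees.

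The key geometric step, which I expect to be the main obstacle, is to show that the direction $e$ toward \emph{every} internal robot $s$ lies in the tangent cone $\mathcal A_r$ of $\mathcal H'_r$ at $r$ (equivalently, the internal angle $\mathcal A$ that the algorithm computes on line~47, or the ray toward the remaining \emph{Off} robots in the collinear case). To see this I would trace the ray from $r(t)$ toward $s(t)$ and consider the nearest robot $s'$ that $r$ sees along it: since $s'$ lies on the segment $r(t)s(t)\subseteq\mathcal H'\subseteq\mathrm{int}(\mathcal H)$, it is an internal robot, hence \emph{Off} and visible to $r$, so $s'\in\mathcal H'_r$. As $s'$ has the same direction from $r$ as $s$, the direction $e$ lies in the tangent cone $\mathcal A_r$. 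The same observation also disposes of line~41 under the hypotheses of the lemma: if $r$ and $s$ are both internal, the robot $s'$ is a second \emph{Off} robot visible to $r$, so $|\mathcal P'|\geqslant 2$ and the line~41 branch is never taken.

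It then remains to compare the move direction $d$ with $e$ by a pure angle count. In every relevant branch the destination of $r$ lies in an angular region that opens \emph{outward}, away from $\mathcal H'$: for the segment case (lines~43--45) it is the right angle $\mathcal A$ bisected by the outward direction of the supporting line, and for the polygonal case (lines~47--54) it is the cone $\mathcal A'$ of measure $\alpha'$ bisected by the outward axis $-\hat{\ell}$ of the internal angle, where $\alpha'=\alpha$ if the internal angle $\alpha$ is acute and $\alpha'=\pi-\alpha$ otherwise. Since $r(t+1)-r(t)$ is a non-negative multiple of $(\text{destination}-r(t))$, the direction $d$ lies in this outward cone. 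Writing $\angle(\cdot,\cdot)$ for the unsigned angular distance (a metric, so $\angle(d,e)\geqslant\angle(d,\hat{\ell})-\angle(e,\hat{\ell})$), I have $\angle(e,\hat{\ell})\leqslant\alpha/2$ because $e\in\mathcal A$, and $\angle(d,\hat{\ell})\geqslant\pi-\alpha'/2$ because $d$ is within $\alpha'/2$ of $-\hat{\ell}$; hence $\angle(d,e)\geqslant\pi-(\alpha+\alpha')/2$. In the acute case this is $\pi-\alpha\geqslant\pi/2$, and in the non-acute case it equals $\pi/2$ exactly; the segment case is the degenerate instance where $e$ points along $+\hat{\ell}$ and $d$ lies within $\pi/4$ of $-\hat{\ell}$, giving $\angle(d,e)\geqslant 3\pi/4$. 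In all cases $\angle(d,e)\geqslant\pi/2$, so $d\bullet e\leqslant 0$, as required. The only delicate point, and the part I would write most carefully, is the visibility/blocker argument of the second paragraph, since it is what guarantees that the \emph{globally} defined direction $e$ is controlled by the robot's \emph{locally} perceived hull $\mathcal H'_r$.
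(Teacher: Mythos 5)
Your proof is correct and takes essentially the same route as the paper's: the paper picks the line $\ell$ through $r(t)$ orthogonal to the displacement and observes that, by construction of the movement regions, $\mathcal H'(t)$ lies in the closed half-plane bounded by $\ell$ on the side opposite $r(t+1)$ --- which is precisely your statement that $\angle(d,e)\geqslant \pi/2$ in every branch. Your extra care (the blocker argument showing the direction to any internal robot lies in the tangent cone of the locally perceived hull, the exclusion of the line~41 branch, and the per-case angle count for the segment, acute, and non-acute cases) just makes explicit what the paper compresses into ``by construction'' and its parenthetical remark that the $\mathcal H'$ computed by $r$ may differ from the real one.
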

\begin{proof}
If $r$ is not activated at time $t$, or it is activated but it does not move, then the left-hand side is zero, and therefore the inequality holds. Suppose now that $r$ moves by a positive amount, so $r(t+1)-r(t)$ is not the null vector. Let $\ell$ be the line through $r(t)$ that is orthogonal to the segment $r(t)r(t+1)$. By construction, $r$ moves in such a way that $r(t+1)$ lies in the open half-plane bounded by $\ell$ that does not contain $\mathcal H'(t)$ (note that this holds \emph{a fortiori} also if some external robots have not set their lights to \emph{External} yet, and therefore the $\mathcal H'$ computed by $r$ is larger than the real one). Since $s(t)\in \mathcal H'(t)$, $s(t)$ lies on $\ell$ or in the half-plane bounded by $\ell$ that does not contain $r(t+1)$. This is equivalent to saying that the dot product between $r(t+1)-r(t)$ and $s(t)-r(t)$ is not positive.
\end{proof}

\begin{lemma}\label{l:coll2}
As long as there are internal robots, no collisions occur.
\end{lemma}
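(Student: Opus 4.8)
The plan is to argue by induction on the round number $t$, taking as inductive hypothesis that all robots occupy distinct positions at time $t$, and proving that the same holds at time $t+1$. Since we are within the interior depletion phase, Observation~\ref{obs:depletion} guarantees that every external robot stays put, so $\mathcal H(t)=\mathcal H(t+1)$ and every external robot remains on $\partial\mathcal H(t+1)$. I would then split the unordered pairs of robots into three types and treat each separately. A pair of external robots is trivially fine, since both are motionless and were distinct at time $t$.

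For a pair of internal robots $r$ and $s$, I would combine Lemma~\ref{l:coll1} with the standard monotonicity-of-distance computation. Writing $\Delta_r=r(t+1)-r(t)$, $\Delta_s=s(t+1)-s(t)$, and $d=r(t)-s(t)$, Lemma~\ref{l:coll1} applied to the pair $(r,s)$ gives $\Delta_r\bullet d\geqslant 0$, while applied to $(s,r)$ it gives $\Delta_s\bullet d\leqslant 0$; hence $(\Delta_r-\Delta_s)\bullet d\geqslant 0$. Expanding,
\[
\norm{r(t+1)-s(t+1)}^2=\norm{\Delta_r-\Delta_s}^2+2(\Delta_r-\Delta_s)\bullet d+\norm{d}^2\geqslant \norm{d}^2=\norm{r(t)-s(t)}^2>0,
\]
where the final strict inequality is the inductive hypothesis. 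Thus the distance between any two internal robots is non-decreasing, so in particular they cannot coincide at time $t+1$. Since in \SSynch a cycle is executed instantly, it suffices to compare positions at consecutive integer times; the possibility of an early stop in the \NRigid model is already absorbed into the actual position $r(t+1)$, for which Lemma~\ref{l:coll1} holds.

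For a pair consisting of an internal robot $r$ and an external robot $s$, I would exploit convexity. The robot $s$ is fixed on $\partial\mathcal H$, whereas $r$ starts strictly inside $\mathcal H$ (non-degenerate, as we assume a non-collinear configuration) and moves along a segment toward a point of $\partial\mathcal H$. For a convex body, the open segment joining an interior point to a boundary point lies entirely in the interior; therefore, if $r$ is stopped before reaching its destination, $r(t+1)$ is interior and cannot equal the boundary point $s(t+1)$. If instead $r$ reaches its destination, then by inspection of lines~41,~45, and~54 the destination is always chosen in $\partial\mathcal H\setminus\mathcal P$, i.e.\ away from every visible robot, so again $r(t+1)\neq s(t+1)$.

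The delicate point---and the step I expect to be the main obstacle---is that $r$'s set $\mathcal P$ contains only the robots $r$ can see, so a priori the destination could coincide with an external robot that is hidden from $r$. I would close this gap using the safeguard encoded by line~53 (illustrated in Figure~\ref{fig13}): $r$ commits to a destination on an edge $ab$ of $\mathcal H$ only when it can see that \emph{entire} edge with both endpoints set to \emph{External}. Full visibility of $ab$ means that no robot lies between $r$ and any point of $ab$; in particular, a robot sitting exactly at the destination would be visible to $r$, would therefore belong to $\mathcal P$, and would be excluded from the candidate destinations. Consequently the destination is genuinely unoccupied, the internal--external case is settled, and the induction is complete.
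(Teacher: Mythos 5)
Your proof is correct and follows essentially the same route as the paper: the heart of both arguments is the symmetric application of Lemma~\ref{l:coll1} to a pair of internal robots followed by summing the two inequalities, which you package as monotonicity of pairwise distances while the paper phrases it as a contradiction at a first collision, and your internal--external case is a more explicit version of the paper's one-line ``by construction'' remark. One small caveat: the line~53 safeguard you invoke against hidden robots only protects the move of line~54, so for the moves of lines~41 and~45 you would still need the (easy) separate observation that any robot blocking the segment from an internal robot to its boundary destination would itself be a visible \emph{Off} robot strictly inside the observed hull, contradicting $|\mathcal P'|\leqslant 2$ in those cases.
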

\begin{proof}
If there are internal robots, every external robot sees robots whose light is set to \emph{Off}, and hence it does not move. By construction, the internal robots avoid moving on top of external robots, and therefore there can be no collision involving external robots.

Suppose by contradiction that two robots $r$ and $s$ that are internal at time $t$ collide for the first time at $t+1$, and therefore $r(t+1)=s(t+1)=p$. By Lemma~\ref{l:coll1} applied to $r$ and $s$, we have
\begin{equation}\label{eq:coll1}
(p-r(t))\bullet(s(t)-r(t))\leqslant 0.
\end{equation}
Applying Lemma~\ref{l:coll1} again with $r$ and $s$ inverted, we also have
\begin{equation}\label{eq:coll2}
(p-s(t))\bullet(r(t)-s(t))\leqslant 0.
\end{equation}
Adding~\ref{eq:coll1} and~\ref{eq:coll2} together and doing some algebraic manipulations, we obtain
$$(p-r(t))\bullet(s(t)-r(t))+(p-s(t))\bullet(r(t)-s(t))\leqslant 0,$$
$$(s(t)-r(t))\bullet((p-r(t))-(p-s(t)))\leqslant 0,$$
$$(s(t)-r(t))\bullet(s(t)-r(t))\leqslant 0.$$
The latter is equivalent to $\norm{s(t)-r(t)}\leqslant 0$, implying that $r(t)=s(t)$. This contradicts the fact that $r$ and $s$ collide for the first time at $t+1$.
\end{proof}

We still have to prove that the interior depletion phase terminates, that is, eventually all robots become external. Due to Observation~\ref{obs:depletion}, when a robot becomes external, it stops moving and remains external, at least as long as there are other internal robots. Thus, if by contradiction this phase does not terminate, the set of internal robots reaches a non-empty minimum, and from that time on no new robot becomes external. After possibly some more turns, say at time $T\in \mathbb N$, all external robots have been activated and have set their lights to \emph{External}, and hence no robot changes its light any more.

In the following lemmas, we will show that these assumptions on $T$ yield a contradiction. We will prove that, if $\mathcal H'(T)$ is a non-degenerate polygon, then either its area or its diameter will grow unboundedly. Therefore, at some point in time, $\mathcal H'$ will not be a subset of $\mathcal H$ any more. (The analysis when $\mathcal H'(T)$ is a degenerate polygon is easy, and it will be carried out in the proof of Lemma~\ref{l:intfinal}.)

Recall that, due to line~50 of Algorithm~\ref{alg2}, when a robot computes its destination, it remains within the extensions of its incident edges of $\mathcal H'(t)$. Hence, referring to Figure~\ref{fig4}, it is easy to observe the following.

\begin{observation}\label{o:edges}
Let robots $r$ and $s$ lie at adjacent vertices of $\mathcal H'(t)$ at time $t\geqslant T$, and let the area of $\mathcal H'(t)$ be positive. Then, $r(t+1)$ and $s(t+1)$ lie on the same side of the line through $r(t)$ and $s(t)$ (or possibly on the line itself).
\end{observation}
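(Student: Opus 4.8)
The plan is to reduce the claim to a statement about a single half-plane, and then verify it for each of $r$ and $s$ separately. Let $L$ be the line through $r(t)$ and $s(t)$. Since $r$ and $s$ sit at adjacent vertices of $\mathcal H'(t)$ and this hull has positive area, the segment $r(t)s(t)$ is an edge of $\mathcal H'(t)$, so the whole of $\mathcal H'(t)$ lies in one of the two closed half-planes determined by $L$; call it $H^+$, and let $H^-$ be the other closed half-plane, so that $L = H^+\cap H^-$. Both $r(t)$ and $s(t)$ lie on $L$. The observation is then equivalent to showing that $r(t+1)\in H^-$ and $s(t+1)\in H^-$, i.e., that each robot moves into (the closure of) the exterior side of the edge.

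The key step is to control the movement cone. For robot $r$, let $C_r$ be the tangent cone at $r(t)$ of its locally computed hull $\mathcal H'_r$ (the cone of directions pointing into $\mathcal H'_r$); this is exactly the internal angle $\mathcal A$ of measure $\alpha$ computed in lines~47--48 of Algorithm~\ref{alg2}. Because $r$ only uses robots whose light is \emph{Off}, and at time $t\geqslant T$ these are precisely the internal robots, we have $\mathcal H'_r\subseteq\mathcal H'(t)\subseteq H^+$; since moreover $r(t)\in L$, every direction in $C_r$ points into $H^+$. I would then observe that the movement cone $\mathcal A'$ actually used by $r$ (lines~50--52) is always contained in the vertical angle $-C_r$: in the non-obtuse case $\alpha\leqslant\pi/2$ the algorithm sets $\alpha'=\alpha$, so $\mathcal A'$ is exactly the vertical angle of $\mathcal A$ (same bisector line $\ell$, opposite side, same measure); in the obtuse case $\alpha>\pi/2$ it sets $\alpha'=\pi-\alpha<\alpha$, so $\mathcal A'$ is a sub-cone of that vertical angle, concentric with it about $\ell$ but of strictly smaller measure. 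In either case $\mathcal A'\subseteq -C_r$, and every direction in $-C_r$ points into $H^-$. Since $r(t+1)=r(t)+d$ for some $d\in\mathcal A'$, we obtain $r(t+1)\in H^-$. The identical argument applied to $s$ --- which is also a vertex of $\mathcal H'(t)\subseteq H^+$ and shares the same edge $L$ --- yields $s(t+1)\in H^-$, and the two points therefore lie on the same side of $L$ (or on $L$).

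I expect the main obstacle to be the bookkeeping in the obtuse case, where $\mathcal A'$ is bounded by the perpendiculars to the incident edges rather than by their extensions (the robot $q$ of Figure~\ref{fig3:b}): one must confirm that this narrower cone is still concentric with the vertical angle $-C_r$ about the bisector $\ell$, and hence contained in it, which is exactly where the inequality $\pi-\alpha<\alpha$ for $\alpha>\pi/2$ is used. A secondary subtlety is the gap between the global hull $\mathcal H'(t)$ and the locally computed $\mathcal H'_r$; the argument sidesteps it by defining $C_r$ from $\mathcal H'_r$ and using only the containment $\mathcal H'_r\subseteq\mathcal H'(t)\subseteq H^+$, which holds because after time $T$ the \emph{Off} robots are exactly the internal robots. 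Finally, the degenerate movement rules (a lone internal robot in line~41, or a collinear $\mathcal H'$ in lines~43--45) fall outside the hypotheses of the observation, as they require either that $\mathcal H'$ has no edge or that its area is zero, so they need not be treated.
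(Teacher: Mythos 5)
Your proof is correct and follows essentially the same route as the paper, which justifies the observation by noting (via line~50 and Figure~\ref{fig4}) that a robot's movement region stays within the extensions of its incident edges of $\mathcal H'$, i.e., within the vertical angle of its internal angle, and hence on the outer side of each incident edge line. Your writeup merely makes this explicit --- the containment $\mathcal A'\subseteq -C_r$ (using $\pi-\alpha<\alpha$ in the obtuse case), the inclusion $\mathcal H'_r\subseteq\mathcal H'(t)\subseteq H^+$, and the handling of stopped or inactive robots via the closed half-plane --- which is a faithful elaboration rather than a different argument.
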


\begin{lemma}\label{l:intgrow}
If $t\geqslant T$ and the area of $\mathcal H'(t)$ is positive, then $\mathcal H'(t)\subseteq\mathcal H'(t+1)$.
\end{lemma}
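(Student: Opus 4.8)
The plan is to prove containment via support functions. Recall the standard fact that for two compact convex sets $A,B\subseteq\mathbb R^2$ one has $A\subseteq B$ if and only if their support functions satisfy $h_A(n)\leqslant h_B(n)$ for every direction $n$, where $h_S(n)=\max_{x\in S}\,n\bullet x$. Since both $\mathcal H'(t)$ and $\mathcal H'(t+1)$ are compact and convex, it therefore suffices to fix an arbitrary unit vector $n$ and show that $h_{\mathcal H'(t)}(n)\leqslant h_{\mathcal H'(t+1)}(n)$. This reformulation is convenient because the maximum of $n\bullet x$ over each hull is attained at an internal robot's position, so the whole question reduces to tracking a single robot at a time.

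First I would fix $n$ and let $v$ be a vertex of $\mathcal H'(t)$ at which $n\bullet x$ is maximized, so $h_{\mathcal H'(t)}(n)=n\bullet v(t)$. By definition of an extreme vertex, $n$ lies in the \emph{normal cone} of $\mathcal H'(t)$ at $v$, i.e.\ the convex cone bounded by the outward normals of the two edges of $\mathcal H'(t)$ incident to $v$ (if $n$ is extremal along a whole edge, either endpoint may be taken, and $n$ sits on the boundary of its normal cone). Since $v$ is an internal robot and its new position $v(t+1)$ is one of the points whose convex hull is $\mathcal H'(t+1)$, we have $v(t+1)\in\mathcal H'(t+1)$, whence $h_{\mathcal H'(t+1)}(n)\geqslant n\bullet v(t+1)$. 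Thus it is enough to establish the single inequality
\begin{equation}\label{e:extremeprogress}
n\bullet\bigl(v(t+1)-v(t)\bigr)\geqslant 0
\end{equation}
for every $n$ in the normal cone of $\mathcal H'(t)$ at $v$; chaining the displayed inequalities then yields $h_{\mathcal H'(t)}(n)=n\bullet v(t)\leqslant n\bullet v(t+1)\leqslant h_{\mathcal H'(t+1)}(n)$, as required.

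The heart of the argument, and the step I expect to be the main obstacle, is proving \eqref{e:extremeprogress} from the geometry of the move. Write $d=v(t+1)-v(t)$ for the displacement; if the robot is not activated or its destination check (line~54) fails, then $d=0$ and \eqref{e:extremeprogress} is trivial, so assume $d\neq 0$. Let $u$ and $w$ denote the two edge directions of $\mathcal H'(t)$ emanating from $v(t)$ toward its neighbours; membership of $n$ in the normal cone means exactly $n\bullet u\leqslant 0$ and $n\bullet w\leqslant 0$. Here the acute/non-acute case split built into the algorithm becomes essential, and I would treat the two cases separately. In the acute case the robot moves within the cone spanned by the edge \emph{extensions} $-u$ and $-w$, so $d=-\lambda u-\mu w$ with $\lambda,\mu\geqslant 0$, giving $n\bullet d=\lambda\,(-\,n\bullet u)+\mu\,(-\,n\bullet w)\geqslant 0$. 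In the non-acute case the movement cone is bounded by the perpendiculars to the incident edges, i.e.\ it coincides with the normal cone itself; since the normal cone at a vertex with interior angle $\alpha\geqslant\pi/2$ has aperture $\pi-\alpha\leqslant\pi/2$, any two of its directions (in particular $n$ and $d$) subtend an angle at most $\pi/2$ and hence satisfy $n\bullet d\geqslant 0$. In both cases partial (non-rigid) moves only shorten $d$ along a ray of the same cone, so the sign of $n\bullet d$ is unaffected. The constraint that the move stays in these cones is precisely what is recorded in Observation~\ref{o:edges} and line~50 of Algorithm~\ref{alg2}; verifying that the algorithm's stated move indeed lands in the claimed cone (relative to the \emph{real} $\mathcal H'(t)$, whose incident edges the vertex robot correctly identifies) is the one place where care is needed, but it is exactly what the case distinction was designed to guarantee. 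Establishing \eqref{e:extremeprogress} for all admissible $n$ completes the proof.
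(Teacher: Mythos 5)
Your proof is correct, but it takes a genuinely different (dual) route from the paper's. The paper argues in primal terms: using Observation~\ref{o:edges}, it notes that for each edge of $\mathcal H'(t)$ both endpoint robots end up in the outer closed half-plane of that edge's line, deduces that $\mathcal H'(t)$ is contained in the (possibly non-convex) polygon $\mathcal P$ spanned by the time-$(t+1)$ positions of the old vertex robots taken in their cyclic order, and concludes because $\mathcal H'(t+1)$ is the convex hull of $\mathcal P$ (by definition of $T$, none of those robots becomes external). You instead dualize: you reduce containment to a support-function inequality and prove, for every $n$ in the normal cone of $\mathcal H'(t)$ at the maximizing vertex $v$, that $n\bullet\bigl(v(t+1)-v(t)\bigr)\geqslant 0$, via the acute/non-acute case analysis of the movement cone coming from line~50 of Algorithm~\ref{alg2}. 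It is worth noting that your key inequality is a conic repackaging of Observation~\ref{o:edges}: writing $n=\lambda n_1+\mu n_2$ with $\lambda,\mu\geqslant 0$, where $n_1,n_2$ are the outward normals of the two edges incident to $v$, the observation applied to those two edges yields the inequality at once, so your case analysis essentially re-proves that observation in dual form. What your route buys is rigor at the one soft spot of the paper's argument: the claim $\mathcal H'(t)\subseteq\mathcal P$ for a possibly non-convex $\mathcal P$ is justified there only by a figure, whereas the support-function formulation avoids any topology of non-convex polygons. What it loses is the intermediate object $\mathcal P$ itself, which the paper reuses afterwards (in the proof of Lemma~\ref{l:intmisc}) to establish order preservation and edge-length monotonicity. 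One small point you should make explicit: to conclude $v(t+1)\in\mathcal H'(t+1)$ you need the defining property of $T$ (no robot becomes external after time $T$), so that the maximizing robot is still internal at time $t+1$ and its position is among the points generating $\mathcal H'(t+1)$; the paper states this step explicitly, while in your write-up it is only implicit.
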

\begin{proof}
Let $\mathcal R'\subset \mathcal R$ be the set of robots that lie at vertices of $\mathcal H'(t)$, at time $t\geqslant T$. Let $\mathcal P$ be the polygon (illustrated in Figure~\ref{fig4} as a thick dashed polygon) whose vertices are the locations at time $t+1$ of the robots of $\mathcal R'$, taken in the same order as they appear around the boundary of $\mathcal H'$. Note that, since the robots are \NRigid and not all of them are necessarily activated at time $t$, $\mathcal P$ is not necessarily a convex polygon.

Because the property stated in Observation~\ref{o:edges} holds for all the edges of $\mathcal H'(t)$ and $\mathcal P$, we have that $\mathcal H'(t)\subseteq \mathcal P$. But, by definition of $T$, none of the robots of $\mathcal R'$ ever becomes external, and hence $\mathcal H'(t+1)$ is the convex hull of $\mathcal P$. We conclude that $\mathcal H'(t)\subseteq \mathcal P\subseteq \mathcal H'(t+1)$.
\end{proof}

\begin{figure}[h]
\centering
\includegraphics[width=.95\linewidth]{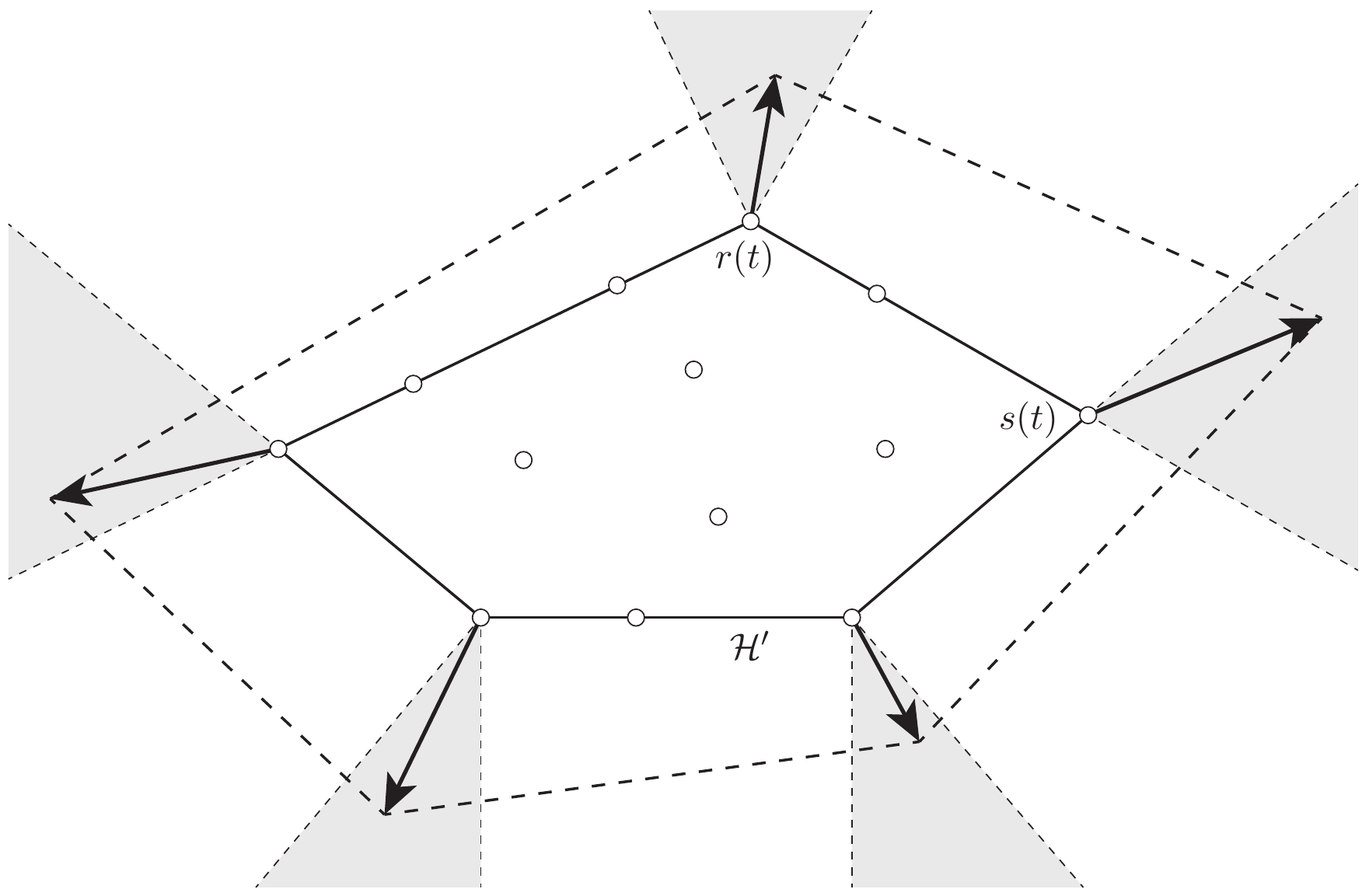}
\caption{Combined motion of internal robots (external robots are not shown)}
\label{fig4}
\end{figure}

\begin{corollary}\label{c:areadiameter}
For $t\geqslant T$, the area of $\mathcal H'(t)$ and the diameter of $\mathcal H'(t)$ do not decrease as $t$ increases.
\end{corollary}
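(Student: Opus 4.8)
The plan is to read Corollary~\ref{c:areadiameter} as a per-step monotonicity statement and reduce it to Lemma~\ref{l:intgrow}. Since ``does not decrease as $t$ increases'' follows by transitivity from ``does not decrease from $t$ to $t+1$'', it suffices to fix an arbitrary $t\geqslant T$ and compare $\mathcal H'(t)$ with $\mathcal H'(t+1)$. I would then split the analysis on the area of $\mathcal H'(t)$.

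First, the main case: $\mathrm{area}(\mathcal H'(t))>0$. Here Lemma~\ref{l:intgrow} gives directly $\mathcal H'(t)\subseteq\mathcal H'(t+1)$, and both the area and the diameter of a bounded set are monotone under inclusion (for the diameter, $\mathrm{diam}(A)=\sup_{x,y\in A}\norm{x-y}$ is visibly non-decreasing in $A$; for the area, containment of convex regions). Hence neither quantity decreases at this step. I would also record that, by the same lemma, once the area is positive it stays positive, so degeneracy can only occur during an initial stretch of rounds; this is why a purely per-step argument suffices and I need not track exactly when the area first becomes positive.

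It remains to treat the degenerate case $\mathrm{area}(\mathcal H'(t))=0$, where $\mathcal H'(t)$ is either a single point or a segment. The area cannot decrease, being already zero and always non-negative, so only the diameter needs attention. If $\mathcal H'(t)$ is a point, its diameter is $0$ and there is nothing to prove. If $\mathcal H'(t)$ is a segment, I would let $r$ and $s$ be the robots at its two extremes and $\hat u$ the unit vector pointing from $r(t)$ to $s(t)$, so that $\mathrm{diam}(\mathcal H'(t))=\norm{s(t)-r(t)}=(s(t)-r(t))\bullet\hat u$. The move rules for extreme robots (lines~43--45, and the degenerate instance of lines~46--54) send the destination into an angular region whose axis is the line supporting $\mathcal H'(t)$ and which meets $\mathcal H'(t)$ only at the moving robot; consequently $(r(t+1)-r(t))\bullet\hat u\leqslant 0$ and $(s(t+1)-s(t))\bullet\hat u\geqslant 0$. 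Combining these with the identity above gives $(s(t+1)-r(t+1))\bullet\hat u\geqslant\mathrm{diam}(\mathcal H'(t))$, hence $\norm{s(t+1)-r(t+1)}\geqslant\mathrm{diam}(\mathcal H'(t))$; since $r$ and $s$ are still internal, $\mathrm{diam}(\mathcal H'(t+1))\geqslant\norm{s(t+1)-r(t+1)}\geqslant\mathrm{diam}(\mathcal H'(t))$.

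The main (and essentially only) obstacle is this degenerate-segment diameter bound: one has to verify that the angular move regions used in the interior depletion phase genuinely open \emph{away} from the current segment, so that the two extreme robots can only drift apart along $\hat u$ and never toward each other. Everything else is a direct invocation of Lemma~\ref{l:intgrow} together with the elementary monotonicity of area and diameter under set inclusion.
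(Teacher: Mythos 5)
Your proposal is correct, and in its main case it coincides exactly with the paper's proof: the paper derives the corollary in two lines by iterating Lemma~\ref{l:intgrow} to get $\mathcal H'(t_1)\subseteq\mathcal H'(t_2)$ for $T\leqslant t_1\leqslant t_2$, and then invoking monotonicity of area and diameter under set inclusion. Where you genuinely differ is the degenerate case $\mathrm{area}(\mathcal H'(t))=0$: the paper's proof silently skips it, even though Lemma~\ref{l:intgrow} formally requires positive area, so read literally the paper has a small gap there that your argument closes; this is benign for the paper because the corollary is only invoked downstream (in the proof of Lemma~\ref{l:intfinal}) after the positive-area case has been isolated, but your version makes the statement true as literally written. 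Two remarks on your degenerate-segment analysis. First, the ``main obstacle'' you identify --- that the angular move regions open away from the current segment, so the two extreme robots can only drift apart along $\hat u$ --- is already proved in the paper as Lemma~\ref{l:coll1}: applying it with $r$ and $s$ the two endpoint robots gives precisely $(r(t+1)-r(t))\bullet(s(t)-r(t))\leqslant 0$ and its symmetric counterpart, so you could cite that lemma instead of re-examining lines~43--54 (its proof is exactly the half-plane observation you sketch, and it also covers the non-rigid stopping of a move partway, since the movement region is convex and contains the apex). Second, your step ``$r$ and $s$ are still internal at $t+1$'' deserves an explicit justification: it holds because the corollary lives inside the standing contradiction framework defining $T$, under which no robot becomes external after time $T$ (and, by Observation~\ref{obs:depletion}, no external robot becomes internal), so the robot set defining $\mathcal H'$ is fixed. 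With these two pointers your argument is a clean, slightly more careful rendering of the paper's.
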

\begin{proof}
By Lemma~\ref{l:intgrow}, if $T\leqslant t_1\leqslant t_2$, then $\mathcal H'(t_1)\subseteq\mathcal H'(t_2)$. Hence the area of $\mathcal H'(t_1)$ cannot be greater than the area of $\mathcal H'(t_2)$, and the diameter of $\mathcal H'(t_1)$ cannot be greater than the diameter of $\mathcal H'(t_2)$.
\end{proof}

\begin{corollary}\label{c:intint}
If $r$ is an internal robot at time $t\geqslant T$, and $r(t)$ is not a non-degenerate vertex of $\mathcal H'(t)$, then $r$ is internal at any time $t'\geqslant t$, and $r(t')$ is not a non-degenerate vertex of $\mathcal H'(t')$.
\end{corollary}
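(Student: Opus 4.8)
The plan is to prove Corollary~\ref{c:intint} by induction on $t'\geqslant t$, reducing it to the single-step claim: if $r$ is internal at some $t'\geqslant T$ and $r(t')$ is not a non-degenerate vertex of $\mathcal{H}'(t')$, then the same holds at $t'+1$. The base case $t'=t$ is exactly the hypothesis, so everything hinges on the step.

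The heart of the step is to show that $r$ makes no move at time $t'$. Under the hypothesis $r(t')$ lies either in the interior of $\mathcal{H}'(t')$ or in the relative interior of one of its edges, so $r$ is not at a ``real'' non-degenerate vertex of $\mathcal{H}'(t')$. I would first record the key visibility fact that any robot lying on the open segment between two internal robots is itself internal (it is a convex combination of two interior points of $\mathcal{H}$, hence interior), and therefore carries an \emph{Off} light for $t'\geqslant T$ (when \emph{Off} robots coincide with internal robots). This makes the hull of the \emph{Off} robots that $r$ actually sees outer-faithful: if $r$ perceived itself as a non-degenerate vertex, there would be a line through $r(t')$ with all visible \emph{Off} robots strictly on one side, yet for any internal robot $s$ strictly on the far side the first robot met on the open segment from $r(t')$ to $s$ is visible, internal, and also strictly on the far side, a contradiction. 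Hence a perceived non-degenerate vertex is a real one, and by contraposition $r$ does not trigger the vertex move on line~46. The same visibility fact disposes of the degenerate branches on lines~40 and~42: seeing only itself (resp. only one other \emph{Off} robot) would force $r$ to be the sole internal robot (resp. an endpoint of a segment-shaped $\mathcal{H}'(t')$), and in either case $r(t')$ would be a non-degenerate vertex of $\mathcal{H}'(t')$, contrary to hypothesis. So $r(t'+1)=r(t')$.

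With immobility of $r$ secured, the remainder is geometric bookkeeping. Since internal robots exist at $t'$, Observation~\ref{obs:depletion} gives $\mathcal{H}(t'+1)=\mathcal{H}(t')$, so $r$ remains in the interior of the unchanged global hull and stays internal. When $\mathcal{H}'(t')$ has positive area, Lemma~\ref{l:intgrow} yields $\mathcal{H}'(t')\subseteq\mathcal{H}'(t'+1)$; as $r(t')$ lies in the interior of $\mathcal{H}'(t')$ or in the relative interior of a segment contained in it, the same point lies in the interior of $\mathcal{H}'(t'+1)$ or in the relative interior of a segment contained in $\mathcal{H}'(t'+1)$, hence is not a non-degenerate vertex there. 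Induction then closes the argument. I expect the main obstacle to be the zero-area case, where $\mathcal{H}'(t')$ is a segment and Lemma~\ref{l:intgrow} does not apply: there a non-endpoint $r$ is a degenerate vertex and still makes no move (only the endpoints meet the conditions of lines~42--45), and Corollary~\ref{c:areadiameter} ensures the diameter never shrinks, so the segment can only lengthen along its line or thicken into a positive-area hull. Verifying that $r$ persists as a non-(non-degenerate)-vertex precisely at the transition to positive area, by matching the old segment to the extreme points of the new hull, is the delicate point where the monotonicity and containment results must be combined carefully.
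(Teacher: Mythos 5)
Your core argument follows the same route as the paper's (very terse) proof: immobility of $r$ via the vertex test at line~46, plus hull monotonicity via Lemma~\ref{l:intgrow}, plus Observation~\ref{obs:depletion} to keep $r$ internal. Where you go beyond the paper is in rigor: the paper's proof simply asserts that line~46 keeps $r$ still, tacitly identifying the hull of the visible \emph{Off} robots with the real $\mathcal H'$, whereas your ``outer-faithfulness'' argument (an obstructor on a segment between two internal robots is itself internal, hence \emph{Off} after time $T$) supplies exactly the justification the paper only gestures at in its algorithm description, and your disposal of the $|\mathcal P'|\in\{1,2\}$ branches is correct. One micro-remark: your supporting-line argument only treats internal robots \emph{strictly} on the far side; you should also cover the case where all convex-combination witnesses for $r$ lie on the supporting line itself (i.e., $r$ sits on an edge of $\mathcal H'$ contained in that line). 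There the nearest internal robots along the line are visible, \emph{Off}, and on the line, which already contradicts strict one-sidedness; this is a one-sentence fix.

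The genuine gap is the zero-area endgame, which you flag but do not close --- and in fact it cannot be closed, because the statement is false there as literally stated. Take three collinear internal robots $p$, $r$, $q$ at $(-1,0)$, $(0,0)$, $(1,0)$ at some time after $T$, with the external robots far away. Since $r$ obstructs $p$ and $q$ from each other, each endpoint perceives $|\mathcal P'|=2$ and moves within its right-angle cone toward a point of $\partial\mathcal H$ (lines~43--45, Figure~\ref{fig3:a}). If both endpoints are activated in the same round and the adversary stops them at, say, $(-2,\,0.9)$ and $(2,\,0.9)$ (legal: each direction is within $45^\circ$ of the axis, each move exceeds $\delta$, and neither robot reaches $\partial\mathcal H$, consistent with the standing assumption that no robot becomes external after $T$), then $\mathcal H'$ becomes a triangle with $r$ at a non-degenerate vertex: a degenerate vertex has become non-degenerate. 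So no ``careful matching of the old segment to the extreme points of the new hull'' can succeed. The paper's own two-line proof dodges this only by invoking Lemma~\ref{l:intgrow} unconditionally, silently dropping its positive-area hypothesis; and indeed the corollary is only ever applied, in the proof of Lemma~\ref{l:intfinal}, after the collinear-interior cases have been dispatched and positive area is assumed. The correct repair is therefore not a finer transition analysis but an added hypothesis that $\mathcal H'(t)$ has positive area; with that, your induction closes immediately, since Lemma~\ref{l:intgrow} keeps the area positive at all later times and the rest of your argument is sound.
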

\begin{proof}
By Lemma~\ref{l:intgrow}, if $t'\geqslant t$, then $\mathcal H'(t)\subseteq\mathcal H'(t')$. Moreover, according to line~46 of Algorithm~\ref{alg2}, an internal robot that does not lie at a degenerate vertex of $\mathcal H'$ does not move. It follows that, after time $t$, robot $r$ will not move, hence it will remain internal, and it will never lie at a non-degenerate vertex of $\mathcal H'$.
\end{proof}

Recall that, due to line~53 of Algorithm~\ref{alg2}, a robot on the perimeter of $\mathcal H'$ is able to move only if it completely sees an entire edge of $\mathcal H$. Next we prove that, after time $T$, there exists at least one robot that is able to move.

\begin{lemma}\label{l:canmove}
At any time after $T$, there is a robot that, if activated, makes a non-null movement.
\end{lemma}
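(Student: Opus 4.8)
The plan is to argue that, under the standing hypotheses (the phase does not terminate, $t\geqslant T$, and $\mathcal H'(T)$ is a non-degenerate polygon, so that $\mathcal H'(t)$ has positive area for all $t\geqslant T$ by Corollary~\ref{c:areadiameter}), at least one internal robot located at a non-degenerate vertex of $\mathcal H'(t)$ has a legal non-null move, i.e.\ reaches line~54 with $(\mathcal A'\cap\mathcal E)\setminus\mathcal P\neq\varnothing$. First I would reduce the set of candidates: by Observation~\ref{obs:depletion} no external robot moves while internal robots remain, and by Corollary~\ref{c:intint} an internal robot that is not a non-degenerate vertex of $\mathcal H'$ never moves; hence the only possible movers are the robots sitting at the non-degenerate vertices of $\mathcal H'(t)$, and such vertices exist because $\mathcal H'(t)$ has positive area. (The degenerate cases --- a single internal robot, or collinear internal robots --- are excluded here and dealt with in Lemma~\ref{l:intfinal}; in those cases lines~41 and~45 directly produce a mover.)

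The key geometric step, which I expect to be the heart of the argument, is to show that for every such vertex robot $v$ the movement region $\mathcal A'_v$ computed in lines~47--52 is contained in the normal cone $N_v$ of $\mathcal H'(t)$ at $v$ (the cone of directions $\hat u$ for which $v$ maximises $\langle\,\cdot\,,\hat u\rangle$ over the internal robots). Both constructions of $\mathcal A'$ are cones centred on the outward normal bisector at $v$: in the acute case $\mathcal A'$ lies between the extensions of the two incident edges of $\mathcal H'$ and has angular width equal to the interior angle $\alpha<\pi/2$, whereas $N_v$ has width $\pi-\alpha$; in the non-acute case the two perpendiculars to the incident edges are exactly the outward edge-normals bounding $N_v$, so $\mathcal A'_v=N_v$. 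In either case $\mathcal A'_v\subseteq N_v$. The payoff is immediate: if $\hat u\in\mathcal A'_v$ then $v$ is extremal among internal robots in direction $\hat u$, so the open ray $\{v+t\hat u : t>0\}$ contains no internal robot; thus the whole cone $\mathcal A'_v$, out to $\partial\mathcal H$, is free of internal obstructions.

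With this in hand the rest is routine. Since external robots lie on $\partial\mathcal H$ and never in the interior, they cannot obstruct any of $v$'s interior sightlines, so $v$ sees the entire arc $\mathcal A'_v\cap\partial\mathcal H$, which has positive length (as $\mathcal A'_v$ has positive angular measure) and therefore meets the relative interior of some edge of $\mathcal H$. Between two consecutive external robots on that arc lies a completely visible segment $ab$ on a single edge of $\mathcal H$, with $a$ and $b$ set to \emph{External} and, being outward of $v$, lying outside the inward internal-angle region $\mathcal A$; hence $ab\subseteq\mathcal E$. Finally $\mathcal A'_v\cap ab$ is a segment of positive length while $\mathcal P$ is finite, so $(\mathcal A'_v\cap\mathcal E)\setminus\mathcal P\neq\varnothing$, the test on line~54 succeeds, and $v$ makes a non-null move when activated.

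The main obstacle is precisely the containment $\mathcal A'_v\subseteq N_v$: it must be verified by the two-case angular computation above, and it is what rules out the Figure~\ref{fig13} pathology by guaranteeing that a legal mover is never blocked by another internal robot. The only remaining care is bookkeeping --- checking that $\mathcal A'_v$ has positive measure so its boundary arc meets an edge's relative interior, that the endpoints of the chosen visible sub-edge indeed fall outside $\mathcal A$ as required by the definition of $\mathcal E$ in line~53, and that $v$'s locally reconstructed $\mathcal H'$ agrees with the global $\mathcal H'$ well enough for $v$ to recognise itself as one of its non-degenerate vertices (which holds after time $T$, when all external robots are already set to \emph{External}).
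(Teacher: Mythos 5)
Your reduction to the vertex robots of $\mathcal H'$, the containment of the computed cone $\mathcal A'_v$ in the normal cone of $\mathcal H'$ at $v$ (which your two-case angular check does verify: width $\alpha$ versus $\pi-\alpha$ in the acute case, equality in the other), and the consequent facts that the cone contains no internal robots and that external robots cannot block a sightline from an interior point to the boundary, are all correct --- but the argument proves too much. As written, it applies to \emph{every} non-degenerate vertex $v$ of $\mathcal H'$ and concludes that each such $v$ can move. The paper's own Figure~\ref{fig13} is a counterexample: the robot at $p$ occupies a vertex of $\mathcal H'$, its cone reaches $\partial\mathcal H$ unobstructed (the gray area), yet it is \emph{not} permitted to move, because the robot at $q$ hides part of the edge $ab$, so $p$ cannot certify that $ab$ is an edge of the real hull. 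The flawed step is ``between two consecutive external robots on that arc lies a completely visible segment $ab$'': the arc $\mathcal A'_v\cap\partial\mathcal H$ may contain no robots at all (it can be a sub-segment of the relative interior of a single edge), and the External robots spanning that edge then lie \emph{outside} the cone, where your normal-cone containment gives no protection --- the sightlines from $v$ to them can be blocked by other internal robots, which is exactly the Figure~\ref{fig13} pathology you claim the containment rules out. Seeing the (possibly robot-free) arc is not enough: what line~53 demands, per the paper's semantics, is that the destination lie on a segment between two \emph{visible} robots outside $\mathcal A$ on the observed hull boundary, i.e., effectively complete visibility of a whole edge, and your cone argument does not deliver that for a given $v$.

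What is missing is a selection mechanism, since the lemma is only existential; the paper supplies it with a descent argument. Starting from an arbitrary vertex $p_0$ of $\mathcal H'$ and assuming for contradiction that no robot can move, the obstruction to $p_0$ must be a neighboring vertex $p_1$ of $\mathcal H'$ strictly closer to the line $ab$ containing the target edge; iterating yields distinct vertices $p_0,p_1,p_2,\dots$ ever closer to that line, each able to see an initial portion $ap'_i$ of the edge, and finiteness of the swarm forces a last vertex $p_k$ that sees all of $ab$ and whose bisector meets it, so $p_k$ can move. Some such extremal choice of $v$ is unavoidable; the uniform per-vertex claim is simply false. (A minor point: your closing remark that $v$'s locally reconstructed $\mathcal H'$ agrees with the global one after time $T$ is left as bookkeeping; it does hold --- any hidden internal robot is occluded by a nearer robot in the same direction from $v$, which after time $T$ is itself internal and lit \emph{Off}, so the cone at $v$ is computed correctly --- but it deserves that one-line justification rather than a flag.)
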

\begin{proof}
Suppose for a contradiction that no robot is able to move, and let $p_0$ be a non-degenerate vertex of $\mathcal H'$. If $\mathcal A$ is the internal angle of $\mathcal H'$ at $p_0$, we let $p'_0$ be the point at which the bisector of $\mathbb R^2\setminus \mathcal A$ intersects the perimeter of $\mathcal H$, as Figure~\ref{fig16} shows.

\begin{figure}[h]
\centering
\includegraphics[width=\linewidth]{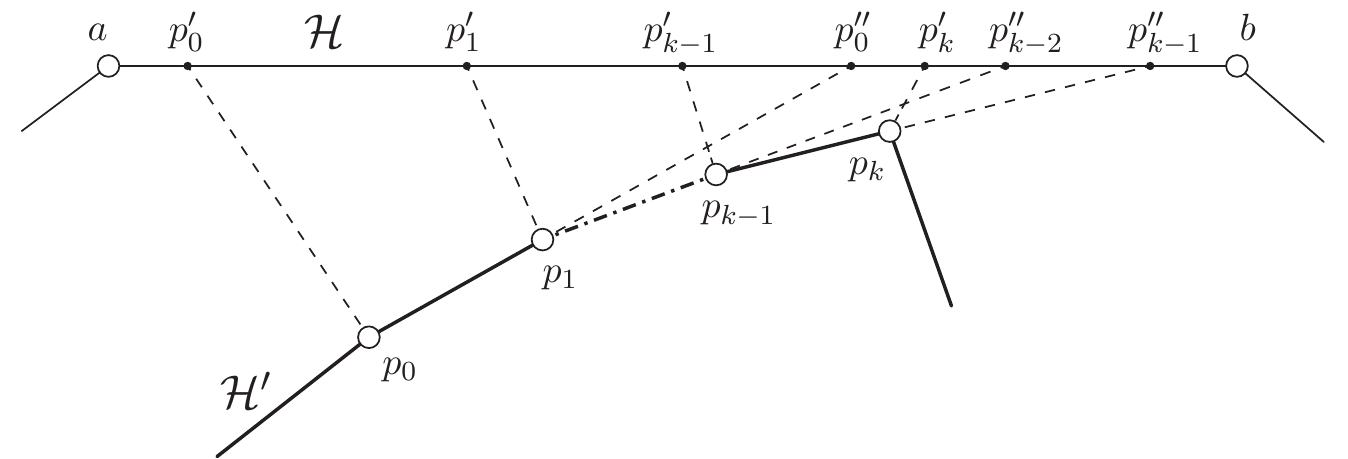}
\caption{The robots in $p_0$, $p_1$, $\cdots$, $p_{k-1}$ are unable to move; the one in $p_k$ is able to move}
\label{fig16}
\end{figure}

Let $ab$ be an edge of $\mathcal H$ such that $p'_0\in ab$ (note that $p'_0$ may coincide with either $a$ or $b$). Since $\mathcal H$ and $\mathcal H'$ are convex, either the segment $ap'_0$ is completely visible to $p_0$, or the segment $bp'_0$ is. Without loss of generality, let $ap'_0$ be completely visible to $p_0$, i.e., $ap'_0\subset \mathbb R^2\setminus\mathcal A$. By definition of $T$, both robots in $a$ and $b$ have their lights let to \emph{External}. Also, note that $p'_0$ lies within $\mathcal A'$, as computed by the robot in $p_0$ executing line~52 of Algorithm~\ref{alg2}. Hence, by line~53, the robot in $p_0$ can move, provided that $b\in \mathbb R^2\setminus \mathcal A$. But by assumption no robot can move, and therefore there must be one robot occupying a non-degenerate vertex of $\mathcal H'$ neighboring $p_0$, say $p_1$, such that the ray from $p_0$ through $p_1$ intersects the segment $ab$, say in $p''_0$.

Observe that $p_1$ is strictly closer to the line $ab$ than $p_0$. By the convexity of $\mathcal H'$, the bisector of the explementary of the internal angle at $p_1$ intersects the segment $ab$, say in $p'_1$. In fact, $p'_1$ lies between $p'_0$ and $p''_0$ (refer to Figure~\ref{fig16}). Also, $p_1$ can see all the points in the segment $ap'_1$. As we argued for $p_0$ in the previous paragraph, there must be another vertex $p_2$ of $\mathcal H'$, closer to the line $ab$, that prevents $p_1$ from seeing the entire segment $ab$.

Proceeding in this fashion, we obtain a sequence $p_0$, $p_1$, $p_2$, $p_3$, $\cdots$ of vertices of $\mathcal H'$, which get closer and closer to the line $ab$. Since these vertices must be all distinct, and there are only finitely many robots in the swarm, there must be one last element of the sequence, $p_k$. It follows that $p_k$ can see all of $ab$, and the corresponding angle bisector intersects $ab$ as well, say in $p'_k$. This implies that the robot in $p_k$ can actually move to a neighborhood of $p'_k$, contradicting our assumption.
\end{proof}

As a consequence of Corollary~\ref{c:intint}, no new robot becomes a non-degenerate vertex of $\mathcal H'$ after time $T$, but some robots may indeed cease to be non-degenerate vertices of $\mathcal H'$, and stop moving forever. Hence, at some time $T'\geqslant T$, the set of robots that lie at non-degenerate vertices of $\mathcal H'$ reaches a minimum $\mathcal M\subset\mathcal R$. By Lemma~\ref{l:intgrow}, the area of $\mathcal H'$ is positive at every time $t\geqslant T'$, and hence $|\mathcal M|\geqslant 3$.

In the next lemma we prove two fundamental properties of $\mathcal H'$ that hold after time $T'$.

\begin{lemma}\label{l:intmisc}
After time $T'$, the following statements hold.
\begin{itemize}
\item[\textbf{\emph{a.}}] The cyclic order of the robots of $\mathcal M$ around $\mathcal H'$ is preserved.
\item[\textbf{\emph{b.}}] The length of the shortest edge of $\mathcal H'(t)$ does not decrease as $t$ increases.
\end{itemize}
\end{lemma}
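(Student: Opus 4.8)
The plan is to isolate a single displacement inequality satisfied by each moving robot and then read off both statements from it. Throughout I fix $t\ge T'$, so that the non-degenerate vertices of $\mathcal H'(t)$ are exactly the robots of $\mathcal M$ (with $|\mathcal M|\ge 3$), every other internal robot is stationary by line~46, and $\mathcal H'(t)\subseteq\mathcal H'(t+1)$ by Lemma~\ref{l:intgrow}. Write $v_i=r_i(t)$ for the robots of $\mathcal M$ in their cyclic order on $\mathcal H'(t)$, and let $\mathcal A'_i$ be the angle within which $r_i$ is constrained to move (lines~47--52). The key claim is that $\mathcal A'_i$ lies in the outward normal cone of $\mathcal H'(t)$ at $v_i$, equivalently that
\begin{equation}\label{e:cone}
(r_i(t+1)-v_i)\bullet(v_{i+1}-v_i)\le 0 \quad\text{and}\quad (r_i(t+1)-v_i)\bullet(v_{i-1}-v_i)\le 0 .
\end{equation}
This follows from lines~50--52: the outward normal cone has opening $\pi-\alpha$ about the outward bisector of the internal angle $\mathcal A$, while $\mathcal A'_i$ is the angle of opening $\alpha'=\min\{\alpha,\pi-\alpha\}\le\pi-\alpha$ about the same axis; hence $\mathcal A'_i$ is contained in that cone, and since $r_i$ moves from the apex $v_i$ into the convex cone $\mathcal A'_i$ (possibly stopping early in the \NRigid\ case), \eqref{e:cone} holds. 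Note that, by convexity of $\mathcal H'(t)$, every $w\in\mathcal H'(t)$ satisfies $w-v_i=\lambda(v_{i+1}-v_i)+\mu(v_{i-1}-v_i)$ with $\lambda,\mu\ge 0$, so \eqref{e:cone} upgrades to $(r_i(t+1)-v_i)\bullet(w-v_i)\le 0$ for all $w\in\mathcal H'(t)$; that is, the nearest point of $\mathcal H'(t)$ to $r_i(t+1)$ is $v_i$.

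For statement a I would argue as follows. Since $\mathcal H'(t)\subseteq\mathcal H'(t+1)$, the convex curve $\partial\mathcal H'(t+1)$ encloses $\mathcal H'(t)$, and by definition of $\mathcal M$ every $r_i(t+1)$ is a non-degenerate vertex of $\mathcal H'(t+1)$, hence lies on this curve. The plan is to invoke the standard fact that the nearest-point projection onto a convex body is cyclic-order preserving along any convex curve enclosing it: as one traverses $\partial\mathcal H'(t+1)$, the projections onto $\mathcal H'(t)$ sweep monotonically around $\partial\mathcal H'(t)$. Applying this to the points $r_i(t+1)$, whose projections are the vertices $v_i$ taken in cyclic order $v_1,v_2,\dots$, forces the $r_i(t+1)$ to occur on $\partial\mathcal H'(t+1)$ in the same cyclic order, which is exactly statement a.

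For statement b, first note that by a the edges of $\mathcal H'$ join the same consecutive pairs $(r_i,r_{i+1})$ at every $t\ge T'$, so it suffices to show no such edge shrinks. Let $\hat e_i=(v_{i+1}-v_i)/\norm{v_{i+1}-v_i}$. Decomposing the new edge vector and applying \eqref{e:cone} once to $r_i$ (first inequality) and once to $r_{i+1}$ (with the roles of the neighbours swapped), the component along $\hat e_i$ is
\[
(r_{i+1}(t+1)-r_i(t+1))\bullet\hat e_i=\norm{v_{i+1}-v_i}+\underbrace{(r_{i+1}(t+1)-v_{i+1})\bullet\hat e_i}_{\ge 0}-\underbrace{(r_i(t+1)-v_i)\bullet\hat e_i}_{\le 0}\ge\norm{v_{i+1}-v_i},
\]
whence $\norm{r_{i+1}(t+1)-r_i(t+1)}\ge\norm{v_{i+1}-v_i}$. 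Thus every edge length is non-decreasing, which in particular yields b (and is strictly stronger than the stated claim).

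The algebra in b is routine; the real content, and the step I expect to be most delicate, is the inclusion \eqref{e:cone} together with the order-preservation in a. For \eqref{e:cone} I must be careful that a robot computes $\mathcal A'_i$ from its local, possibly incomplete view $\mathcal H'_r$ rather than from the global $\mathcal H'(t)$; the point to nail down is that after time $T$ the robots whose light is \emph{Off} are precisely the internal robots, so a robot at a vertex of $\mathcal H'(t)$ reads off its true incident edge directions and hence the inequalities \eqref{e:cone} hold against the real $\mathcal H'(t)$. For a, the obstacle is that the naive ``disjoint movement regions'' argument used for Algorithm~\ref{alg1} does not transfer, because here the movement wedges are unbounded cones; the monotonicity of the nearest-point projection along an enclosing convex curve is the right substitute, and its statement (cyclic order, possibly with repeated values, but no reversal) is what I would need to set up cleanly.
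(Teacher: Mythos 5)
Your proposal is correct. For statement~(b) it is essentially the paper's own argument in algebraic form: the paper proves~(b) by observing that the movement regions (the gray cones of Figure~\ref{fig4}) of two robots of $\mathcal M$ at adjacent vertices of $\mathcal H'$ are at distance exactly $\norm{r(t)-s(t)}$ from each other, and your cone inequality $(r_i(t+1)-v_i)\bullet(v_{i\pm 1}-v_i)\leqslant 0$ is precisely the statement that each movement region lies in the outward normal cone, which yields the paper's separation as the slab bounded by the two lines through $v_i$ and $v_{i+1}$ perpendicular to $\hat e_i$; your dot-product computation is a correct, self-contained rendering of this, and your justification of the cone inequality is the right one ($\alpha'=\min\{\alpha,\pi-\alpha\}\leqslant\pi-\alpha$ about the common axis of symmetry, with convexity of the cone handling interrupted \NRigid moves). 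For statement~(a) you genuinely depart from the paper: the paper argues that the perpendicular bisector of each edge $r(t)s(t)$ separates the two adjacent movement regions and that movement regions avoid the interior of $\mathcal H'$, and from this asserts order preservation in one sentence; you instead upgrade the cone inequality to the variational characterization of the nearest-point projection, $v_i=\pi_{\mathcal H'(t)}(r_i(t+1))$, and invoke order preservation of the metric projection along an enclosing convex curve. This is valid and arguably tighter than the paper's deduction: the required non-crossing of the projection segments $[r_i(t+1),v_i]$ follows from the triangle inequality plus uniqueness of projections onto a convex set, and pairwise disjoint arcs joining the two boundary curves of the annulus between $\partial\mathcal H'(t+1)$ and $\partial\mathcal H'(t)$ cannot permute cyclic order; note that your slab separation in fact implies the paper's bisector separation, so both mechanisms rest on the same underlying inequality. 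Your closing caveat is well aimed and closes as you expect: for $t\geqslant T$ the \emph{Off} robots are exactly the internal ones, and the nearest \emph{Off} robot along each incident edge of the true $\mathcal H'$ is visible to a robot of $\mathcal M$ (an open edge of $\mathcal H'$ lies in the interior of $\mathcal H$, so no \emph{External} robot can obstruct it), whence the locally computed cone $\mathcal A'_i$ agrees with the global one. This visibility argument also settles a point your write-up tacitly assumes, and which the paper glosses over as well: after $T'$ a robot of $\mathcal M$ always sees at least three \emph{Off} robots, so it never executes the $|\mathcal P'|\leqslant 2$ branches (lines~40--45), whose moves would not satisfy your cone inequality.
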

\begin{proof}
Similarly to Section~\ref{s:ssynch}, we call the gray regions in Figures~\ref{fig3} and~\ref{fig4} the \emph{movement regions} of the respective robots.

Let $t\geqslant T'$, and consider the polygon $\mathcal P$ as defined in the proof of Lemma~\ref{l:intgrow}. By our assumptions on $\mathcal M$, $\mathcal P$ is a convex polygon, or else some robot would cease to occupy a vertex of $\mathcal H'$. Hence $\mathcal P=\mathcal H'(t+1)$ and, by Lemma~\ref{l:intgrow}, $\mathcal H'(t)\subseteq \mathcal H'(t+1)$. Let $r,s\in\mathcal M$ occupy two adjacent edges of $\mathcal H'$ at time $t$, and let $\ell$ be the axis of the segment $r(t)s(t)$. It follows from Algorithm~\ref{alg2} that $\ell$ separates the movement regions of $r$ and $s$ at time $t$ (cf.~Figure~\ref{fig4}). This, paired with the fact that the movement region of a robot of $\mathcal M$ at time $t$ does not intersect the interior of $\mathcal H'$, implies~(a).

Now, to prove~(b), it is sufficient to note that, with the previous paragraph's notation, the distance between the movement regions of $r$ and $s$ at time $t$ is precisely the distance between $r(t)$ and $s(t)$ (see Figure~\ref{fig4}). Therefore, the segment $r(t+1)s(t+1)$ is not shorter than $r(t)s(t)$, implying that the length of the shortest edge of $\mathcal H'$ cannot decrease.
%
\end{proof}

We need one last geometric observation.

\begin{figure}[h]
\centering
\includegraphics[scale=1.2]{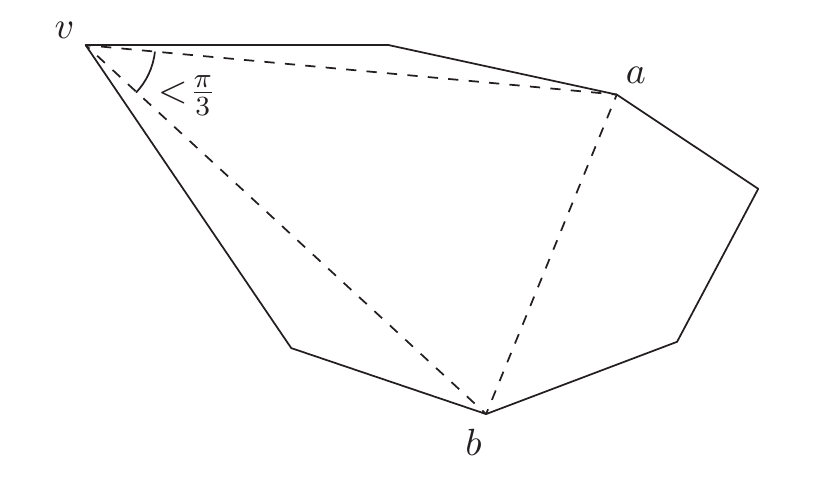}
\caption{The longest edge of triangle $\triangle vab$ is not $ab$}
\label{fig14}
\end{figure}

\begin{lemma}\label{l:diameter}
If the internal angle at vertex $v$ of a convex polygon has measure less than $\pi/3$, then the diameter of the polygon is the distance between $v$ and another vertex.
\end{lemma}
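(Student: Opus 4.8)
The plan is to reduce the statement to the elementary principle that, in a triangle, a larger angle faces a longer side. First recall that the diameter of a convex polygon is attained by a pair of its vertices (the distance function is maximized at extreme points), so it suffices to compare vertex-to-vertex distances. Let $\theta<\pi/3$ denote the internal angle at $v$, and let $w$ be a vertex at maximum distance from $v$; the goal is to show that $|vw|$ equals the diameter. The key geometric observation is that, since the polygon is convex, it is contained in the wedge with apex $v$ bounded by the two edges incident to $v$ (the intersection of the two supporting half-planes of those edges). Consequently, for any two vertices $a,b$ distinct from $v$, the rays $va$ and $vb$ both lie inside this wedge, so the angle $\angle avb$ is at most $\theta<\pi/3$.

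Next I would invoke the triangle fact illustrated in Figure~\ref{fig14}. In the triangle $\triangle vab$ the angle at $v$ is less than $\pi/3$, so the other two angles sum to more than $2\pi/3$; they cannot both be at most $\pi/3$, hence at least one of them strictly exceeds $\pi/3$, and therefore strictly exceeds the angle at $v$. Since in any triangle a larger angle is opposite a longer side, the side $ab$, which is opposite the angle at $v$, is strictly shorter than $\max\{|va|,|vb|\}$. (The degenerate case where $a$, $v$, $b$ are collinear forces $a$ and $b$ onto a common ray from $v$, because $\angle avb<\pi$, and then $|ab|=\big||va|-|vb|\big|<\max\{|va|,|vb|\}$ as well.) Thus for every pair of vertices $a,b$ distinct from $v$ we have $|ab|<\max\{|va|,|vb|\}\le |vw|$.

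Finally I would combine the two ingredients. Let $d$ be the diameter and suppose it were realized by a pair of vertices $p,q$ with $v\notin\{p,q\}$. Then $d=|pq|<\max\{|vp|,|vq|\}\le|vw|\le d$, a contradiction. Hence the diameter must be realized by a pair containing $v$, and by the maximality of $w$ it equals $|vw|$, which is precisely the distance from $v$ to another vertex. The only step that requires a little care is the wedge-containment claim $\angle avb\le\theta$; everything else is a routine application of the "largest angle faces the longest side" principle, so I expect no serious obstacle.
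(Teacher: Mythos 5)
Your proposal is correct and takes essentially the same approach as the paper's proof: both arguments confine any pair of vertices $a,b\neq v$ to the wedge at $v$, deduce $\angle avb<\pi/3$, and conclude that $ab$ is shorter than $va$ or $vb$ (the paper phrases the side comparison via the law of sines, you via the equivalent ``larger angle faces the longer side'' principle), so no pair avoiding $v$ can realize the diameter. Your explicit maximal vertex $w$ and the collinear degenerate case are harmless refinements of the same argument.
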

\begin{proof}
The diameter of a polygon is the longest distance between two of its vertices. Suppose for a contradiction that vertices $a$ and $b$ have the maximum distance, with $a\neq v\neq b$, as in Figure~\ref{fig14}. Then, since the polygon is convex, the angle $\angle avb$ is containted in the internal angle at $v$, and therefore its measure is less than $\pi/3$. Since the sum of the internal angles of a triangle is $\pi$, it follows that either $\angle bav>\pi/3$ or $\angle vba>\pi/3$. By the law of sines, in the first case $vb$ is longer than $ab$, and in the second case $va$ is longer than $ab$. In both cases, $ab$ is not the longest segment joining two vertices of the polygon, which is a contradiction.
\end{proof}

We are finally ready to prove the termination, and therefore the correctness, of the interior depletion phase.

\begin{lemma}\label{l:intfinal}
If Algorithm~\ref{alg2} is executed from a non-collinear configuration, after finitely many turns all robots become external, no robot's light is set to \emph{Adjusting}, and no two robots collide.
\end{lemma}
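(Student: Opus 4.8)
The plan is to separate the three assertions and dispatch the two easy ones before attacking termination. Collision-freedom throughout the interior depletion phase is immediate from Lemma~\ref{l:coll2}: that lemma guarantees no two robots collide as long as internal robots are present, and this phase is by definition the regime in which internal robots exist. The claim that no light is ever set to \emph{Adjusting} will follow by inspecting Algorithm~\ref{alg2}: starting from a non-collinear configuration, $\mathcal H$ is never a line segment, so the segment-breaking instructions that write \emph{Adjusting} (lines~11 and~15) are unreachable, and the only other instruction writing \emph{Adjusting}, the vertex-adjustment move of lines~26--29, fires only when every visible robot is already \emph{External}; this cannot occur while some internal \emph{Off} robot remains, i.e.\ before the phase ends. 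Hence lights take only the values \emph{Off} and \emph{External} during the phase, and the configuration reached when the last robot becomes external carries no \emph{Adjusting} light.

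The substance is termination: after finitely many turns no internal robot remains. I argue by contradiction, reusing the objects already fixed before the statement: the time $T$ after which the (non-empty, minimal) internal set is frozen and all lights are stable; the later time $T'$ after which the set $\mathcal M$ of non-degenerate vertices of $\mathcal H'$ is frozen with $|\mathcal M|\geqslant 3$; and the fact (Lemma~\ref{l:intgrow}) that $\mathcal H'$ has positive area for all $t\geqslant T'$. By Observation~\ref{obs:depletion} no external robot moves while internal robots exist, so $\mathcal H$ is constant and bounded after $T$; since $\mathcal H'\subseteq\mathcal H$, both the area and the diameter of $\mathcal H'$ are bounded, and by Corollary~\ref{c:areadiameter} they are non-decreasing, hence convergent. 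The driver of the contradiction is Lemma~\ref{l:canmove}: at every time after $T$ some robot can make a non-null move; combined with the fairness of the scheduler this forces infinitely many genuine moves. Crucially, each such move advances the moving robot by at least $\delta$: a \NRigid move is interrupted only after covering at least $\delta$, and the only alternative, reaching the computed destination, would place the robot on $\partial\mathcal H$ (all destinations in this phase lie on $\partial\mathcal H$) and thus make it external, contradicting the freezing of the internal set after $T$.

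The heart of the proof, and the step I expect to be the main obstacle, is to convert each $\delta$-move into a quantitative, bounded-below gain in either area or diameter, so that infinitely many moves contradict the convergence of both quantities. I plan a dichotomy on the internal angle $\alpha$ of $\mathcal H'$ at the moving vertex $v$, whose neighbors in $\mathcal M$ sit at a fixed cyclic position $a,b$ (Lemma~\ref{l:intmisc}(a)). If $\alpha\geqslant\pi/3$, then $|ab|^2\geqslant |va|\,|vb|\geqslant s_0^2$, where $s_0$ is the shortest-edge length of $\mathcal M$, which does not decrease after $T'$ by Lemma~\ref{l:intmisc}(b); the movement region of Figure~\ref{fig4} should then force the perpendicular displacement of $v$ relative to the line $ab$ to be a fixed fraction of $\delta$, so that the area of $\mathcal H'$ grows by a positive constant. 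If instead $\alpha<\pi/3$, then by Lemma~\ref{l:diameter} the diameter of $\mathcal H'$ is realized by $v$ and some other vertex $z$; since $v$ moves outward, away from the interior and hence away from $z$, within the narrow cone bounded by the extensions of its incident edges, the distance $|vz|$ and thus the diameter grows by a fixed fraction of $\delta$. The threshold $\pi/3$ is calibrated precisely to make both sub-cases quantitative at once. The delicate point I anticipate is controlling the area sub-case when the move is nearly parallel to $ab$ (small perpendicular component, hence small area gain): there one must instead extract a diameter increase from the parallel, stretching component, and the interplay between the two gains is the technically demanding part. Granting the claim, area and diameter are bounded above yet each move raises one of them by at least a fixed positive amount, so only finitely many moves are possible, contradicting their infinitude.

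It remains to dispose of the degenerate situation in which $\mathcal H'$ has zero area (the internal robots number at most two, or are collinear), where the hypotheses of the main argument fail; this case is easy. With a single internal robot, line~41 sends it to an edge midpoint of $\mathcal H$, which it reaches in finitely many $\delta$-steps, becoming external; with two, lines~43--45 drive each toward $\partial\mathcal H$ within a right-angle wedge pointing away from the other, again reaching the boundary in finitely many steps. With three or more collinear internal robots, the two endpoints are non-degenerate vertices of the segment $\mathcal H'$ and move off the line, so $\mathcal H'$ acquires positive area and the main argument applies. In every branch we either contradict the minimality of the frozen internal set (a robot becomes external after $T$) or reduce to the positive-area case already treated. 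This establishes that all robots become external after finitely many turns; together with the two opening observations, the phase ends in a convex configuration with no \emph{Adjusting} lights and no collisions.
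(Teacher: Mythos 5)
Your overall architecture coincides with the paper's: collision-freedom from Lemma~\ref{l:coll2}, the frozen sets at times $T$ and $T'$, the observation that every move of a robot of $\mathcal M$ covers at least $\delta$ (else it would reach $\partial\mathcal H$ and become external), Lemma~\ref{l:canmove} plus fairness forcing infinitely many moves, the dichotomy at threshold $\pi/3$, and the same treatment of the degenerate cases (one, two, or collinear internal robots). However, your area sub-case ($\alpha\geqslant\pi/3$) contains a genuine gap, which you yourself flag but do not close. Measuring the gain as the perpendicular displacement of the moving vertex $v$ relative to the opposite chord $ab$ fails precisely when the move is nearly parallel to $ab$: the wedge $\mathcal A'$ of line~52 permits such moves (its width is $\min(\alpha,\pi-\alpha)$, which can be as large as $\pi/2$), and then the area gained above $ab$ is not bounded below by any constant. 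Your proposed rescue---extracting a diameter increase from the ``stretching component''---does not work in general: when $\alpha\geqslant\pi/3$, Lemma~\ref{l:diameter} gives no information, the diameter of $\mathcal H'$ may be realized by a pair of vertices not involving $v$ at all, and since $\mathcal H'(t)\subseteq\mathcal H'(t+1)$ only guarantees weak monotonicity, a $\delta$-move of $v$ nearly parallel to $ab$ can leave the diameter unchanged. So as written, infinitely many moves of this kind defeat your contradiction.

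The paper closes exactly this hole by measuring the area gain against an \emph{incident} edge rather than the opposite chord. Let $q$ be the neighbor of $v=r(T'')$ chosen so that $\angle q(T'')r(T'')r(T''+1)\leqslant\angle r(T''+1)r(T'')s(T'')$. The constraint of lines~50--52 (move direction within $\mathcal A'$, symmetric about the outward bisector, of width $\alpha'=\min(\alpha,\pi-\alpha)$) forces
$$\frac{\pi}{2}\leqslant \angle q(T'')r(T'')r(T''+1)\leqslant \pi-\frac{\alpha}{2}\leqslant\frac{5\pi}{6},$$
so the sine of this angle is at least $1/2$ \emph{regardless} of how the move is oriented relative to $ab$. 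Since $\mathcal H'(T'')\subseteq\mathcal H'(T''+1)$, the hull gains at least the triangle $\triangle q(T'')r(T'')r(T''+1)$, of area at least $\frac12\cdot a(T')\cdot\delta\cdot\sin(5\pi/6)\geqslant a(T')\delta/8$, where the base length is bounded below by the shortest-edge bound of Lemma~\ref{l:intmisc}.b---which you invoke, but only for the unneeded inequality $|ab|^2\geqslant s_0^2$. A minor further inaccuracy: in the $\alpha<\pi/3$ case the correct quantitative conclusion (via $\cos\angle s(T'')r(T'')r(T''+1)<-1/2$ and the law of cosines) is that the \emph{squared} diameter gains at least $\delta^2$ per move, not that the diameter gains a fixed fraction of $\delta$; since the diameter is bounded by that of the fixed $\mathcal H$, either form suffices, but your phrasing overstates what the cone argument yields.
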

\begin{proof}
The non-collision part has already been proven in Lemma~\ref{l:coll2}, so we need to prove that eventually all robots become external.

By assumption $\mathcal H$ has positive area, and we have to show that all robots become external in finitely many turns. If there is just one internal robot, it keeps moving somewhere within $\mathcal H$, until it either becomes external, or all external robots have been activated. When all external robots have their lights set to \emph{External}, if there is still a single internal robot, it keeps moving toward the same edge of $\mathcal H$, until it finally reaches it.

If there there are exactly two internal robots, they move as shown in Figure~\ref{fig3:a}. It is easy to see that, each time at least one of the two internal robots moves (by at least $\delta>0$), their distance increases by more than $\delta/\sqrt 2$. Therefore, after finitely many turns, at least one of the two internal robots becomes external, and at most one internal robot remains.

Suppose now that there are at least three internal robots, but $\mathcal H'$ has null area, that is, all the internal robots are collinear. Then, according to Algorithm~\ref{alg2}, only the two endpoint robots of $\mathcal H'$ are allowed to move, as Figure~\ref{fig3:a} shows. If they move in such a way that the internal robots keep remaining collinear, eventually one of them reaches the boundary of $\mathcal H$, and there is one less internal robot. Otherwise, they reach a situation in which $\mathcal H'$ has strictly positive area.

Therefore we may assume that $\mathcal H'$ has positive area, and we suppose for a contradiction that some internal robots never become external. By the previous lemmas and observations, we know that at some time $T'$ the situation becomes ``stable''. Specifically, $\mathcal H$ never changes, the set $\mathcal M$ of robots that occupy non-degenerate vertices of $\mathcal H'$ keeps remaining the same, and these robots' positions preserve their order around $\mathcal H'$, by Lemma~\ref{l:intmisc}.a. Also, the area and diameter of $\mathcal H'$ do not decrease, by Corollary~\ref{c:areadiameter}. Let $a(t)$ be the length of the shortest edge of $\mathcal H'(t)$. By Lemma~\ref{l:intmisc}.b, we know that $a(t)$ is a weakly increasing function of $t\geqslant T'$.

Suppose that, at some time $T''\geqslant T'$, some robot $r\in \mathcal M$ that is able to move is activated. By lines~53, 54 of Algorithm~\ref{alg2}, the destination point of $r$ lies on the boundary of $\mathcal H$. Hence, the adversary must stop $r$ before it reaches its destination, or it would become external. But this cannot happen before $r$ has moved by at least $\delta$, implying that $\norm{r(T''+1)-r(T'')}\geqslant\delta$. We distinguish two cases, based on the measure $\alpha$ of the internal angle of $\mathcal H'(T'')$ corresponding to vertex $r(T'')$. We will prove that, if $\alpha<\pi/3$, then the square of the diameter of $\mathcal H'(T'')$ increases by at least a constant; if $\alpha\geqslant \pi/3$, then the area of $\mathcal H'(T'')$ increases by at least a constant.

\begin{figure}[h]
\centering
\includegraphics[width=.85\linewidth]{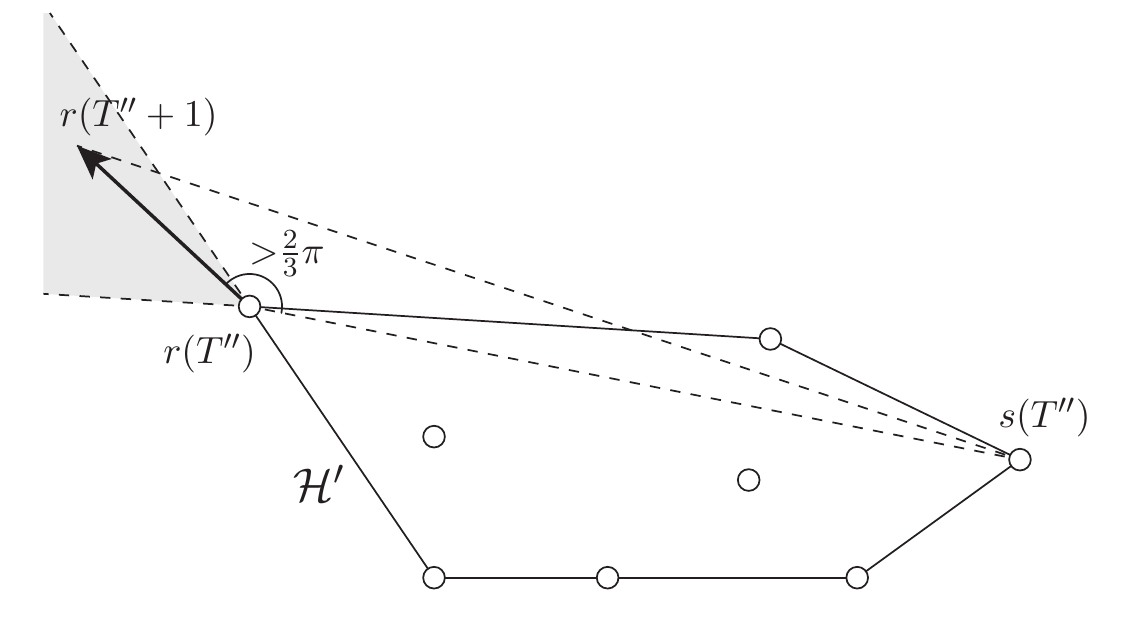}
\caption{The square of the diameter of $\mathcal H'$ increases by at least $\delta^2$}
\label{fig15}
\end{figure}

Suppose that $\alpha<\pi/3$. Let $\mathcal D(t)$ the diameter of $\mathcal H'(t)$, and let $s\in \mathcal M$ be such that $\mathcal D(T'')=\norm{r(T'')-s(T'')}$. Due to line~50 of Algorithm~\ref{alg2}, and referring to Figure~\ref{fig15}, it is easy to prove that
$$\frac 23 \pi<\angle s(T'')r(T'')r(T''+1)\leqslant \pi.$$
Hence
$$\cos(\angle s(T'')r(T'')r(T''+1))<-\frac 12.$$
Because $\mathcal H'(T'')\subseteq \mathcal H'(T''+1)$, it follows that $s(T'')\in \mathcal H'(T''+1)$, and $\mathcal D(T''+1)\geqslant \norm{r(T''+1)-s(T'')}$. Therefore, by the law of cosines applied to triangle $\triangle s(T'')r(T'')r(T''+1)$,
$$\mathcal D^2(T''+1)\geqslant \norm{r(T''+1)-s(T'')}^2> \mathcal D^2(T'')+\delta^2+\mathcal D(T'')\cdot\delta>\mathcal D^2(T'')+\delta^2.$$
Hence, in this case, the square of the diameter of $\mathcal H'$ increases by at least $\delta^2$.

\begin{figure}[h]
\centering
\includegraphics[width=.85\linewidth]{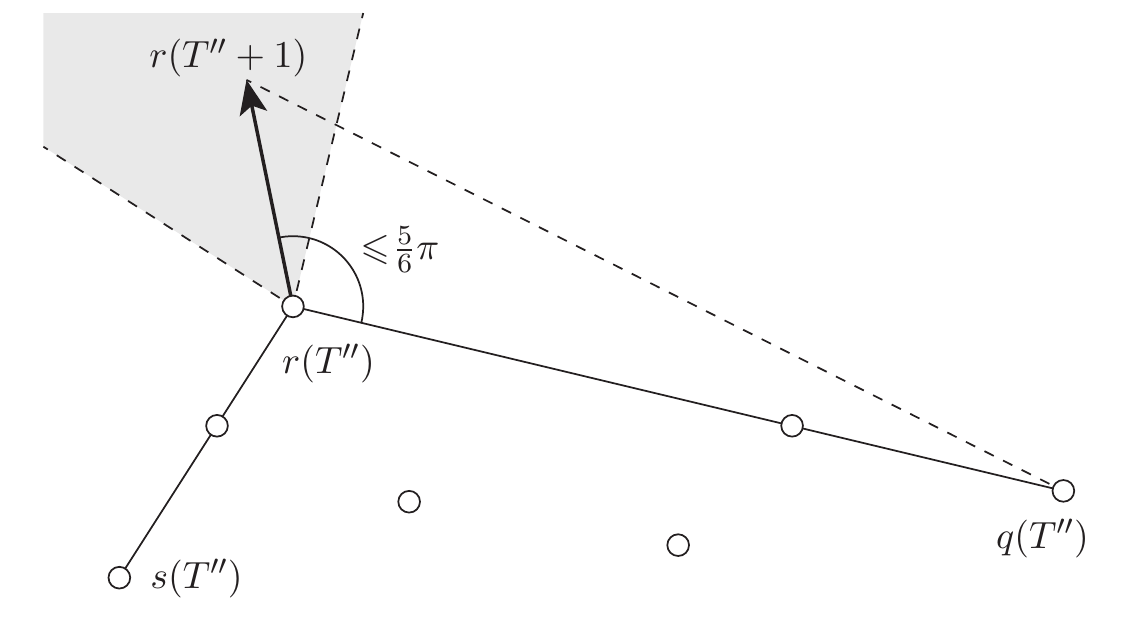}
\caption{The area of $\mathcal H'$ increases by at least $a(T')\cdot\delta/8$}
\label{fig5}
\end{figure}

Let $\alpha\geqslant \pi/3$, and let $q,s\in\mathcal M$ occupy the two vertices of $\mathcal H'(T'')$ adjacent to vertex $r(T'')$, in such a way that
$$\angle q(T'')r(T'')r(T''+1)\leqslant \angle r(T''+1)r(T'')s(T''),$$
as Figure~\ref{fig5} shows. It follows that
$$\frac \pi 2\leqslant \angle q(T'')r(T'')r(T''+1)\leqslant \frac{2\pi-\alpha}{2}\leqslant \frac 56 \pi.$$
Then, recalling that $\mathcal H'(T'')\subseteq \mathcal H'(T''+1)$, we have that the area of $\mathcal H'$ increases at least by the area of the triangle $\triangle q(T'')r(T'')r(T''+1)$, which in turn is at least
$$\frac 12\cdot\norm{q(T'')-r(T'')}\cdot\norm{r(T''+1)-r(T'')}\cdot\sin(\angle q(T'')r(T'')r(T''+1))$$
$$\geqslant \frac 12\cdot a(T'')\cdot\delta\cdot\sin(5\pi/6)\geqslant a(T')\cdot\delta/8.$$

Concluding, every time a robot of $\mathcal M$ moves, either the square of the diameter of $\mathcal H'$ increases by at least $\delta^2$, or the area of $\mathcal H'$ increases by at least $a(T')\cdot\delta/8$. But Lemma~\ref{l:canmove} states that there is always at least one robot of $\mathcal M$ that is able to move, and therefore the robots will move infinitely many times, due to the fairness of the \SSynch scheduler. From Corollary~\ref{c:areadiameter}, and from the fact $\delta^2$ and $a(T')\cdot\delta/8$ are positive constants, it follows that either the diameter or the area of $\mathcal H'$ grows unboundedly. This contradicts the fact that $\mathcal H'(t)\subseteq \mathcal H(t)$, and that $\mathcal H(t)$ is independent of $t$.
\end{proof}

\subsubsection{Vertex Adjustments Phase}

Due to Lemma~\ref{l:intfinal}, all robots become external at some point, and it remains to show that they finally reach a \emph{strictly} convex configuration and correctly terminate. This turns out to be a significantly easier task.

\begin{lemma}\label{l:extfinal}
If Algorithm~\ref{alg2} is executed from a configuration in which all robots are external and no robot's light is set to \emph{Adjusting}, then after finitely many turns all robots have terminated, no two of them have collided, and the configuration is strictly convex.
\end{lemma}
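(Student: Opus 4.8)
The plan is to observe that, under the hypotheses, the only instructions ever executed are those of the vertex-adjustment logic (lines~17--29), and the only actual movement is the default move of line~29, which coincides with the default move of Algorithm~\ref{alg1} (Figure~\ref{fig1:b}): a robot at a non-degenerate vertex with boundary neighbours $a,b$ moves to $(a+b)/4$. I would therefore reuse the geometric invariants of Section~\ref{s:invariants} verbatim: the movement regions of distinct vertex robots are pairwise disjoint, a non-degenerate vertex robot remains one after a default move, and the cyclic order of the vertex robots around $\mathcal H$ is preserved. Since robots whose light is \emph{Adjusting} and terminated robots never move, while the moving robots stay within pairwise-disjoint movement regions, no two robots ever collide; this settles the collision-freedom claim.

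Next I would quantify the geometric progress. The key property of a default move, already recorded in Section~\ref{s:alg2descr}, is that after a robot $r$ at a non-degenerate vertex moves inward, each of its two boundary neighbours becomes a non-degenerate vertex of $\mathcal H$. Let $D(t)$ be the number of robots that are \emph{not} non-degenerate vertices at time $t$, i.e., the degenerate vertices lying in the relative interior of an edge. Because vertices stay vertices, $D$ is non-increasing, and it drops strictly whenever a non-degenerate vertex robot adjacent to a degenerate one performs a default move. Since, as long as $D>0$, some degenerate vertex is adjacent to a non-degenerate vertex (namely the one next to an endpoint of an edge that carries degenerate vertices), fairness of the \SSynch scheduler forces such a move to occur, so $D$ reaches $0$ after finitely many turns. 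At that moment the configuration is strictly convex and every robot sees every other robot.

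The remaining and, I expect, hardest ingredient is the management of the three lights, which must turn this geometric progress into termination without any robot moving forever. I would establish, by induction on the activations and using fairness, that (i) no robot stays \emph{Off} forever, since \emph{Adjusting} is a transient ``self-reminder'' that is reverted to \emph{External} on the next activation (lines~20--23), so eventually any activated robot sees no \emph{Adjusting} robot and sets itself \emph{External} via line~35; and (ii) a robot becomes \emph{Adjusting} (and makes a default move) only when every robot it sees is \emph{External} (line~27), and upon its next activation it reverts to \emph{External} and terminates, since by then its neighbours are non-\emph{Off} and no internal robot is visible (lines~24--25). Combining (i)--(ii) with the progress of the previous paragraph, once $D=0$ and all lights are \emph{External} the configuration is strictly convex, so each robot performs a single default move (remaining a non-degenerate vertex) and terminates shortly after; hence after finitely many turns all robots have terminated in a strictly convex, collision-free configuration. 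The main obstacle is precisely part~(ii): I must rule out a robot oscillating forever between \emph{External} and \emph{Adjusting}, performing infinitely many default moves, which by Lemma~\ref{l:default} would make the swarm converge to a point without ever terminating. The subtlety is that a robot may revert from \emph{Adjusting} to \emph{External} \emph{without} terminating while one of its neighbours is still \emph{Off} (line~24), and that inward moves can reveal previously hidden \emph{Off} robots; the crux is thus to show, via fact~(i), that the residual \emph{Off} robots are cleared before such oscillations can accumulate, so that every robot ultimately makes only finitely many default moves and then meets the termination predicate of lines~24--25.
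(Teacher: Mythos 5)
Your overall scaffolding (collision-freedom via the disjoint movement regions of Section~\ref{s:invariants}, the cascade by which each default move promotes the mover's neighbours to non-degenerate vertices, and the light bookkeeping) matches the paper's proof of Lemma~\ref{l:extfinal}. But the step you yourself flag as the crux --- ruling out a robot oscillating between \emph{External} and \emph{Adjusting} and performing several default moves --- is exactly where your proposal stops, and the route you sketch for it (``show the residual \emph{Off} robots are cleared before such oscillations can accumulate'') is not how the difficulty is resolved, nor does it obviously work: nothing in your fact~(i) prevents, a priori, a second default move from being scheduled before the last \emph{Off} robot turns \emph{External}. The paper's resolution is stronger and purely structural: oscillation is impossible outright, because an adjusting robot \emph{always} terminates at its very next activation. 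This rests on two observations you are missing. First, after its default move the robot $r$ can see the \emph{entire} swarm: all robots are in convex position and $r$'s new location is strictly inside the old hull, so no robot can obstruct any other from $r$'s viewpoint; in particular $r$ correctly detects that there are no internal robots, so the guard of line~25 passes. Second, the two hull-boundary neighbours of an \emph{Adjusting} robot can never be \emph{Off}: $r$ moved only upon seeing all visible robots (its neighbours included) set to \emph{External} (line~27), lights never revert to \emph{Off}, and the order-preservation invariant guarantees $r$ keeps the same neighbours after its move --- so the guard of line~24 passes as well.

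The same neighbour observation also settles the guard of lines~21--22, which you did not address: if $r$, now \emph{Adjusting} and seeing everyone, sees some robot still set to \emph{Off}, then since no \emph{Adjusting} robot can be adjacent to an \emph{Off} robot, the first non-\emph{Off} robot along the hull next to the \emph{Off} ones must be \emph{External}, and $r$ sees it; hence $r$ reverts to \emph{External} and, by the previous paragraph, terminates. With this in place, each robot makes exactly one default move and then terminates, and your counting argument on the number $D(t)$ of degenerate vertices (which is a fine repackaging of the paper's cascade) goes through --- though note that its fairness step also silently needs the lights to cooperate, since a vertex robot adjacent to a degenerate one moves only when it sees all robots \emph{External}, which is again supplied by the single-adjustment-then-terminate property rather than by fairness alone. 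As written, then, the proposal correctly identifies but does not close the one genuinely delicate step of the lemma.
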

\begin{proof}
In the vertex adjustments phase, each robot eventually sets its own light to \emph{External} (if it has not already done so in the interior depletion phase). Meanwhile, as soon as a vertex robot $r$ sees only robots whose lights are set to \emph{External}, it sets its own light to \emph{Adjusting} and makes a default move, as in Figure~\ref{fig1:b}. Recall that robots are \NRigid, hence a vertex robot may be stopped before reaching its destination, but not before having moved by at least $\delta>0$. When $r$ is activated again, it sees itself in the \emph{Adjusting} state and, if it sees a robot set to \emph{Off}, it necessarily also sees a robot set to \emph{External}. Indeed, after the default move, $r$ can see all the robots in the swarm. Note that, if a robot is set to \emph{Adjusting}, neither of its two neighbors on the perimeter of the convex hull can be \emph{Off}. Hence, if $r$ sees a robot set to \emph{Off}, and since it sees itself set to \emph{Adjusting}, it must also see a robot set to \emph{External}. Therefore $r$ sets its light back to \emph{External} (lines~21--23), thus allowing other robots to move. $r$ also terminates because, as already noted, its two neighbors cannot be \emph{Off} (line~24) and, since $r$ sees every robot in the swarm and the configuration is convex, $r$ does not see any internal robots (line~25).

As observed in Section~\ref{s:ssynch}, where a similar procedure was used to reduce the size of the convex hull while increasing the set of vertex robots, when a robot occupying a vertex of the convex hull moves, it becomes a vertex of the new convex hull. On the other hand, if a robot $r$ lies in the interior of an edge of the convex hull and one of its two neighbors $s$ lies on a vertex, then, as soon as $s$ moves, $r$ becomes a vertex of the new convex hull. Hence, eventually all robots become vertices of the convex hull. After that, whenever a robot is activated, it permanently sets its light to \emph{External} and terminates. It follows that eventually all robots terminate in a strictly convex configuration.

Moreover, by the observations already made in Section~\ref{s:invariants}, and referring to Figure~\ref{fig2}, it is clear that no collisions can occur in this phase.
\end{proof}

\subsubsection{Segment Breaking Phase}\label{s:segbreak}

From Lemmas~\ref{l:intfinal} and~\ref{l:extfinal}, the correctness of Algorithm~\ref{alg2} immediately follows, provided that the robots are not initially collinear. This case is considered in the following.

\begin{theorem}\label{t:correctness2}
Algorithm~\ref{alg2} solves \Obst for \NRigid\ \SSynch robots with 3-colored lights.
\end{theorem}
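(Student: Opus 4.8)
The plan is to reduce the theorem to the work already done in Lemmas~\ref{l:intfinal} and~\ref{l:extfinal}, which together establish correctness whenever the robots start in a non-collinear configuration: Lemma~\ref{l:intfinal} drives the interior depletion phase to completion (all robots external, none \emph{Adjusting}, no collisions), and Lemma~\ref{l:extfinal} then carries out the vertex adjustments phase to a strictly convex terminal configuration. Hence the only thing left to prove is that, when the initial configuration is collinear, the segment breaking phase either solves \Obst outright (in the degenerate small-$n$ cases) or reaches, in finitely many collision-free turns, a non-collinear configuration whose light assignment is compatible with the hypotheses used in those two lemmas.

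First I would dispatch the small cases directly from the pseudocode. For $n=1$ the lone robot sees nobody and terminates at line~5. For $n=2$ each robot, seeing a single companion, sets its light to \emph{Adjusting} and steps orthogonally off the line (lines~11--12); on its next activation it satisfies $|\mathcal V|=2$ with its own light already \emph{Adjusting}, so it reverts to \emph{External} and terminates (lines~7--9). The delicate small case is $n=3$: here the two endpoints may be activated together and move symmetrically, so that the configuration stays collinear along the new perpendicular and the middle robot $q$ keeps obstructing them. I would argue that this is resolved because, once \emph{Adjusting}, the endpoints terminate as \emph{External} on their second activation (lines~8--9), after which $q$---which sees both of them set to \emph{External} on a segment---breaks the new segment via lines~13--16, landing all three robots at the vertices of a non-degenerate triangle, where each then terminates as a vertex robot.

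For the generic collinear case $n\geqslant 4$ I would show that the segment is broken without collisions and that the resulting state feeds correctly into the interior depletion analysis. Collision-freedom during breaking is immediate: only the two endpoints move, each travels orthogonally to $\mathcal H$ from a distinct base point (so their trajectories are parallel and disjoint), while every interior robot stays on the original line; hence no two robots can coincide. Once at least one endpoint has left the line, the configuration is non-collinear, and the light-propagation rules take over: a robot that recognizes itself as a non-degenerate hull vertex while seeing the \emph{Adjusting} endpoint(s) sets itself to \emph{External} (lines~31--35), and the endpoints, upon seeing an \emph{External} robot, revert from \emph{Adjusting} to \emph{External} (lines~21--23). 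The guard at lines~24--25 is what prevents an endpoint from terminating prematurely, forcing it to make at least one default move should a third external robot end up between the former endpoints; I would then verify that, after these transitions settle, the boundary robots carry the \emph{External} color and the true interior robots are still \emph{Off}, which is exactly the situation reached at the time $T$ in the proof of Lemma~\ref{l:intfinal}. From that point the interior depletion phase (Lemma~\ref{l:intfinal}) and then the vertex adjustments phase (Lemma~\ref{l:extfinal}) apply verbatim, yielding termination in a strictly convex, collision-free configuration.

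The main obstacle is the last verification in the generic case: showing that the segment breaking phase is genuinely \emph{seamless}, i.e., that no robot is left stranded in the \emph{Adjusting} state, that the symmetric and ``third robot between the endpoints'' pathologies cannot produce a livelock or a premature termination that strands the remaining collinear robots, and that the light assignment handed off to the interior depletion phase really matches the invariant assumed there (boundary \emph{External}, interior \emph{Off}). Establishing this rigorously requires a careful case analysis of which robots satisfy the compound conditions on lines~32--34 and~21--25 immediately after the endpoints' first moves, and of the possible activation orders chosen by the \SSynch adversary; the geometric reasoning is light, but the bookkeeping over light transitions and activation schedules is where the real care is needed.
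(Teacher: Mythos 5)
Your overall decomposition matches the paper's: reduce to Lemmas~\ref{l:intfinal} and~\ref{l:extfinal}, dispatch $n\leqslant 3$ by direct case analysis, and show that for $n\geqslant 4$ the segment breaking phase hands off cleanly into the interior depletion phase. However, there is a genuine gap exactly where your last paragraph defers the work, and it is not mere bookkeeping: your claim that after the hand-off Lemmas~\ref{l:intfinal} and~\ref{l:extfinal} ``apply verbatim'' fails in one reachable case. When the endpoints $r$ and $s$ move in opposite directions and both $r'$ (the robot nearest $r$) and $s'$ (the robot nearest $s$) end up at non-degenerate vertices of the convex hull, the robots $r$ and $s$ may see no \emph{Off} neighbor and no internal robot (in particular, they may be unable to see each other across the now-internal robots), so the guard at lines~24--25 does \emph{not} block them: they legitimately \emph{terminate} during the segment breaking phase. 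A terminated robot can never be activated again, hence can never make its default adjusting move; and since the proof of Lemma~\ref{l:extfinal} relies on every vertex robot eventually adjusting (it is precisely the adjustment of a vertex robot that promotes its degenerate neighbors to non-degenerate vertices, in cascade), that lemma does not cover a configuration containing two already-terminated robots. Correspondingly, your hand-off invariant ``boundary \emph{External}, interior \emph{Off}'' is not quite what is reached: it must be weakened to allow up to two terminated \emph{External} robots on the hull, and the vertex adjustments phase then needs a supplementary argument that it does not ``get stuck.''

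The paper closes this hole with a short geometric case analysis (cf.~Figure~\ref{fig10b}). If $r$ and $s$ move in the same direction, they become \emph{neighboring} vertices and all other robots become consecutive external robots, so the adjustment cascade completes. If they move in opposite directions, one distinguishes whether $s'$ becomes internal, a degenerate vertex, or a non-degenerate vertex: in the first two sub-cases lines~24--25 do fire and keep $r$ and $s$ active, exactly as you anticipated; in the third, $r$ and $s$ may indeed terminate, but then they are \emph{not} adjacent on $\mathcal H$ (they are separated by $r'$ and $s'$), so after interior depletion $r'$ and $s'$ can still adjust and propagate non-degeneracy to every remaining vertex in cascade. So your plan is recoverable, but the step you flag as ``where the real care is needed'' is the actual content of the theorem's proof beyond the two lemmas, and as written your proposal asserts rather than proves it --- and asserts it in a form (verbatim applicability of Lemma~\ref{l:extfinal}) that the terminated-endpoints case contradicts.
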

\begin{proof}
Due to Lemmas~\ref{l:intfinal} and~\ref{l:extfinal}, we only have to show that the case in which the robots are initially collinear correctly evolves into one of the other cases. If $n\leqslant 3$, this is easy to verify through a case analysis, following the algorithm's description of Section~\ref{s:alg2descr}.

So, let $n\geqslant 4$, and let robots $r$ and $s$ initially occupy the vertices of the line segment $\mathcal H(0)$. Nothing happens until $r$ or $s$ is activated; then, at least one of them becomes \emph{Adjusting} and moves orthogonally to $\mathcal H$. After $r$ or $s$ has moved, some robots eventually become \emph{External} (line~35). Indeed, if both $r$ and $s$ move, then all other robots can see both of them. In particular, there is at least one such robot occupying a non-degenerate vertex of the convex hull, which sees two \emph{Adjusting} robots, and therefore becomes \emph{External} (line~33). If only $r$ moves (or vice versa), then $s$ sees only three robots (including itself), and its corresponding convex hull angle is acute, hence it becomes \emph{External} (line~32). In the latter case, only $s$ is allowed to become \emph{External}, while the other robots wait, because their corresponding angles are not acute. When a robot has become \emph{External}, any robot that was \emph{Adjusting} can see it, and hence becomes \emph{External} as well. If, in addition, neither of its two neighbors is \emph{Off}, it also terminates (note that an \emph{Adjusting} robot must be located on the convex hull's boundary).

At this point the execution proceeds normally, except that there may be one of two vertex robots that have already terminated, and we have to show that this does not prevent the others from forming a strictly convex configuration. Obviously the interior depletion phase causes no trouble and it is carried out correctly, but the vertex adjustments phase might ``get stuck''. We will prove that this is not the case. Clearly, if at most one robot has terminated, all robots except perhaps one are able to move in the vertex adjustments phase, and therefore they all become non-degenerate vertices. If $r$ and $s$ initially move in the same direction, they become neighboring vertices, and all the other robots become consecutive external robots. It is easy to see that in this case the external robots are still able to make adjusting movements in cascade, and become non-degenerate vertices.

\begin{figure}[h]
\centering
\subfigure[$s'$ becomes internal]{\label{fig10b:a}\includegraphics[width=.715\linewidth]{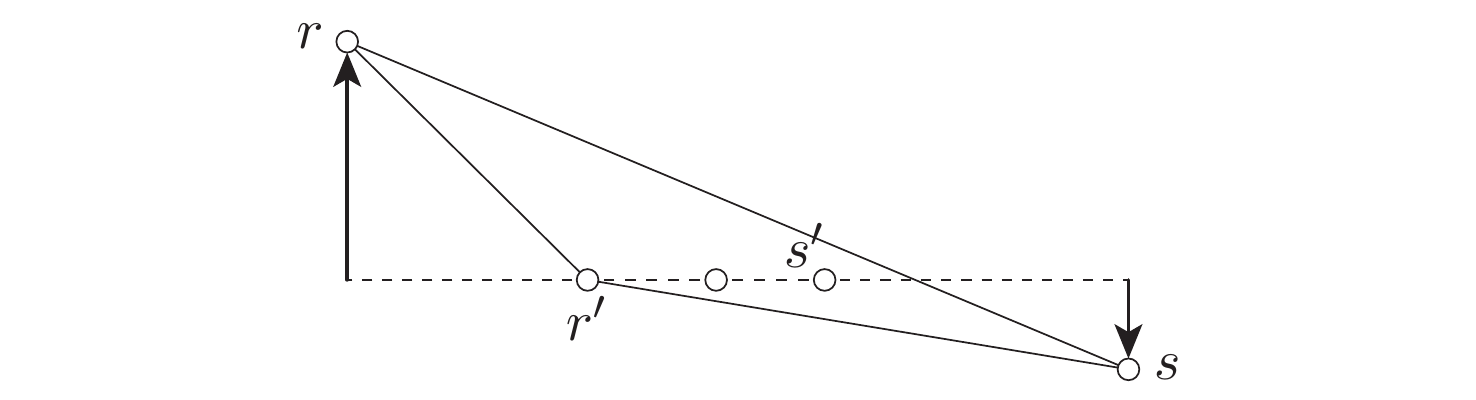}}
\subfigure[$s'$ becomes a degenerate vertex of the convex hull]{\label{fig10b:b}\includegraphics[width=.715\linewidth]{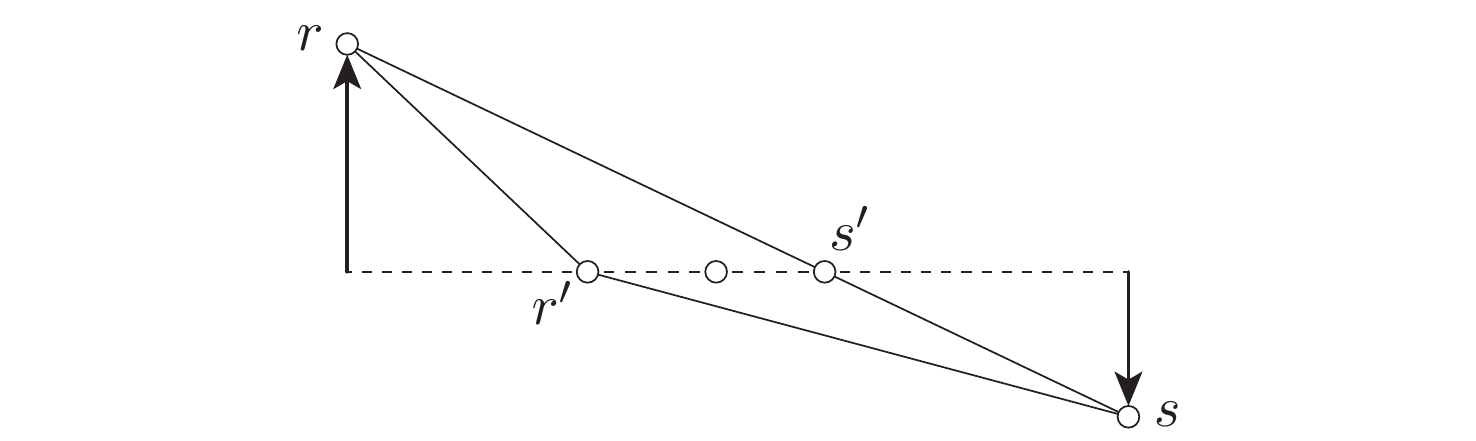}}
\subfigure[$s'$ becomes a non-degenerate vertex of the convex hull]{\label{fig10b:c}\includegraphics[width=.715\linewidth]{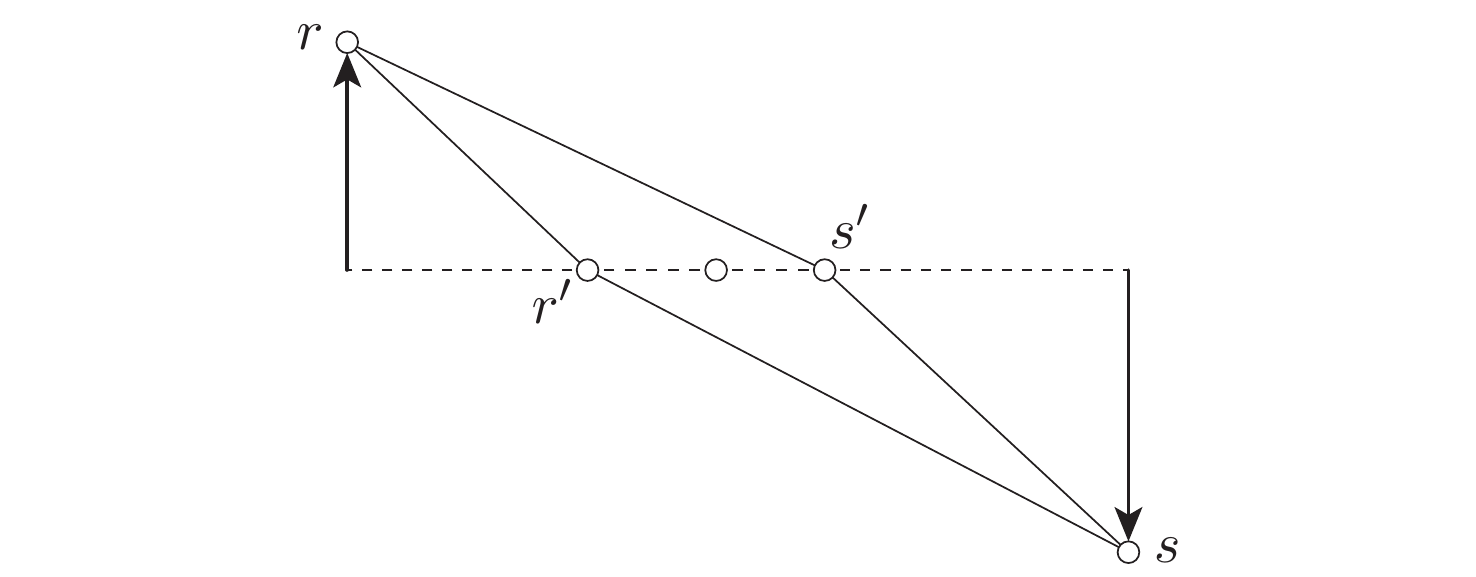}}
\caption{Possible evolutions of a collinear configuration in which the endpoints move in opposite directions}
\label{fig10b}
\end{figure}

Finally, suppose that $r$ and $s$ initially move in opposite directions. Let $r'$ be the robot closest to $r$, and $s'$ the robot closest to $s$. Since $n\geqslant 4$, $r'\neq s'$. As already noted, at least one between $r'$ and $s'$ becomes a non-degenerate vertex of the convex hull, say $r'$. On the other hand, depending on how much $r$ and $s$ move, $s'$ may become internal, or a degenerate or non-degenerate vertex of the convex hull. If $s'$ becomes internal, as in Figure~\ref{fig10b:a}, then both $r$ and $s$ see an internal robot: indeed, after the move, $r$ and $s$ can see all robots. Then, $r$ and $s$ become \emph{External} but do not terminate (line~25), and everything works as intended in the later phases. If $s'$ becomes a degenerate vertex of the convex hull, as in Figure~\ref{fig10b:b}, then it is an \emph{Off} neighbor of both $r$ and $s$, which once again become \emph{External} but do not terminate (line~24). Again, the execution transitions seamlessly into another phase. If $s'$ becomes a non-degenerate vertex of the convex hull, as well as $r'$, then all the robots between them become internal. Note that, in this case, $r$ and $s$ may terminate (indeed, they may not be able to see each other, and hence they may not realize that there are internal robots), but they do not lie at adjacent vertices of $\mathcal H$, due to the presence $r'$ and $s'$. After the interior depletion phase, $r'$ and $s'$ are able to adjust, thus enabling all other external robots to become non-degenerate vertices, in cascade.
\end{proof}

\section{Solving \Obst for \ASynch Robots}\label{s:asynch}

In this section we briefly touch on the \ASynch model. In the \Rigid case, we show that Algorithm~\ref{alg2} solves the \Obst problem. In the \NRigid case, we show how to solve \Obst assuming that the robots agree on the direction of one coordinate axis.

\subsection{\Rigid\ \ASynch Robots}\label{s:rasynch}

Algorithm~\ref{alg2} turns out to solve the \Obst problem for \Rigid\ \ASynch robots, as well. For the interior depletion phase, the collision avoidance proof gets slightly more complex, but termination is easier to prove. On the other hand, the vertex adjustments phase and the segment breaking phase work almost in the same way.

First we state an equivalent of Lemma~\ref{l:coll1}. The only difference is that, instead of a generic time $t\in\mathbb N$, now we have to consider a specific time $t\in\mathbb R$ at which a robot $r$ performs a \Look. Also, instead of considering the position of $r$ at time $t+1$, we consider the destination point computed after such a \Look. After these changes, the proof of Lemma~\ref{l:coll1} works in the \ASynch case as well, and therefore we have the following.

\begin{lemma}\label{l:coll3}
Let \Rigid\ \ASynch robots execute Algorithm~\ref{alg2}, and let $r$ and $s$ be two internal robots at time $t\in\mathbb R$. If $r$ executes a \Look phase at time $t$, and the next destination point of $r$ is $p$, then
$$(p-r(t))\bullet(s(t)-r(t))\leqslant 0.$$
\hfill\qed
\end{lemma}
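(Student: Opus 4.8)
The plan is to adapt the proof of Lemma~\ref{l:coll1} almost verbatim, tracking only the changes forced by the asynchronous model. In \ASynch, the \Look and the resulting \Move are separated in time, so the key observation is that the destination point $p$ is computed from the snapshot taken at the \Look time $t$, and this snapshot already determines the geometric relationship we need. Concretely, when $r$ executes its \Look at time $t$, it sees the current positions of the other robots (in particular $s(t)$), and from these it computes $p$ according to the movement rules of Algorithm~\ref{alg2} (lines~43--54). The entire argument of Lemma~\ref{l:coll1} is purely about the relative position of $p$ with respect to the line $\ell$ through $r(t)$ orthogonal to $r(t)p$, and this is decided entirely by the snapshot at time $t$.

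First I would handle the trivial case: if $p=r(t)$ (the robot does not move, or computes a null displacement), the left-hand side is zero and the inequality holds. Otherwise, $p-r(t)$ is a genuine nonzero vector. I would then let $\ell$ be the line through $r(t)$ orthogonal to the segment $r(t)p$, exactly as before. The crucial structural fact, inherited unchanged from the description of the algorithm, is that $r$ always computes its destination so that $p$ lies in the open half-plane bounded by $\ell$ that does \emph{not} contain $\mathcal H'(t)$ — this is a property of how the movement regions are defined relative to $\mathcal H'$ at the \Look instant, and it holds \emph{a fortiori} even when $r$'s computed $\mathcal H'$ is an overestimate of the true one (because some external robots may not yet have set their lights to \emph{External}). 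Since $s$ is internal at time $t$, we have $s(t)\in \mathcal H'(t)$, so $s(t)$ lies on $\ell$ or on the far side from $p$. Translating this into an inequality on the dot product gives $(p-r(t))\bullet(s(t)-r(t))\leqslant 0$, which is the claim.

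The only genuine subtlety — and the step I would flag as the place where care is needed — is justifying that the geometric invariant ``$p$ is separated from $\mathcal H'(t)$ by $\ell$'' still refers to $\mathcal H'(t)$ evaluated at the \Look time $t$, and not at some later (and possibly smaller or differently-shaped) moment. In \SSynch the \Look, \Compute, and \Move happen atomically, so there was no ambiguity; here I must emphasize that $p$ is a function of the snapshot at $t$ alone, and that $s(t)$ is read from that same snapshot, so the separation argument is internally consistent at the single time instant $t$. Once this is made explicit, no further computation is required, and the conclusion follows exactly as in Lemma~\ref{l:coll1}; hence the statement holds with $\qed$.
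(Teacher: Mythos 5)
Your proposal is correct and is essentially the paper's own argument: the paper proves this lemma precisely by rerunning the half-plane separation proof of Lemma~\ref{l:coll1}, with the only changes being that $t\in\mathbb R$ is the specific \Look time and the destination point $p$ (computed from the snapshot at $t$) replaces $r(t+1)$, including the same \emph{a fortiori} remark about $r$'s computed $\mathcal H'$ possibly overestimating the true one. Your closing observation about the snapshot at time $t$ being the sole input to $p$ is exactly the point the paper makes when stating which substitutions are needed, so no gap remains.
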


The previous lemma can be used to prove that no collisions occur during the interior depletion phase.

\begin{lemma}\label{l:coll3b}
If \Rigid\ \ASynch robots execute Algorithm~\ref{alg2} from a non-collinear configuration, no collisions occur as long as there are internal robots.
\end{lemma}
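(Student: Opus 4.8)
The plan is to adapt the collision-avoidance argument of Lemma~\ref{l:coll2} to the \ASynch setting, relying on the \ASynch analogue of Lemma~\ref{l:coll1}, namely Lemma~\ref{l:coll3}. The main conceptual difficulty in \ASynch is that two robots need not move in lockstep: a robot may \Look at one time, compute a destination, and then execute its \Move over an arbitrary time interval, during which other robots may look and move. Since the moves are \Rigid, however, a robot that computes destination $p$ after a \Look will eventually arrive exactly at $p$ (barring termination), so the trajectory of each internal robot between two of its own \Look phases is a straight segment toward its computed destination. The key observation I would exploit is that, thanks to Observation~\ref{obs:depletion}, while internal robots exist no external robot moves and $\mathcal H$ is constant; moreover each internal robot stays strictly inside (or on) $\mathcal H'$ as it moves along a segment whose direction is controlled by Lemma~\ref{l:coll3}.

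First I would set up a minimal-counterexample argument. Suppose for contradiction that a collision occurs during the interior depletion phase, and let $t^*\in\mathbb R$ be the first time two robots $r$ and $s$ occupy the same point $p$. Both $r$ and $s$ must be internal throughout $[0,t^*)$ (external robots do not move), and at least one of them is genuinely moving just before $t^*$. Let $t_r\leqslant t^*$ be the time of the last \Look performed by $r$ before $t^*$, and similarly $t_s$ for $s$; each of $r,s$ is travelling in a straight line toward its computed destination during the relevant sub-interval ending at $t^*$. The plan is to apply Lemma~\ref{l:coll3} to $r$ at time $t_r$ and to $s$ at time $t_s$, obtaining that the destination direction of each robot makes a non-acute angle with the vector pointing to the \emph{other} robot's position \emph{as seen at its own look time}. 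The obstacle is that these two inequalities are now evaluated at two different times $t_r$ and $t_s$, whereas in the \SSynch proof they were at the common time $t$; the vectors $s(t_r)-r(t_r)$ and $r(t_s)-s(t_s)$ refer to different configurations, so I cannot simply add them as in Lemma~\ref{l:coll2}.

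To overcome this, I would argue that the straight trajectories of $r$ and $s$ cannot cross before each reaches its destination. Concretely, let $\ell_r$ be the line through $r(t_r)$ orthogonal to $r$'s motion; by Lemma~\ref{l:coll3}, $r$ moves into the open half-plane of $\ell_r$ not containing $\mathcal H'(t_r)$, and since $s$ is internal it lies in the closed complementary half-plane at time $t_r$. Because $\mathcal H'$ only shrinks or stays put while internal robots exist (the internal robots move toward the boundary of the fixed $\mathcal H$ but stay inside $\mathcal H'$ relative to the separating lines), $s$ remains on the $\mathcal H'$-side of $\ell_r$ for the whole duration of $r$'s move, so $s$ never crosses $\ell_r$ into $r$'s half-plane. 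The symmetric statement holds for $s$'s motion relative to $\ell_s$. I would then show that a common point $p=r(t^*)=s(t^*)$ would have to lie simultaneously on the ``far'' side of $\ell_r$ (because $r$ reached it by moving across $\ell_r$) and on the ``near'' side relative to $s$, or vice versa, yielding a contradiction exactly as the final chain of dot-product inequalities does in Lemma~\ref{l:coll2}, but now realised geometrically via the two separating lines rather than by a single algebraic addition.

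The main obstacle, as noted, is reconciling the two distinct look-times; the cleanest resolution I expect is a ``separating-line'' invariant: for each internal robot $r$, from the instant of its \Look until it reaches its destination, every other internal robot stays in the closed half-plane bounded by $\ell_r$ on the $\mathcal H'$-side. Establishing this invariant requires checking that the destination points computed by the algorithm (lines~41, 45, 54) always lie on $\partial\mathcal H$ on the correct side, and that no intervening motion of a third robot can push $s$ across $\ell_r$ — which follows because $\mathcal H'$ never grows while internal robots remain (an \ASynch analogue of the monotonicity used in Observation~\ref{obs:depletion}). Once this invariant is in place, the first-collision assumption forces $r$ and $s$ to have already coincided, contradicting the choice of $t^*$, and the lemma follows. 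Since the statement here is only asserted and its proof presumably mirrors Lemma~\ref{l:coll2} with these adjustments, I would keep the write-up short, citing Lemma~\ref{l:coll3} and Observation~\ref{obs:depletion} and emphasising only the separating-line invariant that replaces the single-time addition of inequalities.
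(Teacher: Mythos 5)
Your proposal correctly identifies the crux --- the two applications of Lemma~\ref{l:coll3} are anchored at different \Look times, so the single-time addition of Lemma~\ref{l:coll2} is unavailable --- but the ``separating-line invariant'' you use to bridge the gap is false, and the paper's proof goes a different way. The invariant fails because nothing confines an internal robot $s$ to the $\mathcal H'$-side of $\ell_r$ after time $t_r$: Lemma~\ref{l:coll3} only constrains the positions of the \emph{other} robots at the instant of $r$'s \Look. Concretely, if $s$ occupies a sharp (acute) vertex of $\mathcal H'$, the algorithm lets it move anywhere inside the cone obtained by reflecting its internal angle (line~50), and such directions can have a positive component across $\ell_r$; e.g., with internal robots at $(0,0)$, $(-1,10)$, $(-2,0)$ and $r$ at the origin moving along the positive $x$-axis, the robot at $(-1,10)$ may legally move in direction $(1,10)$ and cross the $y$-axis $=\ell_r$ at a point far above $r$'s trajectory, while $r$ is still in transit. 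Your justification for the invariant is also backwards on two counts: $\mathcal H'$ does not ``shrink or stay put'' --- internal robots move \emph{outward}, and the paper proves $\mathcal H'(t)\subseteq\mathcal H'(t+1)$ (Lemma~\ref{l:intgrow}) --- and Observation~\ref{obs:depletion} asserts the constancy of $\mathcal H$, not of $\mathcal H'$. A telling sanity check: your argument never uses, at the critical step, anything special to \Rigid robots (per-move straight segments hold in \NRigid\ \ASynch too), so if it were valid it would prove collision freedom in \NRigid\ \ASynch as well --- but the paper explicitly notes (Section~\ref{sec:axis}) that collisions do occur there.

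What the paper exploits instead is a structural consequence of rigidity that your write-up mentions but never uses: since every internal robot's computed destination lies on $\partial\mathcal H$ (line~53) and moves are rigid, each internal robot makes \emph{exactly one} non-null move and then becomes external. Hence, taking $t$ to be $r$'s \Look time and $t'\geqslant t$ to be $s$'s (WLOG), $s$ is still stationary at its look, giving $s(t)=s(t')$, and $r(t)$, $r(t')$, $r(t'')$ are collinear along $r$'s unique segment. These two facts let the paper chain dot-product inequalities across the two look times: Lemma~\ref{l:coll3} applied to $s$ at $t'$ gives $(s(t'')-s(t'))\bullet(r(t')-s(t'))\leqslant 0$, which combined with collinearity of $r$'s trajectory and $r(t'')=s(t'')$ yields $(s(t'')-s(t'))\bullet(r(t)-s(t'))\leqslant 0$; adding Lemma~\ref{l:coll3} applied to $r$ at $t$ then forces $\norm{r(t)-s(t)}\leqslant 0$, contradicting the first-collision assumption. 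To repair your proof you would need to replace the separating-line invariant by exactly this one-move observation; without it the geometric picture you describe simply does not hold.
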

\begin{proof}
Suppose for a contradiction that the internal robot $r$ performs a \Look at time $t$, then robot $s$  performs a \Look at time $t'\geqslant t$, and they collide at time $t''\geqslant t'$, in $r(t'')=s(t'')$. We may further assume that this is the first collision between the two robots, and therefore $r(t)\neq s(t)$. Because the model is \Rigid and each internal robot's destination point is on the convex hull, it follows that each internal robot makes exactly one move and then becomes external. Therefore, $r(t)$, $r(t')$, $r(t'')$, and the destination point of $r$ are all collinear, and the same holds for $s$. Additionally, we have $s(t)=s(t')$ (see Figure~\ref{fig9}).

\begin{figure}[h]
\centering
\includegraphics[width=.65\linewidth]{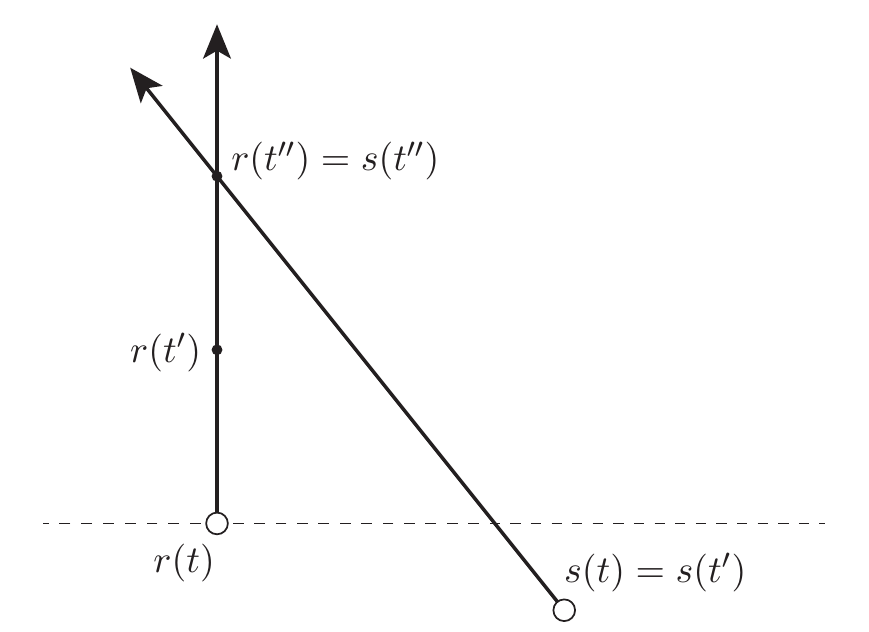}
\caption{Two colliding internal robots}
\label{fig9}
\end{figure}

By Lemma~\ref{l:coll3} applied to $s$ at time $t'$, and because $s(t'')$ lies between $s(t')$ and the destination point of $s$, we have
$$(s(t'')-s(t'))\bullet (r(t')-s(t'))\leqslant 0.$$
On the other hand, $\norm{s(t'')-s(t')}\geqslant 0$, implying that
$$(s(t'')-s(t'))\bullet (s(t')-s(t''))\leqslant 0.$$
By adding the two inequalities together, we obtain
$$(s(t'')-s(t'))\bullet (r(t')-s(t''))\leqslant 0.$$
Recall that $s(t'')=r(t'')$ and that $r(t')$ lies between $r(t)$ and $r(t'')$, and therefore the last inequality implies
$$(s(t'')-s(t'))\bullet (r(t)-r(t'))\leqslant 0,$$
hence
$$(s(t'')-s(t'))\bullet (r(t)-s(t')+s(t')-r(t'))\leqslant 0,$$
and
$$(s(t'')-s(t'))\bullet (r(t)-s(t'))\leqslant (s(t'')-s(t'))\bullet (r(t')-s(t')).$$
But we already know that the right-hand side is not positive, hence so is the left-hand side:
$$(s(t'')-s(t'))\bullet (r(t)-s(t'))\leqslant 0.$$
Now, by Lemma~\ref{l:coll3} applied to $r$ at time $t$, and recalling that $r(t'')$ lies between $r(t)$ and the destination point of $r$, we have
$$(r(t'')-r(t))\bullet (s(t)-r(t))\leqslant 0.$$
If we add together the last two inequalities and we recall that $s(t')=s(t)$, we get
$$(r(t)-s(t))\bullet (r(t)-r(t'')+s(t'')-s(t'))\leqslant 0.$$
Because $r(t'')=s(t'')$ and $s(t')=s(t)$, we finally obtain
$$(r(t)-s(t))\bullet (r(t)-s(t))\leqslant 0,$$
which is equivalent to $\norm{r(t)-s(t)}\leqslant 0$, implying that $r(t)=s(t)$, a contradiction.
\end{proof}

We can now prove that Algorithm~\ref{alg2} works also with \Rigid\ \ASynch robots.

\begin{theorem}\label{t:correctness3}
Algorithm~\ref{alg2} solves \Obst for \Rigid\ \ASynch robots with 3-colored lights.
\end{theorem}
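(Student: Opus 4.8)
The plan is to follow the skeleton of the proof of Theorem~\ref{t:correctness2}, splitting the argument along the three phases of Algorithm~\ref{alg2} and recycling as much of the \SSynch analysis as possible, since only one phase genuinely changes. For the segment breaking phase and the vertex adjustments phase, which the opening of Section~\ref{s:rasynch} promises ``work almost in the same way,'' I would verify that their analyses are essentially synchrony-independent. Collision-freeness in both phases rests on the disjointness of the movement regions (cf.~Section~\ref{s:invariants} and Figure~\ref{fig2}), which is a purely geometric property of the instantaneous configuration and therefore survives any interleaving of the Look, Compute, and Move phases of distinct robots. Termination is enforced by the \emph{Adjusting} flag: a vertex robot sets its light to \emph{Adjusting}, makes a single default move, and on its next activation reverts to \emph{External} and terminates (lines~20--25, 28--29). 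Because this mechanism is per-robot and uses the light precisely as the one bit of memory needed to survive the loss of obliviousness across a cycle, the case analysis of Lemma~\ref{l:extfinal} and of the collinear case in Theorem~\ref{t:correctness2} should carry over once I confirm that no robot ever observes an inconsistent intermediate state.

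The new work concerns the interior depletion phase, where, relative to the \NRigid\ \SSynch setting, the two halves of the argument trade difficulty. Collision avoidance is already in hand: Lemma~\ref{l:coll3b} shows that no two internal robots collide while internal robots remain, and no internal robot moves onto an external one by construction. What is left is termination, and here rigidity simplifies matters dramatically. The first fact I would record is that, as long as internal robots exist, no external robot moves (it still sees \emph{Off} robots), so $\mathcal H$ stays constant; furthermore, the line~53 safety check guarantees that an internal robot only commits to a destination lying on a completely-visible edge of the \emph{real} $\mathcal H$ whose endpoints are \emph{External}. Since movements are \Rigid, such a robot reaches its computed destination exactly, and because $\mathcal H$ has not changed in the interim, it lands on $\partial\mathcal H$ and becomes external. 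Hence each internal move converts one internal robot into an external one.

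Termination then follows by a simple monotonicity argument, in place of the area/diameter growth estimates of Lemma~\ref{l:intfinal}. By fairness, after finitely many activations every external robot has set its light to \emph{External}; from that point Lemma~\ref{l:canmove}---whose statement and proof are geometric, depending only on the shapes of $\mathcal H$ and $\mathcal H'$ and not on the scheduler---guarantees that whenever $\mathcal H'$ is non-degenerate there is always an internal robot that, if activated, makes a non-null move, while the degenerate sub-cases (one internal robot, two internal robots, or collinear internal robots) are disposed of exactly as in Lemma~\ref{l:intfinal} and are in fact immediate under rigidity. Since every such move strictly decreases the finite number of internal robots, and \ASynch fairness ensures the movable robot is eventually activated, the phase must terminate with all robots external and none set to \emph{Adjusting}, whereupon Lemma~\ref{l:extfinal} delivers the conclusion.

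I expect the main obstacle to lie not in interior depletion---where rigidity is a gift and no quantitative $\delta$-estimate is needed---but in rigorously discharging the \ASynch subtleties of the vertex adjustments phase. In \ASynch a robot may perform its Look while a neighbor sits between its own Look and Move, so I must check that destinations computed from a possibly stale snapshot remain consistent with the disjoint-movement-region picture, that a robot whose light is still \emph{Off} but which has already reached $\partial\mathcal H$ does not corrupt another robot's computation of $\mathcal H'$, and that the \emph{Adjusting}/\emph{External} toggling never lets a robot terminate before it can see the whole, genuinely convex configuration nor lets two robots occupy overlapping regions. These are exactly the scenarios the ``extra technical subtleties'' deliberately inserted into Algorithm~\ref{alg2} (as flagged at the start of Section~\ref{s:ssynch2}) are designed to absorb, so the bulk of the effort will be a careful case check that each subtlety does its job under arbitrary interleavings, rather than any new geometric bound.
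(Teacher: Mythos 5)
Your treatment of the interior depletion phase is sound and is essentially the paper's own: collision avoidance comes from Lemma~\ref{l:coll3b}, and termination from the observation that rigidity plus the line~53 safety check makes every internal move terminal (the mover lands on $\partial\mathcal H$, which is frozen while internal robots exist, and becomes external in one move), with Lemma~\ref{l:canmove} and fairness supplying the next mover; your direct monotone-counting formulation is equivalent to the paper's argument by contradiction, and your explicit handling of the degenerate sub-cases of $\mathcal H'$ is if anything slightly more careful. The genuine gaps are in the two phases you defer to ``a careful case check.'' For the vertex adjustments phase, the missing idea is the paper's synchronization argument: a robot moves only when every robot it sees is \emph{External} (line~27), whereas any robot in the middle of a move is lit \emph{Adjusting} and, having made a default move, is visible to all others; hence the only possible concurrency is several robots taking their snapshots before any of them recolors, which is indistinguishable from simultaneous activation in a single \SSynch round. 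This reduction---``the robots synchronize themselves, and we may pretend they are \SSynch{}''---is precisely what lets Lemma~\ref{l:extfinal} be applied verbatim, and it also disposes of your stale-snapshot worry, since such snapshots coincide with same-round \SSynch snapshots for which the disjoint-movement-region argument is already proved. You name this worry but do not resolve it, and it is the crux of the new proof rather than a routine verification.

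More seriously, in the segment breaking phase your guiding hypothesis that ``no robot ever observes an inconsistent intermediate state'' is false, and the paper's proof is built around exactly such inconsistencies. When the endpoint robots $r$ and $s$ move asynchronously in opposite directions (Figure~\ref{fig10:b}), their neighbors $r'$ and $s'$ may oscillate between being internal and external while $r$ and $s$ are mid-move, and may \emph{permanently} set their lights to \emph{External} at a moment when they are, or again become, internal---an \emph{External}-lit internal robot, which threatens both the termination guards (lines~24--25) and other robots' computations. The paper repairs this through a specific use of rigidity that your plan never invokes: line~12 prescribes a move by exactly the length of $\mathcal H$, and because the model is \Rigid this exact displacement is actually realized, which guarantees that $r'$ and $s'$ end up as non-degenerate vertices of the resulting hull, so their prematurely set colors are \emph{eventually} consistent with their positions. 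Without this exact-displacement argument, the claimed carry-over of Theorem~\ref{t:correctness2}'s case analysis does not go through, so the segment breaking phase is not merely ``almost the same'' but requires new, model-specific reasoning that your proposal leaves unsupplied.
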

\begin{proof}
In the interior depletion phase there can be no collisions, due to Lemma~\ref{l:coll3b}. Also, whenever an internal robot moves, its destination lies on the boundary of the ``real'' $\mathcal H$ (cf.~line~53 of Algorithm~\ref{alg2}). Since movements are rigid, such a robot becomes external in a single move. Suppose for a contradiction that the interior depletion phase does not terminate. Then, at some point, the set of external robots reaches a maximum, all the external robots are set to \emph{External}, and no robot is moving (indeed, moving internal robots are bound to become external). Hence, Lemma~\ref{l:canmove} can be applied. As a consequence, there is some internal robot that is able to move, and which will therefore reach the convex hull's perimeter at the end of its next \Move phase, thus becoming external. This contradicts our assumptions. Therefore, in finite time all robots become external, and the interior depletion phase terminates.

When all robots are external, none of them moves unless it sees only robots set to \emph{External} (line~27). This means that, in the vertex adjustments phase, a robot waits until it is sure that no robot is in the middle of a move (note that this holds also for robots that it cannot see, because as soon as one of them moves it becomes visible to all other robots). Indeed, a robot sets itself to \emph{Adjusting} right before starting to move and sets itself back to \emph{External} when it is done moving. Hence the robots synchronize themselves, and we may pretend them to be \SSynch, as opposed to \ASynch. Then, the proof proceeds exactly as in Lemma~\ref{l:extfinal}.

\begin{figure}[h]
\centering
\subfigure[$r$ and $s$ move in the same direction]{\label{fig10:a}\includegraphics[width=.75\linewidth]{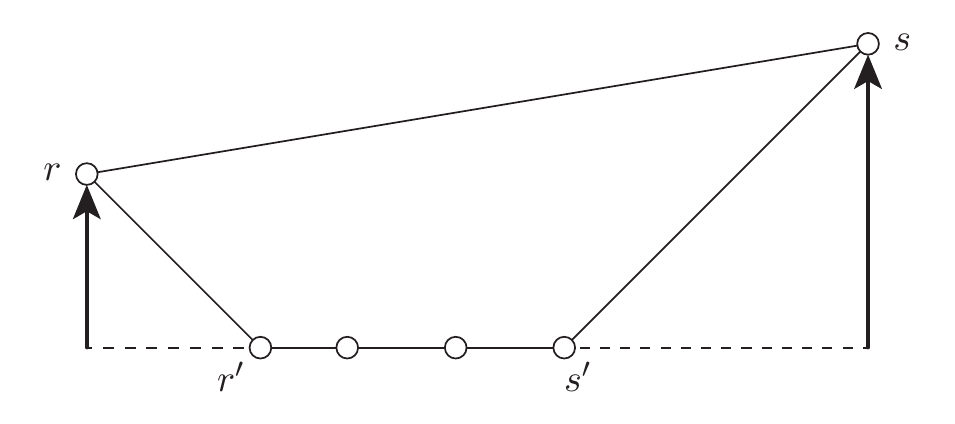}}
\subfigure[$r$ and $s$ move in opposite directions]{\label{fig10:b}\includegraphics[width=.75\linewidth]{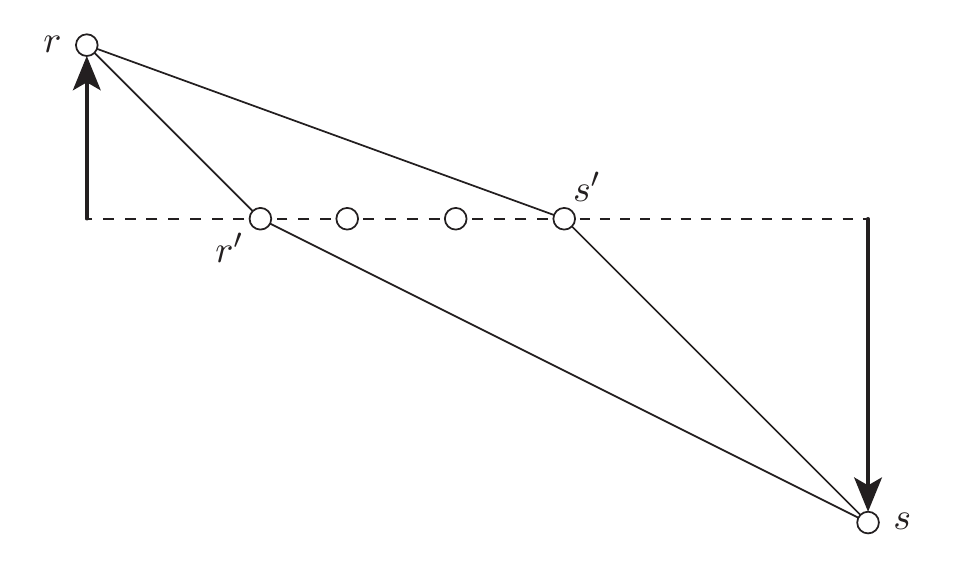}}
\caption{Evolutions of a collinear configuration in which both endpoints move}
\label{fig10}
\end{figure}

In the case in which the robots are initially collinear, the proof follows the lines of Theorem~\ref{t:correctness2}, with a few differences. Indeed, despite being \ASynch, the robots manage to wait for each other and synchronize their actions. Suppose that one endpoint robot $r$ becomes \emph{Adjusting} and starts moving to its destination. Then, every robot is bound to wait for the other endpoint robot, $s$, to take action. So, $s$ could either become \emph{Adjusting} as well and start moving (if it performed its \Look before $r$ started moving), or it could notice $r$ and become \emph{External}. If $r$ and $s$ are both \emph{Adjusting} and moving asynchronously, some other robots eventually become \emph{External}, but do not move yet. In particular, referring to Figure~\ref{fig10}, at least robots $r'$ and $s'$ can become external in this phase. Notice that, if $r$ and $s$ move asynchronously in opposite directions (Figure~\ref{fig10:b}), $r'$ and $s'$ may switch between being internal and being external several times. However, as soon as they set their light to \emph{External}, they do not set it back to \emph{Off} even if they become internal again. But $r$ moves exactly by $r(0)r'(0)$, and $s$ moves exactly by $s(0)s'(0)$ (line~12), because the model is \Rigid. This movement length is chosen in such a way that both $r'$ and $s'$ eventually become vertex robots, as Figure~\ref{fig10:b} suggests. Therefore the colors of $r'$ and $s'$ are eventually consistent, despite asynchrony. So, every robot waits for both $r$ and $s$ to see some \emph{External} robots and thus become \emph{External} themselves. Only then do other robots start moving (lines~26--29). As a consequence, we may once again pretend that the robots in this phase are \SSynch, and the proof is completed as in Theorem~\ref{t:correctness2}.
\end{proof}

\subsection{\NRigid\ \ASynch Robots with Agreement on One Axis}
\label{sec:axis}

Unfortunately, for \NRigid\ \ASynch robots, our correctness proof of the interior depletion phase of Algorithm~\ref{alg2} fails. Indeed, to prove collision avoidance, we used in a crucial way the fact that, if two internal robots are moving at the same time, then at most one of them saw the other robot in the middle of a movement. This is true under the \NRigid\ \SSynch model (obviously) and under the \Rigid\ \ASynch model (because each internal robot becomes external after only one move), but not under the \NRigid\ \ASynch model. In this model, an internal robot $r$ may perform different moves in different directions before becoming external. For instance, if $r$'s first movement is stopped by the adversary and, in the meantime, new robots have become \emph{External} or new robots have become visible, $r$ may decide to move in a significantly different direction the second time. This, paired with the ability of the \ASynch scheduler to hold a moving robot for an indefinitely long time and then release it and let it complete its move, does cause collisions in some (quite pathological) cases. On the other hand, however, the vertex adjustments phase of Algorithm~\ref{alg2} works under all models; therefore we only need to replace the interior depletion phase and the segment breaking phase.

With the additional assumption that all robots agree on one axis, there is an easy way to fix the interior depletion phase, which is illustrated in Figure~\ref{fig6}. Say that the robots agree on the $y$ axis, i.e., they agree on the ``North'' direction, but they may disagree on ``East'' and ``West''. Then, if an internal robot sees that every robot that lies to the North is set to \emph{External} (i.e., if its own $y$ coordinate is maximum among the internal robots), it is eligible to move. If there is a row of several internal robots that are all eligible to move (as in Figure~\ref{fig6}), then only the two endpoints are allowed to move, and the others wait. The left endpoint moves to the upper-left quadrant, and the right endpoint moves to the upper-right quadrant, and their destination points are on the convex hull, but not on locations already occupied by external robots. To guarantee termination, we make each robot move straight to the North toward the boundary of the convex hull of the visible robots, unless there are external robots in the way. In this special case, we make the robot move slightly sideways.

\begin{figure}[h]
\centering
\includegraphics[width=.8\linewidth]{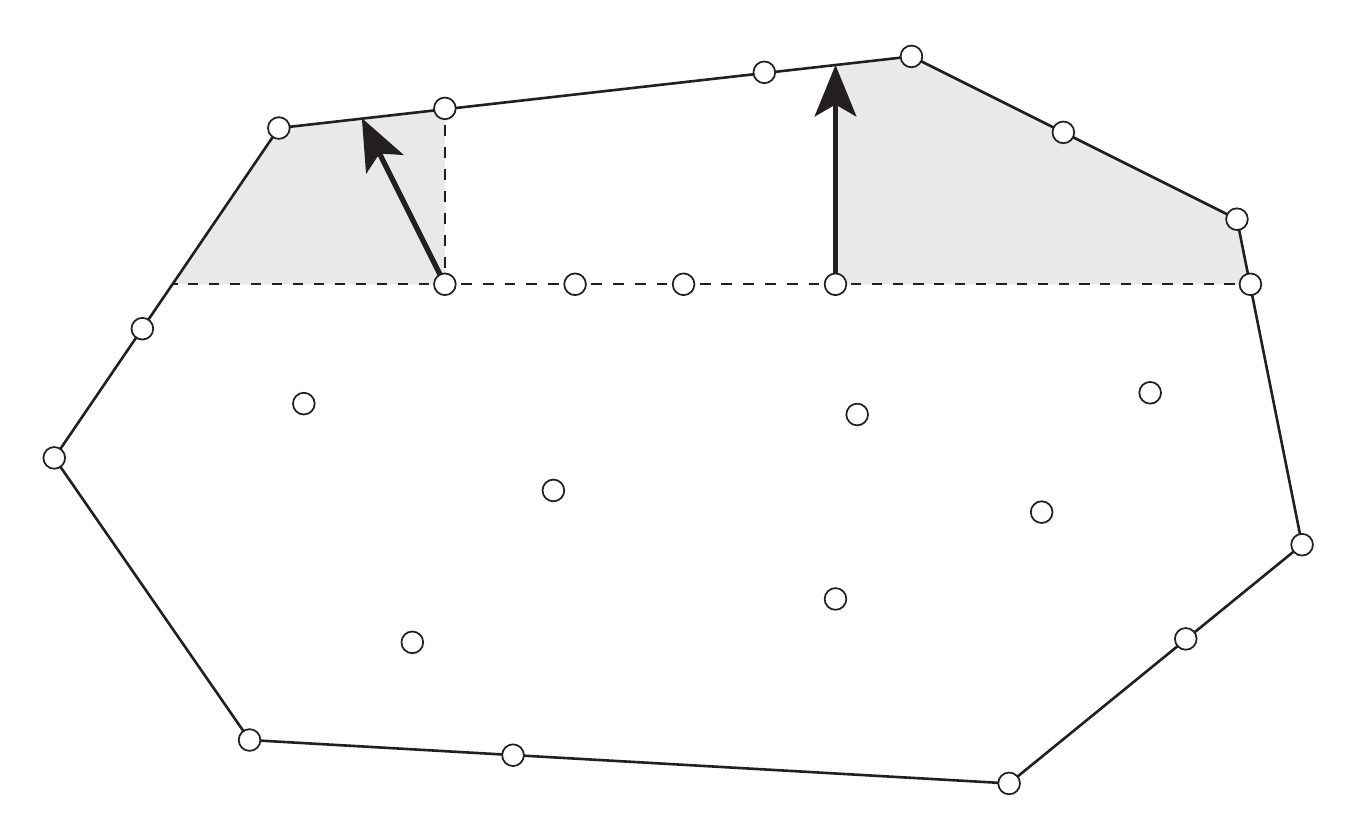}
\caption{Interior depletion with agreement on one axis}
\label{fig6}
\end{figure}

Also the protocol for the segment breaking phase needs some modifications: indeed, referring to Figure~\ref{fig10:b}, in which $r$ and $s$ move in opposite directions, it is no longer true that $r'$ and $s'$ will eventually be external robots when $r$ and $s$ stop (recall that robots are \NRigid now). Unfortunately, $r'$ and $s'$ may become temporarily external while $r$ and $s$ move, and thus they may (permanently) set themselves to \emph{External}, which could lead to inconsistent behaviors. Once again, we can fix the protocol if the robots agree on the $y$ axis: now, in the segment breaking phase, an endpoint robot moves according to Algorithm~\ref{alg2} only if it has the maximum $y$ coordinate. This makes only one endpoint move in most cases, which eliminates the aforementioned issue. Moreover, if both endpoints have the same $y$ coordinate, they will both move North, thus forming a configuration like the one in Figure~\ref{fig10:a}, which causes no trouble.

\begin{theorem}
The \Obst problem is solvable by \NRigid\ \ASynch robots carrying 3-colored lights, provided that they agree on one axis.
\end{theorem}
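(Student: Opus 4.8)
The plan is to reuse as much of the existing machinery as possible and to localize all the novelty to the two phases that were shown to fail in the \NRigid\ \ASynch model, namely the segment breaking phase and the interior depletion phase. Since the vertex adjustments phase of Algorithm~\ref{alg2} has already been shown (in the discussion preceding the theorem, and via Lemma~\ref{l:extfinal}) to work correctly under all models, the only thing I need to establish is that, with agreement on the $y$ axis, the modified interior depletion and segment breaking phases (i) avoid collisions, (ii) never let an internal robot leave the convex hull $\mathcal H$, and (iii) terminate with all robots external and no light set to \emph{Adjusting}. Once these three facts are in place, I can invoke Lemma~\ref{l:extfinal} verbatim to conclude that the robots reach a strictly convex configuration and correctly terminate, thereby solving \Obst.

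First I would handle collision avoidance in the modified interior depletion phase. The crucial structural fact now is that, among the internal robots, only those whose $y$ coordinate is maximal are eligible to move, and among a maximal-$y$ row only the two endpoints move (the left endpoint into the upper-left quadrant, the right into the upper-right quadrant, as in Figure~\ref{fig6}). Because all robots agree on ``North'', every moving internal robot strictly increases its own $y$ coordinate toward the boundary of $\mathcal H$, and the two simultaneously-moving endpoints of a row head into disjoint open quadrants. I would argue that two moving internal robots can never collide: their destination quadrants are separated by the vertical line through the row, and a robot that is waiting (not of maximal $y$) is strictly below any moving robot's target, so the monotone increase in $y$ of the movers keeps them away from the stationary robots and from each other. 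This replaces the dot-product argument of Lemma~\ref{l:coll3} (which genuinely breaks in \NRigid\ \ASynch) with a simple coordinate-monotonicity argument that is immune to the adversary stopping a move partway, precisely because partial moves still only increase $y$ and stay within the designated quadrant.

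Next I would prove termination of the interior depletion phase. The agreement on the $y$ axis gives a clean monotone potential: I would track the maximum $y$ coordinate among the internal robots, or equivalently count the internal robots and show this count strictly decreases. Each time a maximal-$y$ endpoint is activated it moves at least $\delta$ straight North (or slightly sideways only to dodge an external robot in its path), so after finitely many activations it reaches $\partial\mathcal H$ and becomes external; since $\mathcal H$ is fixed while internal robots remain (the analogue of Observation~\ref{obs:depletion} still holds, as external robots do not move while any \emph{Off} robot is visible), the bounded height of $\mathcal H$ forces each mover to reach the boundary in finitely many $\delta$-steps. By the fairness of the \ASynch scheduler every eligible robot is eventually activated, so the internal set strictly shrinks until it is empty. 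I would also dispose of the degenerate sub-cases (one or two internal robots, or all internal robots collinear) exactly as in the proof of Lemma~\ref{l:intfinal}, since those arguments do not depend on synchrony.

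Finally I would address the modified segment breaking phase, where the fix is that an endpoint robot moves only if it has the maximum $y$ coordinate. The main obstacle—and the reason the original proof fails here—is the scenario of Figure~\ref{fig10:b}, where endpoints $r$ and $s$ move in opposite directions and their neighbors $r'$, $s'$ flicker between internal and external, permanently committing to \emph{External} at an inconsistent moment. I would show that the $y$-axis tie-breaking eliminates the opposite-direction case entirely in the generic situation: at most one endpoint has strictly maximum $y$, so only that endpoint moves, yielding the harmless Figure~\ref{fig10:a}-type evolution; and if both endpoints share the maximum $y$ coordinate, they both move North into the same half-plane, again producing a same-direction configuration in which all other robots become consecutive external robots. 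This is the step I expect to require the most care, because I must rule out the adversary using asynchrony to interleave the two endpoints' moves in a way that still triggers the inconsistency; the argument is that a robot commits to \emph{External} only after actually seeing an \emph{External} or two \emph{Adjusting} robots, and with single-endpoint motion the neighbor's view stabilizes monotonically in $y$, so no spurious premature commitment can occur. Once the collinear start is shown to transition into a non-collinear configuration handled by the (already-verified) interior depletion and vertex adjustments phases, the theorem follows.
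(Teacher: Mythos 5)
Your proposal is correct and follows essentially the same route as the paper's proof: the same phase decomposition with the vertex adjustments phase inherited from Theorem~\ref{t:correctness3} and Lemma~\ref{l:extfinal}, collision avoidance for the two simultaneously-moving row endpoints via the non-decreasing gap in $x$ (your disjoint-quadrant formulation is equivalent), termination of interior depletion via the eligibility rule preventing interference between sub-phases plus fairness, and the segment breaking fix via the maximum-$y$ tie-break, which eliminates the opposite-direction scenario of Figure~\ref{fig10:b} and renders non-rigidity harmless as in Figure~\ref{fig10:a}. The only detail the paper adds that you omit is the explicit handling of the $n=3$ collinear case (the middle robot executing line~16), a minor point.
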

\begin{proof}
We show that the above algorithm is correct. In the interior depletion phase, there can be no collisions, and each internal robot eventually reaches the convex hull. Indeed, suppose that initially there is a unique internal robot $r$ with largest $y$ coordinate. As soon as enough external robots have set themselves to \emph{External}, $r$ starts moving North, and no other robot moves. Eventually $r$ becomes external without colliding with any robot (note that, even if $r$ does not initially see the boundary of the convex hull, it will eventually see it after finitely many moves).

If several internal robots have the largest $y$ coordinate, as in Figure~\ref{fig6}, the argument is similar. At most two robots can move at the same time, and they cannot collide because the difference of their $x$ coordinates cannot decrease. After enough cycles, either they have reached the convex hull, or one of them has been ``left behind'' and is no longer eligible to move. Either way, at least one internal robot eventually becomes external.

Once an internal robot has become external, the same argument repeats for all other internal robots. Note that these ``sub-phases'' do not interfere with each other, because a new robot becomes eligible to move only after the previous eligible robots have stopped on the convex hull.

The moment the last internal robot becomes external, no robot is moving, and therefore the whole swarm correctly transitions to the vertex adjustments phase, which works exactly as described in Theorem~\ref{t:correctness3} and Lemma~\ref{l:extfinal}.

If the robots are initially collinear, they correctly transition to a non-collinear configuration, as in Theorem~\ref{t:correctness3}. Indeed, note that the two endpoint robots cannot move in opposite directions (as in Figure~\ref{fig10:b}), and hence it does not matter if they are \Rigid or \NRigid, since in this case it is not harmful if they move by smaller amounts than those indicated by Algorithm~\ref{alg2} (cf.~Figure~\ref{fig10:a}). The same clearly holds if $n=3$ and the middle robot executes line~16.
\end{proof}

\section{Related Problems and Alternative Models}\label{s:extra}

Here we discuss some applications of the previous \Obst algorithms to other problems, and we also discuss different robot models.

\subsection{Forming a Convex Configuration}
As already observed, Algorithm~\ref{alg1} also solves the \Conv problem, where the robots have to terminate in a strictly convex position. Moreover, no robot ever crosses the perimeter of the initial convex hull unless, of course, all the robots are initially collinear. This works for \Rigid\ \SSynch robots carrying 2-colored lights.

For \NRigid\ \SSynch robots carrying 3-colored lights, Algorithm~\ref{alg2} also solves the \Conv problem, but it has an additional property: during the interior depletion phase, the convex hull of the robots remains unaltered (unless all robots are collinear), and in the vertex adjustments phase it shrinks a little, due to the small movements of the vertices. We can actually make these movements as small as we want, by changing line~29 of Algorithm~\ref{alg2} into
$$\mbox{Move to }(a.\mbox{\emph{position}}+b.\mbox{\emph{position}})\cdot\frac\varepsilon{\norm{a.\mbox{\emph{position}}+b.\mbox{\emph{position}}}},$$
where $\varepsilon$ is an arbitrarily-chosen positive constant. Similarly, in lines~12 and~16 we can make the robot move orthogonally to $\mathcal H$ by $\varepsilon$ or less. As a result, we can guarantee that the robots will terminate in a (strictly convex) configuration whose vertices are contained in an $\varepsilon$-wide band around the initial convex hull's perimeter.

Similar observations hold for the algorithms and models discussed in Section~\ref{s:asynch}.

\subsection{Forming a Circle}\label{s:circle}

As a followup to Algorithms~\ref{alg1} and~\ref{alg2}, the robots can even solve the \Circ problem, in which they have to become concircular and then terminate. Moreover, if they are \Rigid\ \SSynch (respectively, \NRigid\ \SSynch), they can do so with the same 2-colored (respectively, 3-colored) lights that they used to solve \Obst.

First, it is necessary to slightly modify the termination condition of the algorithms: in Algorithm~\ref{alg1}, when a robot sees only robots set to \emph{Vertex}, it does not terminate, but it starts executing a \emph{circle formation} phase. Similarly, in Algorithm~\ref{alg2}, we remove lines~23--25, thus preventing vertex robots from reverting their color to \emph{External} and terminating after they have adjusted their position. Instead, they wait until they only see robots set to \emph{Adjusting}. Accordingly, in lines~27 and~36 we remove the conditions that prevent robots from moving if they see other robots set to \emph{Adjusting}. Since we are assuming that robots are \SSynch, it is straightforward to see that the correctness proof of Section~\ref{s:ssynch2} goes through even after these modifications to the protocol, and that eventually all robots are set to \emph{Adjusting}. At this point, the circle formation phase starts.

Since all robots are set to \emph{Adjusting}, each robot knows that all of them occupy non-degenerate vertices of the convex hull, and that there are no other robots in the swarm. Hence the phase starts in a strictly convex configuration, and all the robots see each other. In particular, the \emph{Smallest Enclosing Circle} (SEC) computed by each active robot is the same. In the circle formation phase, the robots move toward the perimeter of the SEC in a precise order, as illustrated in Figure~\ref{fig7}. If a robot lies in $p$, which is not on the SEC, and one of its neighbors lies in $s$, which is on the SEC, then the robot in $p$ moves along the extension of the edge of the convex hull that is incident to $p$ and not to $s$. If both neighbors of the robot lie on the SEC (as with the robot in $q$ in Figure~\ref{fig7}), it chooses one of its two incident edges, and moves along the extension of that edge.

\begin{figure}[h]
\centering
\includegraphics[width=.6\linewidth]{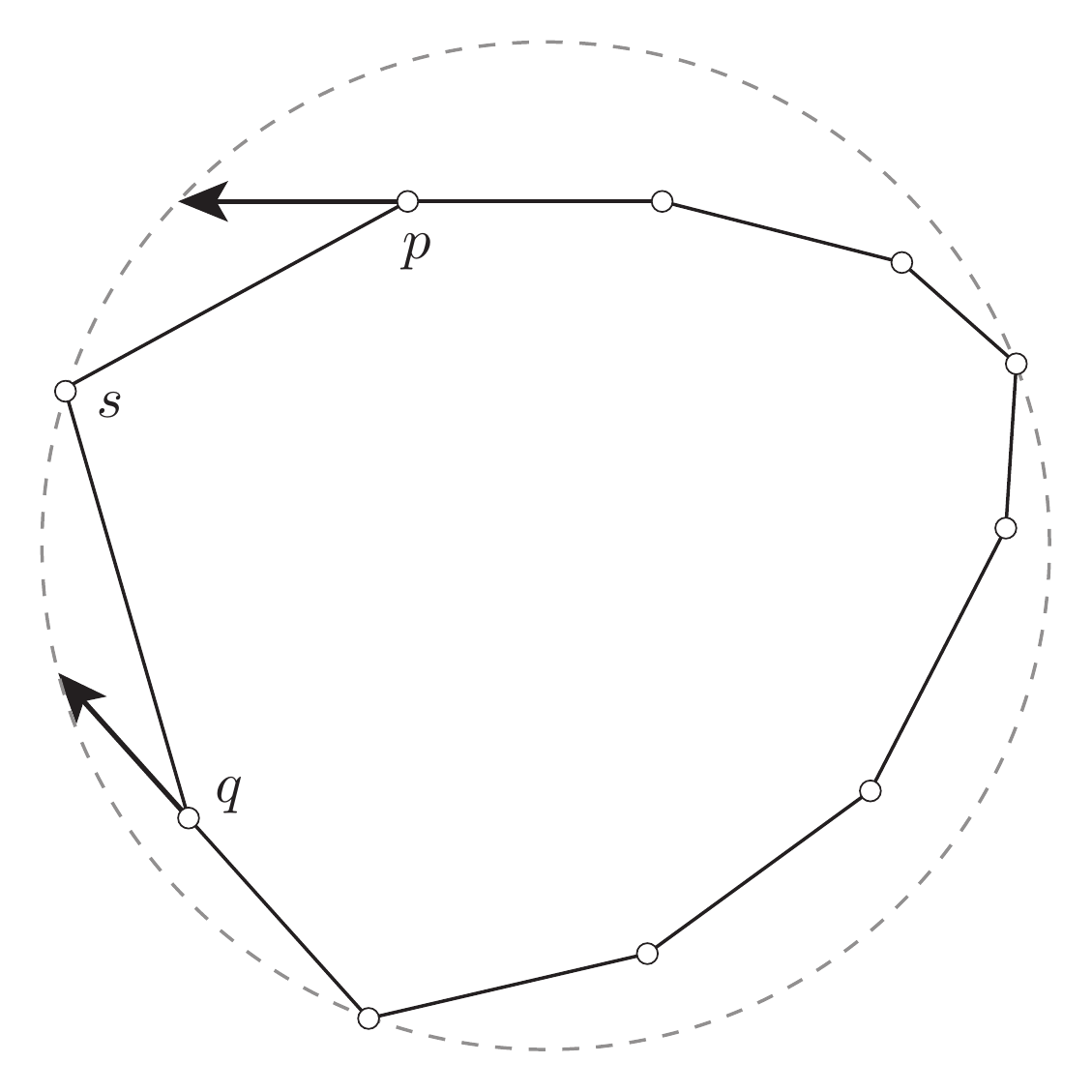}
\caption{Forming a circle}
\label{fig7}
\end{figure}

It is clear that the combined motion of the robots does not cause collisions or obstructions, and that the SEC is always preserved. Indeed, any robot that is already on the SEC remains still, and those that are inside the SEC are allowed to move only within the SEC itself. Moreover, the direction in which each robot moves is preserved until one of them reaches the SEC. Hence, even if robots are \NRigid, after finitely many turns at least one of them reaches the SEC, and therefore eventually they all reach the SEC. At this point, they correctly terminate.

The same circle formation phase can also be used in conjunction with the algorithms discussed in Section~\ref{s:asynch} for \ASynch robots. The only difference is that, instead of modifying the \ASynch algorithms like we did with the \SSynch ones, we simply add an extra state, called \emph{Done}, to synchronize robots and make them transition correctly from the vertex adjustments phase into the circle formation phase. That is, instead of terminating, a robot sets itself to \emph{Done}, and then waits until all other robots are set to \emph{Done}, as well. Only then does it proceed to executing the circle formation phase described above. Of course, before the circle formation phase starts, if a robot sees another robot set to \emph{Done}, it treats its like an \emph{External} robot. This works with both \Rigid and \NRigid\ \ASynch robots carrying 4-colored lights.

\subsection{Converging to a Point Without Colliding}\label{s:nearg}
A simple modification of Algorithm~\ref{alg1} solves the \Near problem, which requires all the robots to converge to a point without colliding: it is sufficient to remove lines~8, 9, and~23, that is, all the operations involving colors, and the termination condition. Indeed, if there is only one internal robot, either it will become external, or the other robots will converge to its location. On the other hand, if all robots become external, the convex hull will keep shrinking until its vertices converge to a point. This works for \Rigid\ \SSynch robots, even without the use of colored lights.

However, if the robots carry 2-colored lights, they can also terminate when they get close enough to one another. This is done by simply modifying the termination condition of line~9:
$$\mbox{\textbf{if} } \forall r,s\in\mathcal V,\, r.\mbox{\emph{light}}=\mbox{\emph{Vertex}}\mbox{ \textbf{and} }\norm{r.\mbox{\emph{position}}-s.\mbox{\emph{position}}}<\varepsilon \mbox{ \textbf{then} Terminate}$$
where $\varepsilon$ is any given positive constant.

\subsection{\NRigid\ \SSynch Robots with Knowledge of $\delta$}\label{s:delta}
Suppose that the robots are \NRigid\ \SSynch, and as such they can be stopped by the scheduler at each turn before they reach their destination point, but not before they have moved by at least $\delta$. Recall that in this case they have an algorithm for \Obst that uses 3-colored lights, described in Section~\ref{s:ssynch2}. However, if the robots know the exact value of $\delta$ and they can use it in their computations, they can solve \Obst even with 2-colored lights, by executing a slightly modified version of Algorithm~\ref{alg1}.

If all the robots are initially collinear, Algorithm~\ref{alg1} makes them reach a non-collinear configuration, even if they are \NRigid. Subsequently, the invariants discussed in Section~\ref{s:invariants} keep holding, and in particular the convex hull of the robots never grows, and vertex robots never become non-vertex robots. We have to show that a version of Lemma~\ref{l:main} can be obtained for this model, as well. Referring to Figure~\ref{fig1}, we can make the robot in $p$ move toward $(a+b)/2$ by a smaller amount, never passing internal robots, and never colliding with them, unless they are closer than $\delta$. If an internal robot $r$ is closer than $\delta$ and it stands between $p$ and $(a+b)/2$, the robot in $p$ moves close enough to $r$, on the line parallel to $ab$, and it sets its light to the correct value (note that it knows before moving whether it will end up being a vertex robot or not). This ``lateral move'' cannot be stopped by the scheduler, and it is guaranteed to create a new external robot, and eventually increase by one the number of vertex robots.

On the other hand, if only ``non-lateral moves'' are made, the analysis in Section~\ref{s:convergence} can be generalized, because Equation~\ref{e:default} takes the form
$$r_i(t+1)=\frac \mu 2\cdot r_{i-1}(t)+\mu\cdot r_i(t)+\frac \mu 2 \cdot r_{i+1}(t),$$
where $\mu \in [\mu_0,1/2]$, and $\mu_0$ is a constant. Indeed, if the convex hull of the robots never grows, and its initial diameter is $d$, then each moving robot is guaranteed to move by at least a fraction of $\mu_0=\delta/d$ of its computed movement vector. Therefore, all the lemmas in Section~\ref{s:convergence} can be reproved by merely adjusting some coefficients in the formulas.

It remains to prove that, if only one internal robot is left, it eventually reaches the boundary of the convex hull without colliding with other robots. But since $\delta$ is known, we can make this robot stay still until it either becomes external (due to other robots' movements), or the diameter of the convex hull becomes smaller than $\delta$. As soon as it is guaranteed to make a reliable move, it can reach the midpoint of an edge of the convex hull, and therefore become external.

When all robots are external, they eventually reach a strictly convex configuration and they correctly terminate, as detailed in the proof of Theorem~\ref{t:correctness}.

\subsection{Trading Lights with the Knowledge of $n$}\label{s:knowledge}
Suppose that the robots do not carry visible lights and have no internal memory, but they share the knowledge of the total number of robots in the swarm, $n$. If the robots are \Rigid\ \SSynch, it is possible to slightly modify Algorithm~\ref{alg1} to solve \Obst in this model, as well.

Note that the information given by other robots' visible lights is used only when a robot has to terminate (line~9), or when it is the only internal robot and it has to move to the perimeter of the convex hull (line~23). However, both these situations can be recognized locally by counting the vertices of the convex hull of the visible robots: if it has $n$ non-degenerate vertices, \Obst has been solved, and the executing robot can terminate. If the convex hull has $n-1$ vertices and the executing robot is internal, it moves to the boundary, as in line~23 of Algorithm~\ref{alg1}.

The same techniques can be used to modify the algorithm of Section~\ref{s:delta}, so that \NRigid\ \SSynch robots with knowledge of $\delta$ and knowledge of $n$ can solve \Obst without the use of colored lights.

We are also able to optimize Algorithm~\ref{alg2} for robots with knowledge of $n$: namely, we can achieve termination detection even if the robots do not use the \emph{Adjusting} color, as follows. When all robots are external and a vertex robot makes a default move, it does not change its color, but remains \emph{External}. Then, when a vertex robot sees $n$ robots, it terminates. Note that making a default move allows a robot to see all other robots at its next activation, and therefore each external robot makes at most one default move before terminating. Moreover, when all robots are collinear, we apply this simple protocol: if a robot is an endpoint of the convex hull, it moves orthogonally to it (without changing color); otherwise it stays still. This way, as soon as an endpoint is activated, the configuration becomes non-collinear. The only exception to this rule is the case $n=3$, in which the central robot has to move orthogonally to the segment, while the other two robots stay still.

This technique allows \NRigid\ \SSynch robots with knowledge of $n$ to solve \Obst with 2 colors as opposed to 3.


\subsection{Fault Tolerance}\label{s:fault}
Observe that Lemma~\ref{l:vessels} requires only $n-1$ valves to be opened infinitely often, as opposed to $n$. This implies that, if \Rigid\ \SSynch robots execute the modification of Algorithm~\ref{alg1} described in Section~\ref{s:nearg}, they converge to a point even if one robot is unable to move. Therefore, in the presence of one faulty robot, \Near is still solvable, even without the use of colored lights. (On the other hand, if two robots are faulty, \Near is clearly unsolvable, because the two faulty robots cannot approach each other.) Additionally, if $n$ is known, \Obst and \Conv are solved in the presence of a faulty robot, provided that it is located on the boundary of the convex hull.

\subsection{Sequential Scheduler}
Suppose that the scheduler is \emph{sequential}, i.e., it is \SSynch and it activates exactly one robot at each turn. In this very special model there is a simple algorithm to solve \Obst with no termination detection, even with no colored lights and no knowledge of $n$, and even if the robots are \NRigid and two of them are faulty. (If three robots are faulty, \Obst is clearly unsolvable.) When a robot is activated, it just moves by a small amount, without crossing or landing on any line that passes through two robots that it can currently see (including itself), as illustrated in Figure~\ref{fig8}. Clearly, when a robot moves as described, it becomes visible to all other robots. Hence, when all robots (except possibly two of them) have moved at least once, they can all see each other.

\begin{figure}[h]
\centering
\includegraphics[width=.8\linewidth]{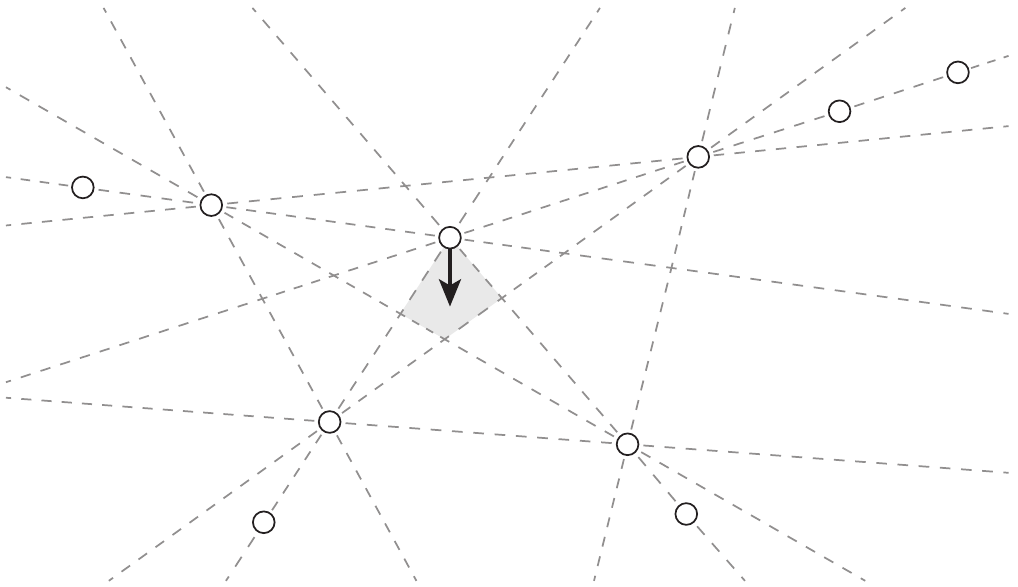}
\caption{Solving \Obst under a sequential scheduler}
\label{fig8}
\end{figure}

This protocol solves \Obst with no termination detection, in the sense that, after finitely many turns, the robots will keep seeing each other. However they will never stop moving because they will never know if their task is terminated or not. Indeed, termination detection is not achievable under this set of assumptions and, to be able to obtain it, some other assumptions are needed; for example,   2-colored lights  or the knowledge of $n$. With 2-colored lights, a robot can change its own color the first time it moves, and terminate at the next activation. With knowledge of $n$, a robot simply terminates when it sees $n$ robots.

\section{Concluding Remarks}\label{s:conclusions}
In this paper we initiated the investigation of the computational capabilities of a swarm of anonymous mobile robots with obstructed visibility: in this model, which has never been considered in the literature, two robots cannot see each other if a third robot lies between them. We focused on the basic problem of \Obst, in which the robots, starting from an arbitrary configuration, have to reach a configuration in which they all see each other, and then terminate the execution. This task is clearly impossible if the robots are completely oblivious, unable to communicate, and do not have any additional information. Indeed, in this scenario a robot can never distinguish between an initial configuration in which it cannot see some other robots, and a configuration in which all robots are visible and it is safe to terminate (recall that the termination operation cannot be undone).\footnote{It is worth noting that, if robots are only required to remain still forever after they have all become mutually visible (as opposed to terminating their execution), then this argument is no longer valid. With such a notion of \emph{weak termination}, there could exist an algorithm for \Obst that uses no colored lights and no extra information.} Therefore we employed the extended model of luminous robots, in which each robot is carrying a visible light that it can set to different colors. The goal is then to minimize the number of colors required by the robots to solve the \Obst problem under different settings and restrictions. Namely, we considered \SSynch and \ASynch robots, and \Rigid and \NRigid movements. We also discussed how to reduce the number of used colors if some information is given to the robots, such as the size of the swarm, $n$, or a minimum distance $\delta$ that a robot is guaranteed to cover in each movement. Our main results are summarized in Theorems~\ref{t:sum1} and~\ref{t:sum2}. We then touched on more complex problems, and proposed solutions that use our \Obst protocols as a preprocessing step. Notably, we gave the first algorithms for the \NearG problem (with fault tolerance) and the \Circ problem that work under the obstructed visibility model.

We proposed two main algorithms, and several modifications and adaptations to various models. Algorithm~\ref{alg1} (\emph{Shrink}) uses 2 colors and makes the convex hull of the robots shrink, ideally converging to a point. Algorithm~\ref{alg2} (\emph{Contain}) uses 3 colors, and keeps the initial convex hull basically unaltered. It is therefore suited for practical situations in which the robots have to surround a large-enough area, as well as solving \Obst. Also, both algorithms keep the robots within the initial convex hull (unless they are initially collinear), which is useful in practice, for instance in the presence of hazardous areas around the swarm.

Some interesting research problems remain unsolved. We would like to reduce the number of colors used by our algorithms in the various models, or to prove our algorithms optimal. Our solutions to \Obst in some models use only 2 colors (or no lights at all if $n$ is known), which is clearly optimal. For other models, such as \NRigid\ \SSynch and \Rigid\ \ASynch, we used 3 colors, and our question is whether this can be improved. We conjecture that Algorithm~\ref{alg1}, which uses only 2 colors and has been designed and proven correct for \Rigid\ \SSynch robots, can be applied with no changes also to \NRigid\ \SSynch robots (we could prove that 2 colors are sufficient in this model under the assumption that the robots know~$\delta$). In the \NRigid\ \ASynch setting we were only able to solve \Obst (with 3 colors) assuming that the robots agree on the direction of one coordinate axis. We ask if this assumption can be dropped, perhaps if more colors are used. Another question is whether \Obst can be solved deterministically without using colored lights or extra information, and without termination detection. That is, we allow the robots to move forever, but we require them to remain mutually visible from a certain time on. We proposed a simple solution that works under the sequential scheduler, and we ask if this can be generalized to \SSynch or even \ASynch schedulers.

We emphasize that obstructed visibility represents a broad line of research in the field of computation by mobile robots, and this paper explored just a few directions. Several classical problems are worth studying under this model, such as the general {\sf Pattern Formation} problem, {\sf Flocking}, {\sf Scattering}, with or without bounded visibility, etc.

\paragraph{\bf Acknowledgements} This work has been supported in part by the National Science and Engineering Research Council of Canada, under Discovery Grants, and by Professor Flocchini's University Research Chair.

\paragraph{\bf   References}
 \bibliographystyle{plain}

\end{document}